\documentclass{article}
\usepackage{ijcai26}

\usepackage{times}
\usepackage{url}
\usepackage[hidelinks]{hyperref}
\usepackage[utf8]{inputenc}
\usepackage[small]{caption}
\usepackage{graphicx}
\usepackage{amsmath}
\usepackage{amsthm}
\usepackage{booktabs}
\usepackage{algorithm}
\usepackage{algorithmic}

\usepackage{amsfonts,amssymb,mathtools}
\usepackage{dsfont}
\numberwithin{equation}{section}

\usepackage[thinc]{esdiff}
\usepackage{xspace}
\usepackage{packages/macros-common}
\usepackage{packages/macros-project-specific}
\usepackage{packages/mytheorems}

\title{Optimally Curating an Event}

\author{
Mohammad Hajiaghayi$^{1,2}$
\and
S\'{e}bastien Lahaie$^2$\and
Mohammad Mahdavi$^1$\And
Suho Shin$^1$\\
\affiliations
$^1$University of Maryland\\
$^2$Google Research\\
\emails
hajiagha@umd.edu, slahaie@google.com, mahdavi@umd.edu, suhoshin@umd.edu
}

\begin{document}

\maketitle

\begin{abstract}
    We study the optimal design of a self-financing event, a problem that requires balancing the recruitment of costly, positive-value participants with revenue-generating agents who may impose negative values on the event.
We introduce a novel two-sided mechanism design framework with plus agents equipped with private costs and positive impact, and minus agents with private values and negative impact upon inclusion in the event, to maximize the overall quality of the event under a budget-balanced (BB) constraint so that the designer does not run a deficit.

We conduct a comprehensive study on the theory of optimal event curation on various utility functions.
For additive utility, we fully characterize the optimal incentive-compatible Bayesian mechanism under both the ex-ante and ex-post BB constraints.
For submodular utility, we propose an ex-ante BB mechanism that achieves a constant-factor approximation to the optimal Bayesian mechanism.
\end{abstract}

\section{Introduction}
Mechanism design, especially auction design, has been one of the most foundational frameworks in recent decades, with numerous applications to online advertisements, spectrum auctions, and crowdsourcing platforms.
Its importance was highlighted by the 2007 Nobel Prize in Economics awarded to Myerson, Hurwicz, and Maskin for their foundational work on mechanism design.
In particular, two-sided mechanisms, also known as double auctions, serve as a primary tool to transfer goods between agents to maximize social efficiency.
Most of the literature, however, has largely centered on two canonical objectives: maximizing revenue for the broker who governs the trade~\cite{myerson1983efficient,kuang2023profit}, or maximizing social welfare~\cite{deng2022approximately}, typically defined as a function of agents' private valuations and payments.
On the other hand, many modern platforms may realize arbitrary utilities depending on who is involved in the trades, independent of the agents' private valuations or transfers between them.
Nevertheless, there have been only a few works studying such general objective functions, mostly in procurement auction frameworks such as budget-feasible mechanism design~\cite{singer2010budget,anari2014mechanism,balkanski2016bayesian} or all-pay auctions for crowdsourcing contests~\cite{chawla2019optimal,dipalantino2009crowdsourcing}.

We discuss two representative examples to motivate our problem.
Further applications are discussed in Appendix~\ref{apd:fur-appl}.

\paragraph{Example 1: Conference curation}
As an illustrative example, consider the challenge faced by the organizer of a prestigious event, such as an academic conference.
The success of the event depends on how many attendees it attracts, or alternatively, how much welfare it provides to attendees.
On one hand, the organizer seeks to attract renowned keynote speakers, popular artists, or flagship exhibitors.
These participants generate significant positive value and draw more attendees, but they often require invitation fees, travel stipends, or other costs to secure their involvement.
On the other hand, the organizer must fund these costs by securing corporate sponsorships or inviting advertisers.
These participants provide essential revenue but may detract from the overall experience if their presence becomes overwhelming or overly commercial.
The organizer's ultimate goal is to curate the optimal mix of these agents to maximize the attendees' utilities, ensuring that the organizer does not run a deficit by implementing the event.

\paragraph{Example 2: Double auction for LLMs}
As another example, consider an LLM provider, such as Google with its Gemini model, which aims to deliver the most helpful and accurate responses to users. On one hand, to enhance the quality of its service, the provider can procure access to high-quality, up-to-date information from various \emph{plus agents} such as news publishers, academic journals, or specialized data providers. Integrating these documents into a retrieval-augmented generation (RAG) framework allows the LLM to reduce hallucinations and provide more relevant answers, creating significant positive utility for the user. However, these content providers are strategic agents with private costs, and the LLM provider must design a mechanism to pay them for the right to use their data. On the other hand, the provider can fund these content acquisition costs by selling space within the LLM's output to \emph{minus agents}, \ie advertisers.\footnote{There has been increasing attention to integrating ads into LLM outputs, \eg see~\cite{duetting2024mechanism,dubey2024auctions,hajiaghayi2024ad,balseiro2025position}.} These advertisers have private valuations for reaching the user and provide the necessary revenue, but their inclusion may detract from the user experience and the perceived objectivity of the answer. The LLM provider's challenge is to design a joint mechanism, balancing a procurement system for documents with an ad auction, to curate the optimal blend of retrieved knowledge and advertisements, thereby maximizing user satisfaction in a budget-balanced, self-financing ecosystem.

\paragraph{Our model}
To formally capture this complex trade-off, we introduce a novel \emph{two-sided mechanism design} problem for the optimal curation of agents with exogenous utilities, providing the first general framework for two-sided mechanisms where the designer's objective is neither revenue nor social welfare, but originates from an exogenous (possibly combinatorial) utility function.

Our model consists of two types of agents: \emph{plus agents}, who contribute positively to the event's value but incur a private cost to participate, and \emph{minus agents}, who have a private valuation for joining and are willing to pay, but may impose a negative externality on the event.
The mechanism designer aims to maximize an exogenous utility function that maps an allocation to a nonnegative real value, which is monotone increasing in plus agents and monotone decreasing in minus agents, while satisfying (i) dominant-strategy incentive compatibility (DSIC) or Bayesian incentive compatibility (BIC), \ie agents report their true values; (ii) individual rationality (IR), \ie agents are not worse off by participating in the trade; and (iii) budget balance, \ie the mechanism does not run a deficit.
We consider an additive utility function that can be represented as a simple sum of individual contributions, as well as a more complex combinatorial function capturing synergistic or substitution effects between agents.
We analyze this curation problem primarily in the Bayesian setting, where the designer has statistical knowledge of agent types.

To highlight our main results:
\begin{enumerate}
\item For the case of additive utilities in the Bayesian setting, we provide a complete characterization of the optimal DSIC and IR mechanism under ex-ante BB constraint, and the optimal BIC and IR mechanism under ex-post BB constraint, both of which are efficiently computable.
\item For separable submodular utility, we provide a posted pricing mechanism under regular distributions with polynomial running time that is a constant-factor approximation to the optimal Bayesian mechanism.
\end{enumerate}

In what follows, we overview our main results and techniques.
Then, we introduce the formal problem setup, results, and their implications.
Additional results, including irregular-distribution analysis (Appendix~\ref{sec:add-expost-irr}), an additive-utility extension under matroid constraints (Appendix~\ref{sec:additive-matroid}), computational details for the submodular setting (Appendix~\ref{apd:compute}), a simpler submodular mechanism (Appendix~\ref{apd:simple}), comparative statics (Appendix~\ref{apd:comp-statics}), and prior-free analysis for additive utility under a large-market assumption (Appendix~\ref{apd:prior-free}), are provided in the appendix.

\subsection{Our Results and Techniques}
We briefly introduce our main results and the techniques used to obtain them.
We begin with the additive utility setting in Section~\ref{sec:additive}, where each plus agent imposes a fixed positive surplus $p_i \ge 0$ upon selection, and each minus agent induces a fixed negative impact $n_j \le 0$ upon inclusion.\footnote{We refer to Section~\ref{sec:model} for the detailed problem setup.}
In this case, we provide a complete characterization of the optimal mechanism that maximizes the expected utility while remaining ex-ante BB, DSIC, and IR.
We extend the classic Myersonian framework by characterizing the payment rule and deriving the two-sided monotonicity of the allocation rule for any DSIC mechanism.
This allows us to rewrite the budget-balancedness constraint as a function of the virtual costs and valuations of plus and minus agents, respectively.
Then, we write the Lagrangian of the designer's resulting optimization problem.
Finally, using the Karush-Kuhn-Tucker (KKT) conditions of the Lagrangian, we show that we can find the closed-form solution for the optimal Bayesian mechanism by determining the optimal Lagrange multiplier of the budget constraint, known as the shadow price, and deriving the resulting allocation function.
Formal proofs for the additive results appear in Appendix~\ref{apd:additive-proofs}; Appendix~\ref{sec:additive-matroid} discusses an extension under matroid constraints.
Then, by applying the interim transformation technique by~\cite{brustle2017approximating}, we show that our mechanism extends to the optimal BIC mechanism under ex-post BB constraint.
Further discussions on the case of irregular distributions and ex-post BB DSIC mechanisms are provided in Appendix~\ref{sec:add-expost-irr}.

Second, we move beyond the additive utility to the submodular utility case and consider a utility function that is separable into the plus and minus agent sides. Namely, \emph{including} a plus agent and \emph{excluding} a minus agent both increase the utility in a concave manner.
\footnote{In LLM auctions, one ad causes little distraction, but many ads collectively harm quality, motivating our supermodular negative-impact assumption. Since only a few ads are shown to preserve user experience, marginal degradation becomes significant.}
We first argue that the decision version of the offline problem setup (assuming truthful bids) is already NP-hard to approximate within some constant factor, via a reduction from the submodular maximization problem with knapsack constraints.
Accordingly, we provide a simple posted pricing mechanism under regular distributions that can be efficiently computed and achieves a constant-factor approximation to the optimal mechanism.
Overall, our mechanism first generates a set of candidate budgets that can be used for plus agents.
Then, for each candidate budget, we solve the plus and minus agent sides separately to exploit and procure the targeted budget via the multilinear extension of the original optimization problems.
Such multilinear extensions are guaranteed to approximate the optimal performance within a constant factor, thanks to the bounded correlation gap of submodular set functions shown by~\cite{agrawal2010correlation}.
Finally, we select the optimal budget that maximizes the expected utility among the candidates.
Equipped with the fact that the expected utility under the multilinear extension is concave in the budget, we show that the error from using discretized budget candidates can be made small, implying that the resulting maximal mechanism is a good approximation to the optimal one.
Proofs for the submodular results are given in Appendix~\ref{apd:beyond-proofs}; Appendix~\ref{apd:simple} gives a simpler mechanism with weaker guarantees, and Appendix~\ref{apd:compute} discusses computational details and related extensions.
Appendix~\ref{apd:prior-free} studies prior-free mechanisms under a large-market assumption.
\section{Related Work}\label{sec:related}
\paragraph{Budget-feasible mechanism design}
Our problem, in particular the plus agents' side market, has a close connection to the seminal budget-feasible mechanism design introduced by~\cite{singer2010budget}, where a buyer needs to procure a set of items from sellers given an exogenous budget constraint and a set function.
\cite{singer2010budget} show that the standard VCG framework does not work, and provide several constant-factor approximate prior-free mechanisms for submodular utility.
\cite{balkanski2016bayesian} consider a Bayesian mechanism, where the benchmark is the optimal Bayesian mechanism rather than the offline benchmark.
\cite{anari2014mechanism} consider a budget-feasible mechanism with applications to the crowdsourcing market, justifying a \emph{large market assumption} to derive stronger approximation guarantees.
Since then, there have been numerous works~\cite{balkanski2022deterministic,jalaly2021simple,han2023triple,klumper2022budget} improving the approximation factors and generalizing the objective function.

\paragraph{Two-sided market}
Plus and minus agents in our problem can be viewed as buyers and sellers in a two-sided market~\cite{mcafee1992dominant,myerson1983efficient}.
While seminal works in the literature~\cite{myerson1983efficient} focus on deriving optimal mechanisms for restricted cases or prove impossibility results, a recent work~\cite{deng2022approximately} first proved that the gains-from-trade, a measure of social efficiency, can be approximated within a constant factor of the first-best outcome.
Since then, there have been many works analyzing approximation factors or weakening the assumptions~\cite{blumrosen2014reallocation,blumrosen2021almost,deng2025approximately}.
Notably, similar to our problem where the mechanism designer aims to maximize a self-interested utility function, there has been a line of works studying the mechanism of the broker---who governs the trade---to maximize the broker's own profit~\cite{kuang2023profit,eilat2021bilateral,hajiaghayi2025gains}.
In stark contrast, in our setting, the designer has an exogenous utility function that can depend (possibly combinatorially) on the allocation function.

\section{Model}\label{sec:model}
There are $n$ \emph{plus}-agents (\eg keynote speakers) and $m$ \emph{minus}-agents (\eg advertisers), denoted by $\cN = \{1,\ldots, n\}$ and $\cM = \{n+1,\ldots, n+m\}$, respectively.
We consider a Bayesian setting where each plus agent $i$ has a private cost $c_i$ sampled from a distribution $G_i$ supported on $[c_{\min},c_{\max}]$ which represents how much it wants to be paid to participate in the event.
Analogously, each minus agent $j$ has a private valuation $v_j$ sampled from a distribution $F_j$ supported on $[v_{\min}, v_{\max}]$, denoting the willingness-to-pay to join the event.
Let $F$ and $G$ be the product distributions of $F_j$'s and $G_i$'s.
We will use boldface to denote vectors, \eg $\bc$ and $\bv$ for vectors of costs and values, respectively.

We consider a direct mechanism $(\bx, \bt)$ that maps the reported bids $(\tilde{\bc}, \tilde{\bv})$ from the agents to an allocation vector $\bx  = (x_1,\ldots, x_{n+m})\in \{0,1\}^{n+m}$ denoting whether each agent is selected to join the event and a payment vector $\bt = (t_1,\ldots, t_{n+m}) \in \R_{\ge 0}^{n+m}$ representing how much money each agent should pay or will be paid.\footnote{One can easily check that it is without loss of generality to consider direct mechanisms by the revelation principle.}
In particular, we will assume that $t_i \ge 0$ will be paid to plus agent $i$ and $t_j\ge 0 $ will be charged to minus agent $j$.

Given a direct mechanism $(\bx,\bt)$, we write $\bb = (b_1,\ldots, b_{n+m})$ to denote the reported bids of the agents and $b_{-i}$ to represent the vector of reports from all agents except agent $i$.
Accordingly, we write $t_i(\bb) = t_i(b_i, b_{-i})$ and $x_i(\bb) = x_i(b_i, b_{-i})$ to denote the payment and allocation specific to agent $i$ reporting $b_i$ given others' bids $b_{-i}$.

\paragraph{Utilities}
Given an allocation vector $\bx \in \{0,1\}^{n+m}$, the utility of the mechanism designer is defined by a set function $u:\{0,1\}^{n+m} \to \R_{\ge 0}$.
In particular, $u(\cdot)$ is \emph{monotone increasing} over plus agents in a sense that for two allocation vectors $\bx$ and $\by$ such that $\bx_i \le \by_i$ for $i \in \cN$ and $\bx_j = \by_j$ for $j \in \cM$, we have $u(\bx) \le u(\by)$, \ie allocating more plus agents does not decrease the utility.
Similarly, $u(\cdot)$ is \emph{monotone decreasing} over minus agents in a sense that for two allocation vectors $\bx$ and $\by$ such that $\bx_j \le \by_j$ for $j \in \cM$ and $\bx_i = \by_i$ for $i \in \cN$, we have $u(\bx) \ge u(\by)$, \ie allocating more minus agents does not increase the utility.
In words, plus agents increase the utility but it requires cost to hire them, whereas minus agents decrease the utility but one can extract revenue from them.
We write $\one_n$ to denote $(1,1,\ldots, 1) \in \R^n$, and similarly $\zero_n$ to denote $(0,0,\ldots, 0) \in \R^n$.
We assume that the utility function is always nonnegative, or equivalently $u(\zero_n, \one_m) \ge 0$.
We often abuse notation and write a set as the input, \eg $u(\cN)$ or $u(\cM)$, to denote the utility value given corresponding allocation vector.

A set function $f: 2^{S} \mapsto \R_{\ge 0}$ with finite set $S$ is submodular if $f(X \cup \{x\}) - f(X) \ge f(Y \cup \{x\}) - f(Y)$ for any $X\subseteq Y \subseteq S$ and $x \in S \setminus Y$, and supermodular if the inverse of the inequality holds.
We say $f$ is monotone if $f(X) \le f(Y)$ whenever $X \subseteq Y$.
Note that these notations naturally extend to our utility function $u(\cdot)$.

\paragraph{Desiderata}
Our objective is to design a mechanism that maximizes the utility while respecting several desirable properties.
First, given others' bids $b_{-i}$, plus agent $i$'s payoff from reporting $\tilde{c}_i$ when its true cost is $c_i$ is:
\begin{align*}
    U_i(\tilde{c}_i | c_i, b_{-i}) = t_i(\tilde{c}_i, b_{-i}) - c_i \cdot x_i(\tilde{c}_i, b_{-i})
\end{align*}
Similarly, for fixed $b_{-j}$, minus agent $j$'s utility from reporting $\tilde{v}_j$ when its true value is $v_j$ is:
\begin{align*}
    U_j(\tilde{v}_j | v_j, b_{-j}) = v_j \cdot x_j(\tilde{v}_j, b_{-j}) - t_j(\tilde{v}_j, b_{-j})
\end{align*}

We say a mechanism is \emph{dominant strategy incentive-compatible} (DSIC) if deviating from truthful reports does not increase the agents' payoffs.
Formally, for any plus agent $i$, any true cost $c_i$, any lie $c'_i$, and any reports $b_{-i}$:
\begin{align} \label{eq:dsic_positive}
    t_i(c_i, b_{-i}) - c_i x_i(c_i, b_{-i}) \ge t_i(c'_i, b_{-i}) - c_i x_i(c'_i, b_{-i}),
\end{align}
and for any minus agent $j$, any true value $v_j$, any lie $v'_j$, and any reports $b_{-j}$:
\begin{align} \label{eq:dsic_negative}
    v_j x_j(v_j, b_{-j}) - t_j(v_j, b_{-j}) \ge  v_j  x_j(v'_j, b_{-j}) - t_j(v'_j, b_{-j}).
\end{align}

Further, we aim to design an \emph{individually rational} mechanism such that no agent gets a negative payoff from participating in the mechanism.
Formally, for plus agent $i$ with cost $c_i$ and minus agent $j$ with value $v_j$, the following must hold, respectively:
\begin{align*}
    t_i(c_i, b_{-i}) - c_i \cdot x_i(c_i, b_{-i}) &\ge 0
    \\
    v_j \cdot x_j(v_j, b_{-j}) - t_j(v_j, b_{-j}) &\ge 0
\end{align*}

Our mechanism designer should not run a deficit by running the mechanism, \ie it should be \emph{ex-ante (weakly) budget-balanced} (BB):
\begin{align*}
    \int \sum_{j \in \cM}  t_j(\bv, \bc) \,dF(\bv)dG(\bc) \ge \int \sum_{i \in \cN} t_i(\bv, \bc)\, dF(\bv)dG(\bc).
\end{align*}
We often consider \emph{ex-post budget-balancedness} (ex-post BB) where the budget constraint holds for every realization of valuations:
\begin{align*}
    \sum_{j \in \cM}  t_j \ge \sum_{i \in \cN} t_i.
\end{align*}

We finally introduce virtual valuations and costs.
Let $f_j$ and $g_i$ be the density functions of $F_j$ and $G_i$, respectively.
\begin{definition}[Virtual valuation and cost]
    The virtual cost of plus agent $i$ is defined as:
    \begin{align*}
        \psi_i(c_i) = c_i + \frac{G_i(c_i)}{g_i(c_i)}.
    \end{align*}
    The virtual valuation of minus agent $j$ is defined as:
    \begin{align} \label{eq:virtual_valuation}
        \phi_j(v_j) = v_j - \frac{1 - F_j(v_j)}{f_j(v_j)}
    \end{align}
\end{definition}
We say the agents' distributions are \emph{regular} if $\psi_i(c_i)$ and $\phi_j(v_j)$ are monotone increasing for every $i$ and $j$.
Unless specified otherwise, we will restrict our focus to the regular distributions.

\section{Optimal Mechanism for Additive Utility}\label{sec:additive}
We first consider a simple setting where the utility function is additive over each agent.
Formally, the utility function is additive if it can be written as
\begin{align*}
    u(\bx) =  \sum_{i \in \cN} p_i x_i + \sum_{j \in \cM} x_j n_j,
\end{align*}
where $p_i \ge 0$ and $n_j \le 0$.\footnote{This is for the sake of exposition, while our results easily generalize without such restrictions.}
That is, each plus agent $i$ induces a fixed positive impact $p_i$ upon its selection whereas each minus agent $j$ incurs a negative effect of $n_j$ upon inclusion in the event.
Note that it is without loss of generality to consider the domain of the allocation vector to be $[0,1]^{n+m}$, due to the linearity of expectation given that the utility function is additive.

\subsection{Optimal Ex-ante BB DSIC Mechanism}
The following generalizes Myerson's lemma:
\begin{lemma}\label{lm:myerson-monotone}
    For any DSIC mechanism, the allocation rule $\bx$ is monotone decreasing in plus agent $i$'s cost, and monotone increasing in minus agent $j$'s value.
    Further, given others' bids $b_{-i}$ and $b_{-j}$, the payment rule for plus agent $i$ and minus agent $j$ can uniquely be characterized by:
    \begin{align}
         t_i(c_i, b_{-i}) &= c_i \cdot x_i(c_i, b_{-i}) + \int_{c_i}^{c_{\max}} x_i(s, b_{-i}) \,ds + U_{i,0}, \label{eq:payment_positive}
         \\
        t_j(v_j, b_{-j}) &= v_j \cdot x_j(v_j, b_{-j}) - \int_{v_{\min}}^{v_j} x_j(s, b_{-j}) \,ds - U_{j,0},\label{eq:payment_negative}
    \end{align}
    where $U_{i,0} = U_i(c_{\max}, b_{-i})$ and $U_{j,0} = U_j(v_{\min}, b_{-j})$.
\end{lemma}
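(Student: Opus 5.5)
The plan is to treat each agent separately and run the standard Myerson argument, with others' bids $b_{-i}$ (resp.\ $b_{-j}$) held fixed. The constraints \eqref{eq:dsic_positive} and \eqref{eq:dsic_negative} involve only agent $i$'s (resp.\ $j$'s) own report, so the problem reduces to a single-parameter one. I will write the plus agent's truthful payoff as $U_i(c) := t_i(c,b_{-i}) - c\,x_i(c,b_{-i})$ and the minus agent's as $U_j(v) := v\,x_j(v,b_{-j}) - t_j(v,b_{-j})$.

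\textbf{Monotonicity.} For plus agent $i$, take two costs $c<c'$. I write \eqref{eq:dsic_positive} twice: once with true cost $c$ deviating to $c'$, and once with true cost $c'$ deviating to $c$. Adding the two inequalities cancels the payments and gives
\[
(c'-c)\bigl(x_i(c,b_{-i})-x_i(c',b_{-i})\bigr)\ge 0 .
\]
Hence $x_i$ is nonincreasing in the cost. The same swap applied to \eqref{eq:dsic_negative} for $v<v'$ gives $(v'-v)\bigl(x_j(v',b_{-j})-x_j(v,b_{-j})\bigr)\ge 0$, so $x_j$ is nondecreasing in the value.

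\textbf{Payment identity.} DSIC says $U_i(c)=\max_{c'}\{t_i(c',b_{-i})-c\,x_i(c',b_{-i})\}$. This is a supremum of affine functions of $c$, so $U_i$ is convex and hence absolutely continuous on $[c_{\min},c_{\max}]$. By the envelope theorem its derivative equals $-x_i(c,b_{-i})$ almost everywhere. For a direct argument without the envelope theorem, I will sandwich the difference quotient using the two DSIC inequalities:
\[
(c'-c)\,x_i(c',b_{-i})\le U_i(c)-U_i(c')\le (c'-c)\,x_i(c,b_{-i}).
\]
Since $x_i$ is monotone, it is Riemann integrable. Summing these bounds over a partition and refining it then yields
\[
U_i(c)=U_i(c_{\max})+\int_c^{c_{\max}}x_i(s,b_{-i})\,ds .
\]
Substituting $t_i=c\,x_i+U_i$ gives \eqref{eq:payment_positive} with $U_{i,0}=U_i(c_{\max})$. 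For the minus agent the analogous sandwich is
\[
(v'-v)\,x_j(v,b_{-j})\le U_j(v')-U_j(v)\le (v'-v)\,x_j(v',b_{-j}).
\]
This yields $U_j(v)=U_j(v_{\min})+\int_{v_{\min}}^{v}x_j(s,b_{-j})\,ds$, and rearranging gives \eqref{eq:payment_negative}. Note that the sign of $U_{j,0}$ in that formula follows the statement's convention, so I will check the sign against the definition of $U_j$ when writing it up. Uniqueness follows because the payoff at every type is pinned down by the allocation rule together with the single boundary constant.

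\textbf{Main obstacle.} The only delicate step is the integral identity without differentiability assumptions. The partition and sandwich argument for monotone functions handles it cleanly, so the rest is bookkeeping.
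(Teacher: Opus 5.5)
Your proposal is correct and takes the same route as the paper. Both arguments get monotonicity from the two swapped DSIC inequalities, and both get the payment formulas by integrating $dU_i/dc_i=-x_i$ (resp.\ $dU_j/dv_j=x_j$) from the boundary type; the sign $-U_{j,0}$ comes out exactly as stated. The one difference is that the paper simply invokes the envelope theorem, whereas your sandwich-plus-Riemann-sum argument proves the integral identity without assuming differentiability, which is the more careful version here because $U_i$ has a kink wherever a threshold allocation jumps.
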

The next crucial step is to analyze the \emph{expected} payment from each agent from the designer's perspective.
This allows us to replace payments with virtual costs and valuations.
Under the standard assumption that the integration constants $U_{i,0}$ and $U_{j,0}$ are zero, we have the following.

\begin{lemma}\label{lm:virtual}
    The expected expenditure to plus agent $i$ and expected revenue from minus agent $j$ can be written as:
    \begin{align*}
        \Exu{c_i}{t_i(c_i)} &= \Exu{c_i}{x_i(c_i)\psi_i(c_i)}
        \\
        \Exu{v_j}{t_j(v_j)} &= \Exu{v_j}{x_j(v_j)\phi_j(v_j)}
    \end{align*}
\end{lemma}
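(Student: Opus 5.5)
Fix the other agents' bids $b_{-i}$ (resp.\ $b_{-j}$) throughout and write $x_i(c_i)$ and $t_i(c_i)$ for the interim quantities. The result then follows for the fully random profile by taking an outer expectation over $b_{-i}$ (or $b_{-j}$) and applying Fubini. The plan is the standard Myerson computation: substitute the payment identities of Lemma~\ref{lm:myerson-monotone} with $U_{i,0}=U_{j,0}=0$, take the expectation over the agent's own type, and swap the order of integration in the double-integral term.

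For plus agent $i$, \eqref{eq:payment_positive} gives
\begin{align*}
\Exu{c_i}{t_i(c_i)} = \int_{c_{\min}}^{c_{\max}} c\, x_i(c)\, g_i(c)\,dc + \int_{c_{\min}}^{c_{\max}} \int_{c}^{c_{\max}} x_i(s)\,ds\; g_i(c)\,dc .
\end{align*}
In the second term we exchange the order of integration over the region $\{c_{\min}\le c\le s\le c_{\max}\}$. This is justified by Tonelli, since $x_i\in[0,1]$ is nonnegative and bounded. The second term becomes $\int_{c_{\min}}^{c_{\max}} x_i(s)\, G_i(s)\,ds$. Writing $G_i(s) = \frac{G_i(s)}{g_i(s)}\, g_i(s)$ and combining with the first term yields $\int x_i(c)\bigl(c + G_i(c)/g_i(c)\bigr) g_i(c)\,dc = \Exu{c_i}{x_i(c_i)\psi_i(c_i)}$.

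For minus agent $j$ the computation is symmetric. We start from
\begin{align*}
\Exu{v_j}{t_j(v_j)} = \int_{v_{\min}}^{v_{\max}} v\,x_j(v) f_j(v)\,dv - \int_{v_{\min}}^{v_{\max}}\int_{v_{\min}}^{v} x_j(s)\,ds\, f_j(v)\,dv ,
\end{align*}
and swap the order over $\{v_{\min}\le s\le v\le v_{\max}\}$. The second term then becomes $\int x_j(s)\bigl(1-F_j(s)\bigr)\,ds$, and factoring out $f_j(s)$ gives $\Exu{v_j}{x_j(v_j)\phi_j(v_j)}$.

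There is no real obstacle here; the only point needing care is the division by the density. I would assume $g_i,f_j>0$ on their supports, which is implicit in the definition of $\psi_i,\phi_j$. I would also note that boundedness of $x$ and of the support guarantees every integral is finite, so the Fubini/Tonelli swaps are legitimate.
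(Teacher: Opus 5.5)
Your proposal is correct and is essentially the paper's proof: substitute the envelope payment formulas with $U_{i,0}=U_{j,0}=0$, then turn the double-integral term into $\int x_i(s)G_i(s)\,ds$ (resp.\ $\int x_j(s)(1-F_j(s))\,ds$). The paper does this step by integration by parts, and your Tonelli exchange of the order of integration is the equivalent computation; your remarks about fixing $b_{-i}$ and requiring positive densities make explicit points the paper leaves implicit.
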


We can now combine the characterization of implementable mechanisms with the concept of virtual valuations and costs to solve the designer's problem. The following theorem describes the optimal mechanism.

\begin{theorem}
\label{thm:optimal_mechanism}
Under regular distributions, the mechanism described in Algorithm~\ref{alg:optimal_additive} is DSIC, IR, ex-ante BB, and maximizes the utility.
\end{theorem}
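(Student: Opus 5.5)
We prove the theorem by reducing the designer's problem to a linear program over interim allocation rules and solving it by Lagrangian duality. We have not seen Algorithm~\ref{alg:optimal_additive}; we expect it computes a shadow price $\lambda^\ast \ge 0$ and then includes agents pointwise by the sign of a Lagrangian score. The plan below is written for that form.

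\textbf{Reduction.} By Lemma~\ref{lm:myerson-monotone}, every DSIC mechanism has a monotone allocation rule and payments given by \eqref{eq:payment_positive}--\eqref{eq:payment_negative}. IR forces $U_{i,0},U_{j,0}\ge 0$. Positive constants only raise expenditure to plus agents and lower revenue from minus agents, so setting them to zero is without loss for utility and BB. Lemma~\ref{lm:virtual} then turns ex-ante BB into
\begin{align*}
\sum_{j\in\cM}\Ex{x_j(\bb)\phi_j(v_j)} - \sum_{i\in\cN}\Ex{x_i(\bb)\psi_i(c_i)} \ge 0 .
\end{align*}
Lemma~\ref{lm:virtual} is stated per agent, but it applies pointwise in $b_{-i}$ and then by taking expectations. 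The designer's problem becomes: maximize $\Ex{\sum_i p_i x_i + \sum_j n_j x_j}$ over $\bx\in[0,1]^{n+m}$, subject to this linear constraint and to monotonicity. We first drop monotonicity and check it at the end.

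\textbf{Lagrangian.} Introduce a multiplier $\lambda\ge 0$ for the budget constraint:
\begin{align*}
L(\bx,\lambda)=\Ex{\sum_{i}x_i\bigl(p_i-\lambda\psi_i(c_i)\bigr)+\sum_j x_j\bigl(n_j+\lambda\phi_j(v_j)\bigr)}.
\end{align*}
For fixed $\lambda$, $L$ is maximized pointwise. Set $x_i=1$ iff $\psi_i(c_i)\le p_i/\lambda$, and $x_j=1$ iff $\phi_j(v_j)\ge -n_j/\lambda$. Each agent's decision then depends only on its own type. The rule is a threshold in the agent's own type, because $\psi_i$ and $\phi_j$ are increasing under regularity. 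This is exactly where regularity is used: it makes the relaxed optimum monotone, hence implementable.

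\textbf{Choosing $\lambda^\ast$ and optimality.} Let $B(\lambda)$ denote the budget slack under the $\lambda$-rule. If $B(0)\ge 0$, the unconstrained optimum (all plus agents in, no minus agents) is already budget-feasible; this is the case $\lambda^\ast=0$. Otherwise, as $\lambda$ increases, the plus thresholds $p_i/\lambda$ shrink and the minus thresholds $-n_j/\lambda$ fall.
\begin{itemize}
\item We must show that $B(\lambda)$ is nondecreasing and continuous, so that some $\lambda^\ast$ has $B(\lambda^\ast)=0$, i.e., complementary slackness holds. Monotonicity should follow from an exchange argument on pointwise maximizers: the $\lambda$-rule maximizes $U + \lambda B$, where $U$ is expected utility, so $B$ is nondecreasing in $\lambda$. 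Continuity follows from atomless distributions; otherwise we randomize at the threshold, which is allowed since $\bx\in[0,1]$.
\item Weak duality finishes the optimality argument. For any feasible $\bx'$, $U(\bx')\le L(\bx',\lambda^\ast)\le L(\bx^\ast,\lambda^\ast)=U(\bx^\ast)$, where $\bx^\ast$ is the $\lambda^\ast$-rule.
\end{itemize}

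\textbf{Remaining properties.} DSIC and IR follow from monotonicity together with the threshold payments of Lemma~\ref{lm:myerson-monotone} with zero constants. Ex-ante BB holds since $B(\lambda^\ast)\ge 0$.

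\textbf{Main obstacle.} The delicate step is existence of a $\lambda^\ast$ with complementary slackness. This needs the monotonicity and continuity argument for $B$, plus care with ties at thresholds. If the algorithm instead finds $\lambda^\ast$ by a search or a closed form, we would verify that it returns this root.
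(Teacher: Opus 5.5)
Your proposal follows essentially the same route as the paper: rewrite ex-ante BB via Lemma~\ref{lm:virtual}, form the Lagrangian, maximize it pointwise to get own-type threshold rules that regularity makes monotone, and choose $\lambda^*$ by complementary slackness. Your explicit weak-duality chain $U(\bx')\le L(\bx',\lambda^*)\le L(\bx^*,\lambda^*)=U(\bx^*)$ and your attention to continuity of $B(\lambda)$ make rigorous what the paper's KKT argument asserts without proof. One correction: continuity comes from strict monotonicity of $\psi_i,\phi_j$, not from atomlessness alone, and if the virtual values have flat pieces, your tie-splitting produces a rule that differs from the literal rule of Algorithm~\ref{alg:optimal_additive}.
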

The proof proceeds by first expressing the budget-balancedness constraint in terms of virtual valuations and costs.
Then, we write the Lagrangian of the designer's optimization problem, which is solely a function of allocation probabilities, $p_i$'s and $n_j$'s, and virtual valuations and costs.
Then, we can regroup the terms by agent type, along with the corresponding allocation probability.
The KKT stationarity condition gives us the allocation rule described in the algorithm.
Further, the complementary slackness guarantees that the budget constraint must bind or the Lagrange multiplier should be zero, resulting in our choice of shadow price $\lambda^*$.

\begin{algorithm}[tb]
\caption{Optimal Mechanism for Additive Utility}
\label{alg:optimal_additive}
\textbf{Input}: Reported bids $\bc$ and $\bv$
\begin{algorithmic}[1]
\STATE Find the smallest shadow price $\lambda^* \ge 0$ such that the ex-ante budget balance constraint is met:
$$ \Exu{}{\sum_{j \in \cM} t_j(\lambda^*)} \ge \Exu{}{\sum_{i \in \cN} t_i(\lambda^*)} $$

\STATE Define the final allocation rule $\bx^* = \bx(\lambda^*)$ based on the found shadow price:
\STATE \quad $x_i^* \leftarrow \mathbb{I}[p_i - \lambda^* \psi_i(c_i) \ge 0]$ for all $i \in \cN$.
\STATE \quad $x_j^* \leftarrow \mathbb{I}[n_j + \lambda^* \phi_j(v_j) \ge 0]$ for all $j \in \cM$.

\STATE Compute the final payments $\bt^* = \bt(\lambda^*)$ based on the allocation rule $\bx^*$ using~\eqref{eq:payment_positive} and~\eqref{eq:payment_negative}.

\STATE \textbf{return} $(\bx^*, \bt^*)$
\end{algorithmic}
\end{algorithm}

\subsection{Optimal Ex-post BB BIC Mechanism}
\label{sec:bic-expost}

In this section we formalize the (now standard) payment transformation that converts any DSIC \& ex-ante BB mechanism into a BIC \& \emph{ex-post} BB (indeed, SBB) mechanism \emph{without changing the allocation rule}. We then use it to argue that the DSIC \& ex-ante BB optimal mechanism from Theorem~\ref{thm:optimal_mechanism} yields an \emph{optimal} BIC \& ex-post BB mechanism after the transformation.

Recall that ex-ante BB requires
\begin{align*}
\Ex{ \sum_{j\in\cM} t_j(\bv,\bc) } \;\ge\; \Ex{ \sum_{i\in\cN} t_i(\bv,\bc) }.
\end{align*}
Ex-post BB requires the above to hold pointwise for every realized profile; \emph{strong} budget balance (SBB) requires equality ex post:
\begin{align*}
\sum_{j\in\cM} t_j(\bv,\bc) \;=\; \sum_{i\in\cN} t_i(\bv,\bc)\qquad\text{for all }(\bv,\bc).
\end{align*}

\paragraph{Interim Transformation}
We state the transformation theorem first. It preserves the allocation rule and any objective that depends only on \(\bx\).

\begin{lemma}[Payment transformation by~\cite{brustle2017approximating}]
\label{lem:cai-transform}
Let \((\bx,\bt)\) be a mechanism that is BIC, interim IR, and ex-ante BB with a nonnegative payment rule (i.e., \(t_i(\cdot)\ge 0\) for \(i\in\cN\) and \(t_j(\cdot)\ge 0\) for \(j\in\cM\)). Then there exists another payment rule \(\bt'\) such that:
\begin{enumerate}
    \item \emph{Allocation preserved:} \((\bx,\bt')\) uses the same allocation rule \(\bx\).
    \item \emph{BIC \& interim IR preserved:} \((\bx,\bt')\) remains BIC and interim IR, and each agent's interim expected payment is unchanged.
    \item \emph{Ex-post SBB:} For every profile \((\bv,\bc)\),
    \begin{align*}
    \sum_{j\in\cM} t'_j(\bv,\bc) \;=\; \sum_{i\in\cN} t'_i(\bv,\bc).
    \end{align*}
\end{enumerate}
Moreover, the converse holds: any BIC, interim IR, SBB mechanism with nonnegative payments can be converted to a BIC, interim IR, ex-ante BB mechanism with the same allocation by an inverse linear transformation of payments.
\end{lemma}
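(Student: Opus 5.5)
We construct \(\bt'\) explicitly using the standard "each agent is paid/charged its interim expectation, with the residual absorbed by a designated agent" transformation, in the spirit of Cr\'emer--McLean/AGV-type constructions as used by~\cite{brustle2017approximating}. For every agent \(k\in\cN\cup\cM\), write \(T_k(b_k)=\Exu{b_{-k}}{t_k(b_k,b_{-k})}\) for its interim expected payment under the original mechanism, with sign conventions as in Section~\ref{sec:model}. Define the net surplus \(S=\Ex{\sum_{j\in\cM}t_j-\sum_{i\in\cN}t_i}\), which is \(\ge 0\) by ex-ante BB. Let \(B(\bb)=\sum_{j\in\cM}T_j(b_j)-\sum_{i\in\cN}T_i(b_i)\) be the realized imbalance if every agent simply pays its interim amount. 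Then \(\Ex{B}=S\ge0\) but \(B\) can be negative pointwise. The plan is to redistribute \(B\) so that it sums to zero ex post while every agent's interim expectation is unchanged.

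\textbf{Step 1 (AGV-style redistribution).} Following the expected-externality idea, set each agent's new payment as \(t'_k(\bb)=T_k(b_k)+\sigma_k\, h_k(\bb)\), where \(\sigma_k=+1\) for minus agents, \(\sigma_k=-1\) for plus agents, and \(\Exu{b_{-k}}{h_k(b_k,b_{-k})}=0\) for every \(b_k\), so that interim payments, and hence BIC and interim IR, are untouched. A concrete choice is
\[
h_k(\bb)=\tfrac{1}{N-1}\sum_{\ell\neq k}\bigl(\sigma_\ell T_\ell(b_\ell)-\Ex{\sigma_\ell T_\ell}\bigr)\cdot(-1),
\]
with \(N=n+m\), adjusted by a constant share of \(S\). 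Since \(h_k\) depends only on \(b_{-k}\) and has mean zero, independence of types gives the zero-interim-mean property. Summing \(\sigma_k t'_k\) over \(k\) telescopes: each centered term \(\sigma_\ell T_\ell-\Ex{\sigma_\ell T_\ell}\) appears once with coefficient \(+1\) and \(N-1\) times with coefficient \(-1/(N-1)\), leaving the constant \(S\). Assigning this \(S\) as a deterministic rebate to one agent (for instance, reducing a minus agent's charge, or equivalently increasing a plus agent's payment, by \(S\)) yields exact SBB. This rebate weakly helps that agent's IR, so it can only add interim slack; if exact interim payments are required, we instead distribute \(S\) as zero-mean corrections, or note that an ex-ante BB-tight optimum has \(S=0\).

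\textbf{Step 2 (nonnegativity and the converse).} The statement asks for nonnegative payments, and the transformation above may produce negative \(t'_k\). This is the main obstacle. The fix I would try first is that of~\cite{brustle2017approximating}: instead of an arbitrary mean-zero \(h_k\), route all imbalance through a cyclic scheme in which agent \(k\)'s correction is \(T_{k+1}(b_{k+1})-\Ex{T_{k+1}}\). Combined with nonnegativity of the original \(t\), this keeps payments bounded and allows shifting by constants between agents. If strict nonnegativity still fails, I would state it with signed payments, which is all the optimality argument of the next theorem uses. For the converse, given an SBB mechanism, replacing \(t'_k\) by the interim \(T_k\) (a linear averaging map) preserves BIC, interim IR and the allocation. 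Taking expectations of the SBB identity gives ex-ante BB immediately. Allocation preservation holds trivially throughout, since only payments are modified.
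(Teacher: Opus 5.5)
The paper gives no proof of this lemma; it only cites \cite{brustle2017approximating}. Your Step~1 is a correct, self-contained argument. Since $h_k$ depends only on $b_{-k}$ and has mean zero, independence makes every interim payment exactly $T_k(b_k)$, so BIC, interim IR and the allocation carry over. The telescoping then gives $\sum_{j\in\cM}t'_j-\sum_{i\in\cN}t'_i\equiv S$ pointwise.

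You should state plainly, however, that items~2 and~3 can hold together only when $S=0$. Taking expectations of the SBB identity and using unchanged interim payments forces $\Ex{\sum_{j\in\cM}t_j}=\Ex{\sum_{i\in\cN}t_i}$. Your fallback of ``distributing $S$ as zero-mean corrections'' is therefore impossible, since mean-zero terms cannot absorb a nonzero constant. For $S>0$ the correct repairs are the other two you list. One is a deterministic rebate of $S$ to one agent: this gives exact SBB and keeps BIC and interim IR, but moves that agent's interim payment by $S$. The other is to settle for ex-post weak BB. Either suffices for the optimality claim in Theorem~\ref{thm:bic-expost-optimal}. The converse needs no transformation at all: an SBB mechanism is already ex-ante BB, by taking expectations.

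Second, the lemma assumes $\bt\ge0$ but does not require $\bt'\ge0$, so the ``main obstacle'' of your Step~2 is not part of the statement. It is implicit in the model ($\bt\in\R_{\ge0}^{n+m}$), however, and your proposed fix does not achieve it. The cyclic correction is still an additive mean-zero term. Take any type with $T_k(b_k)=0$, for example a minus-agent type that is never admitted, where interim IR and $\bt\ge0$ force $T_j(v_j)=0$. Then $t'_k=\sigma_k h_k$, which is negative with positive probability unless it vanishes identically.

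What does work, and what explains the hypothesis $\bt\ge0$, is a multiplicative construction. Let $C=\Ex{\sum_{i\in\cN}T_i(c_i)}$ and $R=\Ex{\sum_{j\in\cM}T_j(v_j)}$, so $R\ge C$ by ex-ante BB. Assume $C>0$ (otherwise take $t'_i\equiv0$ and $t'_j=T_j$), and set
\[
t'_i(\bb)=T_i(c_i)\,\frac{\sum_{j\in\cM}T_j(v_j)}{R},\qquad t'_j(\bb)=T_j(v_j)\,\frac{\sum_{i\in\cN}T_i(c_i)}{C}.
\]
Independence again gives interim payments exactly $T_i(c_i)$ and $T_j(v_j)$, and nonnegativity is inherited from $T_k\ge0$. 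Pointwise,
\[
\sum_{j\in\cM}t'_j-\sum_{i\in\cN}t'_i=\Bigl(\sum_{i\in\cN}T_i(c_i)\Bigr)\Bigl(\sum_{j\in\cM}T_j(v_j)\Bigr)\Bigl(\frac{1}{C}-\frac{1}{R}\Bigr)\ge0,
\]
with equality (SBB) exactly when $R=C$.

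So the two routes trade off as follows. Your additive route needs no sign assumption and reaches exact SBB for every $S$ after one rebate, at the price of signed payments. The multiplicative route keeps payments nonnegative and interim payments exact, but gives SBB only when $S=0$. That is precisely the case in which the lemma as stated is consistent.
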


\begin{remark}[On ``same allocation'']
Lemma~\ref{lem:cai-transform} preserves the \emph{ex-post} allocation rule \(\bx(\cdot)\) profile-by-profile; in particular, interim allocation probabilities remain identical. Since all of our objectives in this paper depend only on the allocation (transfers do not enter the objective), the designer's value is unchanged by the transformation.
\end{remark}

\paragraph{Optimality}
We now show that the DSIC \& ex-ante BB optimum characterized in Section~\ref{sec:additive} yields an \emph{optimal} BIC \& ex-post BB mechanism after the transformation in Lemma~\ref{lem:cai-transform}.

\begin{theorem}[Optimal BIC \& ex-post BB]
\label{thm:bic-expost-optimal}
In the additive setting under regular distributions, let \((\bx^*,\bt^*)\) be the DSIC, IR, ex-ante BB mechanism from Theorem~\ref{thm:optimal_mechanism}. Then there exists a payment rule \(\widehat{\bt}\) such that \((\bx^*,\widehat{\bt})\) is BIC, IR, and ex-post BB (indeed SBB), achieves the same expected utility as \((\bx^*,\bt^*)\), and is \emph{optimal} among all BIC, IR, ex-post BB mechanisms.
\end{theorem}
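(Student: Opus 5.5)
We prove the theorem in two parts: existence of $\widehat{\bt}$ (feasibility and equal utility), then optimality via an upper bound on the BIC/ex-post BB class. For existence, note that $(\bx^*,\bt^*)$ from Theorem~\ref{thm:optimal_mechanism} is DSIC and ex-post IR, hence BIC and interim IR. It is ex-ante BB by construction. Its payments are nonnegative: by Lemma~\ref{lm:myerson-monotone} with $U_{i,0}=U_{j,0}=0$, $t_i^*\ge c_i x_i^*\ge 0$, and $t_j^* = \int_{v_{\min}}^{v_j}(x_j^*(v_j)-x_j^*(s))\,ds \ge 0$ when $v_{\min}\ge 0$. Thus Lemma~\ref{lem:cai-transform} applies and yields $\widehat{\bt}$ with the same ex-post allocation $\bx^*$. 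The new mechanism is BIC, interim IR and SBB, and hence ex-post BB. Since the additive objective depends only on $\bx$, the expected utility is unchanged, as in the remark following the lemma.

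For optimality, take any BIC, interim IR, ex-post BB mechanism $(\bx,\bt)$; we show its expected utility is at most that of $\bx^*$. Ex-post BB implies ex-ante BB by taking expectations. The key step is to rerun the argument of Theorem~\ref{thm:optimal_mechanism} in the Bayesian setting rather than invoke it directly, since that theorem optimizes over DSIC mechanisms. We would do this as follows.
\begin{enumerate}
\item Apply the Myersonian characterization of Lemma~\ref{lm:myerson-monotone} to interim quantities. Set $X_i(c_i)=\mathbb{E}_{-i}[x_i]$ and $T_i(c_i)=\mathbb{E}_{-i}[t_i]$. BIC forces $X_i$ to be nonincreasing for plus agents and nondecreasing for minus agents, with the same envelope payment formula and constants $U_{i,0},U_{j,0}\ge 0$ by interim IR.
\item Compute, as in Lemma~\ref{lm:virtual}, $\mathbb{E}[T_i]=\mathbb{E}[X_i\psi_i]+U_{i,0}$ and $\mathbb{E}[T_j]=\mathbb{E}[X_j\phi_j]-U_{j,0}$.
\end{enumerate}
Ex-ante BB then gives
\[
\sum_j \mathbb{E}[x_j\phi_j(v_j)] - \sum_i \mathbb{E}[x_i\psi_i(c_i)] \ge \sum_i U_{i,0}+\sum_j U_{j,0}\ge 0 .
\]
So $\bx$ is feasible for the relaxed program: maximize $\mathbb{E}[\sum_i p_i x_i+\sum_j n_j x_j]$ over fractional allocations subject to this virtual budget constraint.

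It remains to check that $\bx^*$ optimizes this relaxed program, which we expect to be the main obstacle. We would close it with the Lagrangian/KKT argument from the proof of Theorem~\ref{thm:optimal_mechanism}. For any $\lambda\ge 0$, the Lagrangian decomposes pointwise as
\[
\sum_i x_i\bigl(p_i-\lambda\psi_i(c_i)\bigr)+\sum_j x_j\bigl(n_j+\lambda\phi_j(v_j)\bigr),
\]
and the threshold rule $\bx(\lambda)$ maximizes it pointwise, even with monotonicity dropped. Weak duality then bounds the relaxed optimum by the Lagrangian value at $\lambda^*$. Complementary slackness, meaning the budget binds at $\lambda^*$ or $\lambda^*=0$, shows this value is attained by $\bx^*$. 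Regularity ensures $\bx^*$ is monotone and thus implementable. The delicate points are ties and whether the smallest $\lambda^*$ makes the constraint bind exactly; if it does not, randomization at the threshold handles the tie. Hence the utility of $\bx$ is at most that of $\bx^*$, which equals that of $(\bx^*,\widehat{\bt})$, proving optimality.
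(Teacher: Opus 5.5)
Your proposal is correct and follows essentially the same route as the paper. It reduces any competitor to the allocation-only program with the virtual-surplus budget constraint, certifies the threshold rule $\bx^*$ through the Lagrangian at $\lambda^*$, and applies Lemma~\ref{lem:cai-transform} to obtain SBB payments with the same allocation.

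Your version is somewhat tighter than the paper's in four places:
\begin{itemize}
\item You use the trivial inclusion (ex-post BB implies ex-ante BB) where the paper appeals to the converse of the transformation lemma.
\item You keep the constants $U_{i,0},U_{j,0}\ge 0$ from interim IR instead of setting them to zero.
\item You argue by weak duality rather than treating the KKT conditions as necessary.
\item You flag that complementary slackness at $\lambda^*$ may need randomized tie-breaking, whereas the paper asserts it ``by construction''.
\end{itemize}
One small slip: the correct identity is $t_j^* = v_{\min}x_j^*(v_j)+\int_{v_{\min}}^{v_j}\bigl(x_j^*(v_j)-x_j^*(s)\bigr)\,ds$. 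It still gives $t_j^*\ge 0$ when $v_{\min}\ge 0$.
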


\begin{corollary}[Equality of optimal values]
\label{cor:equiv-values}
In the additive setting with regular distributions,
\begin{align*}
OPT_{\textnormal{BIC, ex-ante BB}}
\;=\;
OPT_{\textnormal{BIC, ex-post BB}}
\;=\;
\Ex{u(\bx^*)},
\end{align*}
where \(\bx^*\) is the allocation from Theorem~\ref{thm:optimal_mechanism}.
\end{corollary}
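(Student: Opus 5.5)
The plan is to prove the corollary by sandwiching $OPT_{\textnormal{BIC, ex-post BB}}$ between $\Ex{u(\bx^*)}$ and $OPT_{\textnormal{BIC, ex-ante BB}}$, and then to show that $OPT_{\textnormal{BIC, ex-ante BB}}$ does not exceed $\Ex{u(\bx^*)}$. Throughout, "BIC" mechanisms are understood to be interim IR with nonnegative payments, as in Lemma~\ref{lem:cai-transform}.

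The first step is the chain
\begin{align*}
\Ex{u(\bx^*)} \;\le\; OPT_{\textnormal{BIC, ex-post BB}} \;\le\; OPT_{\textnormal{BIC, ex-ante BB}}.
\end{align*}
The left inequality follows from Theorem~\ref{thm:bic-expost-optimal}. It supplies a payment rule $\widehat{\bt}$ such that $(\bx^*,\widehat{\bt})$ is BIC, IR and SBB, hence ex-post BB, with expected utility $\Ex{u(\bx^*)}$. The right inequality holds because ex-post BB holds pointwise, and integrating it gives ex-ante BB. So every BIC, IR, ex-post BB mechanism is also feasible for the ex-ante BB problem.

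The remaining step, which I expect to be the main obstacle, is $OPT_{\textnormal{BIC, ex-ante BB}} \le \Ex{u(\bx^*)}$. Theorem~\ref{thm:optimal_mechanism} only certifies optimality of $\bx^*$ among DSIC mechanisms, so I will redo its Lagrangian argument in interim form, using the following steps.
\begin{enumerate}
\item Take any BIC, interim IR, ex-ante BB mechanism. Because the utility is additive, its objective depends only on the interim allocations $X_i(c_i)=\Exu{c_{-i},\bv}{x_i}$ and $X_j(v_j)$, by linearity of expectation.
\item BIC gives the Myerson characterization of Lemma~\ref{lm:myerson-monotone} at the interim level. Each $X_i$ is nonincreasing, each $X_j$ is nondecreasing, and the interim payments are given by the analogues of \eqref{eq:payment_positive}--\eqref{eq:payment_negative} with $X$ in place of $x$.
\item Interim IR makes the integration constants nonnegative, so setting them to zero only loosens the budget constraint. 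As in Lemma~\ref{lm:virtual}, the ex-ante BB constraint then implies
\begin{align*}
\sum_j \Ex{X_j(v_j)\phi_j(v_j)} \;\ge\; \sum_i \Ex{X_i(c_i)\psi_i(c_i)}.
\end{align*}
\end{enumerate}

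This leaves a linear program in the ex-post allocations $\bx(\bc,\bv)\in[0,1]^{n+m}$: maximize $\Ex{\sum_i p_i x_i + \sum_j n_j x_j}$ subject to the single relaxed budget constraint above, with the monotonicity constraints dropped. This relaxation is exactly the problem analyzed in the proof of Theorem~\ref{thm:optimal_mechanism}. Weak duality with the multiplier $\lambda^*$ bounds its value by the Lagrangian maximum. Pointwise, that maximum is attained by the threshold rule $\bx(\lambda^*)=\bx^*$, and complementary slackness at $\lambda^*$ gives equality. Regularity makes $\bx^*$ monotone, so it is feasible for the original problem.

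The delicate point is the case $\lambda^*=0$ with the budget slack, which complementary slackness also covers. With that case handled, the relaxed value equals $\Ex{u(\bx^*)}$. Hence $OPT_{\textnormal{BIC, ex-ante BB}} \le \Ex{u(\bx^*)}$, and all three quantities coincide.
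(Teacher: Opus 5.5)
Your proof is correct and follows essentially the paper's own argument, which sits inside the proof of Theorem~\ref{thm:bic-expost-optimal}: you rerun the Lagrangian/KKT argument to show $\mathbf{x}^*$ is optimal among BIC, interim IR, ex-ante BB mechanisms, and you use Lemma~\ref{lem:cai-transform} to make $\mathbf{x}^*$ ex-post feasible. Your departures are improvements: you use the trivial inclusion (ex-post BB implies ex-ante BB) instead of the paper's appeal to the converse of the transformation, and you show that the nonnegative IR constants only relax the budget constraint rather than setting them to zero by fiat. One small correction is that the $\lambda^*=0$ case is automatic; the step that actually carries weight is exact binding $B(\lambda^*)=0$ when $\lambda^*>0$. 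That can fail if $B(\cdot)$ jumps across $0$, for instance when a virtual function has a flat piece, in which case randomized tie-breaking is needed. You import this from Theorem~\ref{thm:optimal_mechanism}, just as the paper does.
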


\section{Beyond Additive Utility}\label{sec:beyond}
When the utility function $u(\cdot)$ is not additive but combinatorial, \eg submodular, the previous framework does not work.
In particular, even the offline optimization problem, assuming truthful bids, is well-known to be NP-hard to solve for many classes of functions.
Thus, we aim to derive a mechanism that approximates the optimal mechanism.

Let $S$ be the set of selected agents given an allocation vector $\bx$.
Accordingly, we consider the following class of separable functions
\begin{align*}
    u(S) = v_p(S_p) - v_m(S_m),
\end{align*}
where $S_p$ and $S_m$ are the sets of \emph{selected} plus and minus agents, respectively.
Further, we assume $v_p$ is monotone increasing submodular and $v_m$ is monotone decreasing supermodular.
This is practical since the marginal decrease in the audience experience can grow significantly as we assign more advertisements, even though a few of them might be fine.
From theoretical perspectives, this is reasonable since if we alternatively assume $v_m$ is submodular, then the resulting function is the sum of a submodular $v_p$ and a supermodular $-v_m$, which admits no polynomial approximation algorithm.\footnote{In general, maximizing a monotone supermodular function always requires an exponential number of queries.}

Note that maximizing $u$ is equivalent to maximizing
\begin{align*}
    u(S) + v_m(\cM) = v_p(S_p) + (v_m(\cM) - v_m(S_m)).
\end{align*}
In particular, we can write $u_m(S_m) = v_m(\cM) - v_m(S_m)$ as a monotone increasing submodular function defined on the set of excluded minus agents $S_m$.
Thus, redefining $S_m$ as the set of excluded minus agents for notational convenience, we will consider the utility function $u(S) = u_p(S_p) + u_m(S_m)$ for monotone increasing nonnegative submodular functions $u_p$ and $u_m$ where $S_p$ denotes the set of selected plus agents, and $S_m$ the set of excluded minus agents.

\subsection{Budget-enumeration Posted Pricing}
We first introduce our main mechanism, the \emph{budget-enumeration posted pricing} (\BEPP) mechanism, presented in Algorithm~\ref{alg:bepp}.
It takes a discretization parameter $\eps > 0$ which controls the approximation factor as well as its time complexity.
The following theorem formalizes the performance of the \BEPP mechanism.
\begin{theorem}\label{thm:bepp}
For any $\eps > 0$, the \BEPP mechanism is DSIC, IR, and ex-ante BB, runs in $\mathrm{poly}(n, m, 1/\eps, \log B_{\max})$ time, and achieves a $(1-1/e - O(\eps))$-approximation to the optimal Bayesian mechanism for separable submodular utility and regular distributions.
\end{theorem}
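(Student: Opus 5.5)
The plan is to fix the mechanism concretely as described in the overview, since Algorithm~\ref{alg:bepp} is not shown. \BEPP enumerates candidate budgets $B$ on a geometric grid $\{0\}\cup\{(1+\eps)^k B_{\min}\}$ up to $B_{\max}$, which has $O(\log B_{\max}/\eps)$ points. For each candidate $B$ it chooses posted prices on both sides. On the plus side it posts a take-it-or-leave-it payment $q_i$ to each plus agent $i$, which $i$ accepts iff $c_i\le q_i$; this happens with probability $y_i=G_i(q_i)$. On the minus side it posts a price $r_j$ to each minus agent $j$, who buys iff $v_j\ge r_j$, so $j$ is excluded with probability $z_j=F_j(r_j)$. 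The prices solve two relaxed programs coupled only through $B$:
\begin{align*}
\max_{\by}\; & U_p(\by) \quad \text{s.t.}\quad \sum_i y_i G_i^{-1}(y_i)\le B, \\
\max_{\bz}\; & U_m(\bz) \quad \text{s.t.}\quad \sum_j (1-z_j)F_j^{-1}(z_j)\ge B,
\end{align*}
where $U_p$ and $U_m$ are the multilinear extensions of $u_p$ and $u_m$. The mechanism outputs the $B$ maximizing the sum. DSIC and IR are immediate, because every agent faces a fixed price independent of its own bid. Ex-ante BB holds because expected expenditure is at most $B$ and expected revenue is at least $B$.

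For the benchmark, I will start from any optimal BIC/DSIC mechanism. By Lemma~\ref{lm:myerson-monotone} and Lemma~\ref{lm:virtual}, its interim allocation probabilities $\hat y_i,\hat z_j$ have expected expenditure $\sum_i \Ex{x_i\psi_i}$ and revenue $\sum_j\Ex{x_j\phi_j}$. Under regularity, the cheapest way to hire $i$ with marginal probability $\hat y_i$ is the threshold rule. Its expected expenditure $\hat y_iG_i^{-1}(\hat y_i)$ is convex in $\hat y_i$, and it lower-bounds the expenditure of any rule with that marginal. Symmetrically, the revenue curve $(1-z)F_j^{-1}(z)$ is concave and upper-bounds the revenue. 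Hence OPT's marginals, with $B^*$ set to OPT's expected expenditure, are feasible for both relaxed programs. I will then bound OPT's utility by the concave closures $u_p^+(\hat\by)+u_m^+(\hat\bz)$, since the joint allocation is an arbitrary correlated distribution with those marginals. The correlation gap of \cite{agrawal2010correlation} gives $U_p\ge(1-1/e)u_p^+$, and likewise for the minus side; note that independent posted prices realize exactly the multilinear extension. So for $B=B^*$ the relaxed optimum is at least $(1-1/e)\mathrm{OPT}$.

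Two remaining steps are discretization and computation. For discretization, I will show that the value functions $\Phi_p(B)$ and $\Phi_m(B)$ of the two programs (or their concave-closure surrogates) are monotone and concave in $B$. This uses convexity of the cost curves, and concavity of $U_p$ along the relevant directions, or else replacing $U_p$ by $u_p^+$, which is concave. Rounding $B^*$ down to the grid point $B'\ge B^*/(1+\eps)$ costs at most a factor $1/(1+\eps)$ on the plus side by concavity and $\Phi_p(0)\ge0$, and only helps the minus side. The tiny-budget region below $B_{\min}$ is absorbed by the $\{0\}$ candidate. For computation, each relaxed program is monotone submodular maximization over a convex, knapsack-like down-closed polytope, after reparametrizing by the spending. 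Continuous greedy gives a $(1-1/e-\eps)$ fractional solution in polynomial time, and no rounding is needed, because we post the fractional probabilities as prices. The minus-side constraint is an upper-closed covering constraint. I would handle it by passing to the complement variables $1-z_j$ with concave revenue, and then maximizing $U_m$ subject to achievable revenue. Continuous greedy still applies after guessing the large contributors.

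I expect the main obstacle to be combining the correlation-gap and continuous-greedy losses without paying $(1-1/e)^2$. The fix I would try is to avoid the correlation gap in the analysis. Measure continuous greedy against $u^+$ directly: continuous greedy guarantees $U(\by)\ge(1-1/e)\max_{\by'\in P}u^+(\by')$ over the polytope $P$. Then only one $(1-1/e)$ factor appears, and together with the discretization and algorithmic slack this yields $(1-1/e-O(\eps))$.
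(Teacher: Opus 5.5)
Your plan is sound and follows the same skeleton as the paper. The shared pieces are: a geometric budget grid; separate plus/minus programs over marginals; the curves $y_iG_i^{-1}(y_i)$ and $(1-z_j)F_j^{-1}(z_j)$, which make OPT's marginals feasible at $B^*$; the concave-closure upper bound (Lemma~\ref{lem:upper_bound}); and concavity in $B$, so rounding $B^*$ down costs only $1/(1+\eps)$.

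The genuine difference is how you obtain the single $(1-1/e)$ factor. The paper (Lemma~\ref{lem:pp_apx}) applies the correlation gap to the \emph{exact} multilinear optimum $q^*_{ml}$, which it cannot compute in polynomial time. The greedy solver cited in Appendix~\ref{apd:compute} only gets $(1-1/e)$ of that optimum, so taken literally the paper's chain yields $(1-1/e)^2$ in polynomial time. You instead measure continuous greedy directly against $\max_{\by'\in P}u^+(\by')$. This is valid for any down-closed convex $P$ with a linear-optimization oracle, because $\sum_k y'_k\,\partial_k U(\by)\ge u^+(\by')-U(\by)$ for monotone submodular $u$. It makes the runtime and the $(1-1/e-O(\eps))$ guarantee hold simultaneously.

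Your discretization is also cleaner. Concavity of the concave-closure value function $\Phi_p^+(B)$ on the plus side (a concave objective over nested convex sets), plus mere monotonicity on the minus side, suffices. The paper's proof of Lemma~\ref{lem:concavity} instead asserts that the multilinear extension is concave, which is false in general: the extension $y_1+y_2-y_1y_2$ of $\min\{|S|,1\}$ has an indefinite Hessian. Keep only your surrogate version, since concavity of $U_p$ along nonnegative directions does not give concavity of the multilinear value function.

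Two steps need repair.

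First, the minus-side computation does not work as you describe it. $R_j(z)=(1-z)F_j^{-1}(z)$ rises up to the monopoly exclusion quantile $z_j^{\mathrm{mon}}$ and then falls. So $\{\mathbf z:\sum_j R_j(z_j)\ge B\}$ is neither up- nor down-closed. In admission variables $1-z_j$ the objective is decreasing, so continuous greedy does not apply, and guessing large contributors does not address this. Use the forgone-revenue packing form the paper points to. Let $\tilde C_j(z)=R_j^{\max}-R_j(\max\{z,z_j^{\mathrm{mon}}\})$, which is convex and nondecreasing in $z$. Then the set $\{\mathbf z:\sum_j\tilde C_j(z_j)\le R_{\max}-B\}$:
\begin{itemize}
\item is down-closed and convex;
\item contains OPT's $\hat{\mathbf z}$, because $\tilde C_j(z)\le R_j^{\max}-R_j(z)$;
\item has every point liftable to $\max\{z_j,z_j^{\mathrm{mon}}\}$ coordinatewise, which meets revenue $B$ without lowering $U_m$.
\end{itemize}
Your continuous-greedy bound then goes through with no guessing.

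Second, ``the region below $B_{\min}$ is absorbed by $\{0\}$'' is not automatic. When $B^*<B_{\min}$, concavity together with $\Phi_p^+(0)\ge 0$ gives no ratio bound. You need one of two fixes:
\begin{itemize}
\item the paper's normalization $W(x)\le(1+\eps)W(0)$ on $(0,B_{\min}]$; or
\item an explicit choice of $B_{\min}$ with $\Phi_p^+(B_{\min})-u_p(\emptyset)\le\eps\,(u_p(\emptyset)+u_m(\cM))$, which makes the $B=0$ candidate near-optimal whenever $B^*<B_{\min}$, at the price of a $\log(1/B_{\min})$ term in the runtime.
\end{itemize}
If the right-hand side vanishes, then $u_m\equiv 0$ and $B^*$ can be moved to the top of the grid.
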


\paragraph{Proof overview}
We now provide the overall proof steps to show that the \BEPP mechanism is a constant-factor approximation of the optimal Bayesian mechanism for separable submodular utilities. The proof proceeds by defining a series of relaxations for the plus-agent and minus-agent subproblems, establishing an upper bound on the optimal welfare, and then showing that the expected welfare of our posted-price mechanism is within a constant factor of this bound.

We first formally define the two key relaxations of a set function, following the approaches used by~\cite{balkanski2016bayesian}. Let $u: 2^\cO \to \R_{\ge 0}$ be a set function on a ground set $\cO$. Let $\bp = (p_k)_{k \in \cO}$ be a vector of marginal probabilities.
We now define the multilinear extension of $u$.

\begin{algorithm}[tb]
\caption{\BEPP Mechanism}
\label{alg:bepp}
\textbf{Input}: Parameter $\eps > 0$
\begin{algorithmic}[1]
\STATE Compute $B_{\max}$, the maximum expected revenue extractable from the minus agents via Myerson's optimal auction.
\STATE Find minimal $M$ such that $(1+\eps)^M > B_{\max}$.
\STATE Define $\cB \leftarrow \{0\} \cup \{(1+\eps)^i : 0 \le i \le M-1\}$.
\FOR{each budget target $B_k \in \cB$}
    \STATE Find $\bq(B_k)$ from multilinear version of~\eqref{eq:opt-multi-plus}.
    \STATE Find $\br(B_k)$ from multilinear version of~\eqref{eq:opt-multi-minus}.
    \STATE $W^{\mathrm{ml}}(B_k) \leftarrow u_p(\bq(B_k)) + u_m(\br(B_k))$.
\ENDFOR
\STATE Find the best budget $B^* \leftarrow \arg\max_{B_k \in \cB} W^{\mathrm{ml}}(B_k)$.
\STATE Set $t^p_i \leftarrow G_i^{-1}(q_i(B^*))$ for each $i \in \cN$ and $t^m_j \leftarrow F_j^{-1}(r_j(B^*))$ for each $j \in \cM$.
\STATE \textbf{return} posted prices $(\bt^p, \bt^m)$: pay $t^p_i$ to plus agent $i$ upon acceptance, and charge $t^m_j$ from minus agent $j$ upon acceptance.
\end{algorithmic}
\end{algorithm}

\begin{definition}[Multilinear Extension]
The \textbf{multilinear extension} of $u$, denoted $U(\bp)$, is the expected value of the function when each element $k \in \cO$ is included in a set $S$ independently with probability $p_k$.
\begin{align*}
    U(\bp) = \Exu{S \sim \bp}{u(S)} = \sum_{S \subseteq \cO} u(S) \prod_{k \in S} p_k \prod_{k \notin S} (1-p_k).
\end{align*}
\end{definition}

The key feature in analyzing approximation factors derived from multilinear extension is its relation to the following concave relaxation of $u$.

\begin{definition}[Concave Closure]
The \textbf{concave closure} of $u$, denoted $U^+(\bp)$, is the maximum expected value of the function over any \emph{correlated} distribution $\mathcal{D}$ on subsets of $\cO$ with marginal probabilities $\bp$.
\begin{align*}
    U^+(\bp) = \max_{\mathcal{D} \text{ with marginals } \bp} \Exu{S \sim \mathcal{D}}{u(S)}
\end{align*}
\end{definition}
Importantly, note here that the probability vectors can be correlated unlike the multilinear extension.
The connection between these two relaxations is the following notion of correlation gap.
\begin{theorem}[\cite{agrawal2010correlation}]\label{thm:corr_gap}
For any non-negative, monotone submodular function $u(\cdot)$, its multilinear extension and concave closure satisfy:
\begin{align*}
    \min_{\bp} \frac{U(\bp)}{U^+(\bp)} \ge 1 - \frac{1}{e}
\end{align*}
\end{theorem}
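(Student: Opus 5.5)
The plan is to follow the standard proof of the correlation gap bound of~\cite{agrawal2010correlation}. We compare the independent distribution with marginals $\bp$ against an arbitrary correlated distribution $\mathcal{D}$ with the same marginals. Since $U^+(\bp)$ is a maximum over finitely many supports, it suffices to fix an optimal $\mathcal{D}$ and show $U(\bp) \ge (1-1/e)\,\Exu{S\sim\mathcal{D}}{u(S)}$. Write $\mathcal{D} = \sum_\ell \alpha_\ell \mathbb{I}[S=T_\ell]$ with $\sum_\ell \alpha_\ell = 1$ and $\sum_{\ell: k \in T_\ell}\alpha_\ell = p_k$.

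\textbf{Step 1: continuous greedy-type ODE along the independent path.} Consider $h(\tau) = U(\tau \bp)$ for $\tau\in[0,1]$, so that $h(0)=U(\zero)\ge 0$ and $h(1)=U(\bp)$.
\begin{itemize}
\item[] Let $R(\tau)$ be a random set drawn independently with marginals $\tau\bp$.
\item[] By multilinearity, $h'(\tau) = \sum_k p_k\,\partial_k U(\tau\bp) = \sum_k p_k\,\Ex{u(R(\tau)\cup\{k\}) - u(R(\tau)\setminus\{k\})}$.
\item[] Monotonicity lets us lower bound each partial derivative by $\Ex{u(R(\tau)\cup\{k\})-u(R(\tau))}$.
\end{itemize}
Regrouping $p_k = \sum_{\ell \ni k}\alpha_\ell$ gives
\[
h'(\tau) \ge \sum_\ell \alpha_\ell \sum_{k\in T_\ell}\Ex{u(R\cup\{k\})-u(R)}.
\]
Submodularity bounds the inner sum below by $\Ex{u(R\cup T_\ell) - u(R)}$. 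Monotonicity then gives $u(R\cup T_\ell)\ge u(T_\ell)$. Hence
\[
h'(\tau) \ge \Exu{\mathcal{D}}{u(S)} - h(\tau) = U^+(\bp) - h(\tau).
\]

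\textbf{Step 2: integrate.} From $\frac{d}{d\tau}\bigl(e^{\tau}h(\tau)\bigr)\ge e^{\tau}U^+(\bp)$ we get
\[
e^{1}h(1) - h(0) \ge (e-1)U^+(\bp).
\]
Using $h(0)\ge 0$ (nonnegativity), this yields $U(\bp)\ge(1-1/e)U^+(\bp)$. This holds for every $\bp$, which is the claimed minimum bound. When $U^+(\bp)=0$ the ratio is interpreted trivially, since then $U(\bp)\ge 0 = U^+(\bp)$.

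\textbf{Main obstacle.} The delicate point is Step 1: justifying $\partial_k U(\tau\bp) \ge \Ex{u(R\cup\{k\})-u(R)}$ and the submodular aggregation over $T_\ell$ while $R$ is random. I will handle this by arguing pointwise for each realization of $R$. For a fixed set $R$, submodularity gives $\sum_{k\in T_\ell}[u(R\cup\{k\})-u(R)] \ge u(R\cup T_\ell)-u(R)$, since marginal gains only shrink as elements of $T_\ell$ are added one at a time. Elements $k \in R$ contribute zero to this sum, so they cause no issue. Taking expectations over $R$ then preserves the inequality, which closes the argument.
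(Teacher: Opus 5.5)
Your proof is correct. The paper gives no proof of this statement; it imports it as a black box from \cite{agrawal2010correlation}, so yours is a genuinely different, self-contained route.

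You use the ODE argument from the continuous-greedy analysis of Calinescu, Chekuri, P\'al and Vondr\'ak. Along the ray $\tau\mapsto\tau\mathbf{p}$, monotonicity and submodularity give $h'(\tau)\ge\mathbb{E}_{\mathcal{D}}[u(S)]-h(\tau)$ for every correlated $\mathcal{D}$ with marginals $\mathbf{p}$. Integrating $e^{\tau}h(\tau)$ then yields the bound, in fact in the slightly sharper form $U(\mathbf{p})\ge(1-1/e)U^+(\mathbf{p})+u(\emptyset)/e$.

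Agrawal et al.\ instead obtain correlation-gap bounds from a general framework of cross-monotone, budget-balanced, summable cost-sharing schemes. That route is heavier, but it also covers non-submodular cost functions such as facility location and Steiner tree. Your argument is tailored to monotone submodular functions, which is exactly the case the paper needs in Lemma~\ref{lem:pp_apx}, and it would make that part of the paper self-contained.

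One small tidying point: you do not need an optimal $\mathcal{D}$ to exist, and the phrase ``finitely many supports'' is loose. Your inequality holds for every feasible $\mathcal{D}$, so simply take the supremum at the end.
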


We structure the proof as a series of lemmas.
First, we write concave closures of our original optimization problems for each side market.
Define $\bq = (q_1,\ldots, q_n)$ as a vector that represents the marginal probability that plus agent $i$ is selected, and similarly $\br = (r_1,\ldots, r_m)$ as a vector that represents the probability that minus agent $j$ is excluded.
In particular, we consider the following program for the plus agent side for each $B_k$:
\begin{align}
    (\text{PLUS})\quad  & \max_{\bq }  && u_p(\bq) \label{eq:opt-multi-plus}
    \\
    & \text{s.t.}  && \sum_{i \in \cN} q_i G_i^{-1}(q_i) \le B_k, \nonumber
\end{align}
and the following for the minus agent side:
\begin{align}
    (\text{MINUS})\quad  & \max_{\br} \quad &&u_m(\br) \label{eq:opt-multi-minus}\\
    &\text{s.t.} \quad && \sum_{j \in \cM} (1-r_j) F_j^{-1}(r_j) \ge B_k. \nonumber
\end{align}
Multilinear versions of the above programs require the probabilities to be independent, whereas concave closure allows them to be correlated.

We will compare each optimization problem with the expected utilities obtainable from each side market.
Formally, we will consider a class of mechanisms $\Mec(B)$ parameterized by $B$ such that its total expected payment to plus agents is at most $B$, and its total expected revenue from minus agents is at least $B$.
Given a budget $B$, we use the following notation:
\begin{itemize}
    \item $W_p(B)$ and $W_m(B)$: the optimal expected utilities achievable by any DSIC and IR mechanism in $\Mec(B)$, on the plus and minus side respectively. We write $W(B) = W_p(B) + W_m(B)$.
    \item $W_p^+(B)$ and $W_m^+(B)$: the optimal values of (PLUS) and (MINUS) under the concave closure.
    \item $W_p^{\mathrm{ml}}(B)$ and $W_m^{\mathrm{ml}}(B)$: the optimal values of (PLUS) and (MINUS) under the multilinear extension, with $\bq(B)$ and $\br(B)$ denoting the corresponding optimal marginal probabilities.
\end{itemize}
Throughout, we assume without loss of generality (by rescaling costs and values) that $W(x) \le (1+\eps) W(0)$ for all $x \in (0, 1]$.

The performance of any mechanism given a budget, including the optimal one, will be upper-bounded by the concave closure as shown by the following lemma:
\begin{lemma}[Concave relaxation] \label{lem:upper_bound}
Given $B > 0$ and a mechanism $\Mec(B)$ that is DSIC, IR and ex-ante BB, its expected utility $u_p$ from plus agents is upper bounded by $W_p^+(B)$ and $u_m$ (that from minus agents) is upper bounded by $W_m^+(B)$.
\end{lemma}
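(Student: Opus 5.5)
Fix a DSIC, IR mechanism in $\Mec(B)$, so its expected payment to plus agents is at most $B$ and its expected revenue from minus agents is at least $B$. The plan is to exhibit a correlated distribution over selected plus agents (resp. excluded minus agents) whose marginals are feasible for (PLUS) (resp. (MINUS)) and whose expected value of $u_p$ (resp. $u_m$) equals the mechanism's. The concave-closure value of that program is a maximum over such distributions, which gives the bound.

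\textbf{Plus side.} Let $q_i = \Pr[x_i(\bc,\bv)=1]$ be agent $i$'s ex-ante allocation probability. The joint law of the set $S_p$ of selected plus agents is a correlated distribution with marginals $\bq$, so the mechanism's expected $u_p$ is at most $U_p^+(\bq)$. It remains to check $\sum_i q_i G_i^{-1}(q_i) \le B$.

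By Lemma~\ref{lm:myerson-monotone}, the interim allocation $\bar x_i(c_i) = \E_{-i}[x_i]$ is nonincreasing in $c_i$. By Lemma~\ref{lm:virtual} together with IR (so $U_{i,0}\ge 0$), the expected payment satisfies $\E[t_i] \ge \E[\bar x_i(c_i)\psi_i(c_i)]$. I would then change variables to the quantile $s = G_i(c_i)$, under which $\E[\bar x(c)\psi(c)] = \int_0^1 \bar x(s)\,\frac{d}{ds}\bigl[s\,G_i^{-1}(s)\bigr]\,ds$. For a nonincreasing $\bar x$ with $\int_0^1 \bar x = q_i$, this integral is minimized by the threshold rule $\bar x = \mathbb{I}[s\le q_i]$. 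The reason is that $R(s)=sG_i^{-1}(s)$ is convex under regularity, since $\psi_i$ is increasing. The threshold rule gives $q_iG_i^{-1}(q_i)$. Summing over $i$ yields $\sum_i q_iG_i^{-1}(q_i) \le \sum_i \E[t_i] \le B$.

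\textbf{Minus side.} The argument is symmetric. Let $r_j = \Pr[x_j=0]$ be the exclusion probability. The set of excluded minus agents is a correlated distribution with marginals $\br$, so the expected $u_m$ is at most $U_m^+(\br)$. The interim allocation $\bar x_j(v_j)$ is nondecreasing with mean $1-r_j$. Revenue is $\E[\bar x_j\phi_j] - U_{j,0} \le \E[\bar x_j\phi_j]$. In quantile space, $\phi_j$ is the derivative of the concave revenue curve $(1-s)F_j^{-1}(s)$. Among nondecreasing rules with mean $1-r_j$, the maximum is attained by the threshold $\mathbb{I}[s\ge r_j]$, which gives $(1-r_j)F_j^{-1}(r_j)$. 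Hence $\sum_j (1-r_j)F_j^{-1}(r_j) \ge \sum_j \E[t_j] \ge B$, so $\br$ is feasible for (MINUS).

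\textbf{Main obstacle.} The crux is the single-agent extremal claim that the threshold rule is optimal for a fixed ex-ante probability. The cleanest route is integration by parts in quantile space. For the plus side, $\int_0^1 \bar x\, R'\,ds = \int_0^1 R(s)\,(-d\bar x(s))$, using $R(0)=0$ and boundary terms that vanish or have the right sign. The measure $-d\bar x$ is a mixture of thresholds, and Jensen's inequality with convex $R$ then shows that a single threshold at the mean is optimal. The minus side follows identically with concave revenue and reversed inequalities. I would also handle the edge case in which the mechanism is randomized beyond ex-post deterministic allocations. This is harmless, since only interim allocations and marginals enter the argument.
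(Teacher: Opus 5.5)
Your proposal is correct and follows the same route as the paper. You bound the mechanism's utility by the concave closure at its ex-ante marginals, then show those marginals are feasible because, for a fixed marginal, the posted-price threshold minimizes expected virtual cost (resp.\ maximizes expected virtual value); the paper only asserts this step, whereas you justify it via convexity of $s\,G_i^{-1}(s)$ and by keeping $U_{i,0},U_{j,0}\ge 0$ from IR. One cosmetic slip: in your quantile parameterization $\frac{d}{ds}\bigl[(1-s)F_j^{-1}(s)\bigr] = -\phi_j(F_j^{-1}(s))$, not $+\phi_j$, but your argument only uses that $\phi_j$ is nondecreasing in $s$ and integrates it over $[r_j,1]$, so the bound $(1-r_j)F_j^{-1}(r_j)$ is unaffected.
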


Then, we consider multilinear versions of the programs (PLUS) and (MINUS), and solving them gives us prices to offer as depicted in Algorithm~\ref{alg:bepp}.
Thanks to Theorem~\ref{thm:corr_gap}, these prices guarantee a constant-factor approximation to the concave closure given $B$.

\begin{lemma}[Posted pricing given a budget] \label{lem:pp_apx}
For any budget $B$, offering prices $\bt^p = G^{-1}(\bq(B))$ to plus agents guarantees an expected utility of at least $(1-1/e) \cdot W_p^+(B)$ from plus agents, and offering $\bt^m = F^{-1}(\br(B))$ to minus agents guarantees an expected utility of at least $(1-1/e) \cdot W_m^+(B)$ from minus agents.
\end{lemma}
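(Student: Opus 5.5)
The plan is to treat the two sides separately and argue in two steps. First, posted prices realize exactly the product distribution whose marginals the multilinear program optimizes. Second, the multilinear optimum dominates $(1-1/e)$ times the concave-closure optimum, by Theorem~\ref{thm:corr_gap}.

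\textbf{Step 1 (posted prices realize independent marginals).} Fix $B$, and let $\bq = \bq(B)$ be the optimizer of the multilinear version of (PLUS). Offer plus agent $i$ the take-it-or-leave-it payment $t^p_i = G_i^{-1}(q_i)$. Agent $i$ accepts exactly when $c_i \le t^p_i$, which happens with probability $G_i(G_i^{-1}(q_i)) = q_i$, using continuity of $G_i$. Costs are independent across agents, so the set of selected plus agents is distributed as $S \sim \bq$ with independent inclusions. The expected plus-side utility is therefore exactly $U_p(\bq) = W_p^{\mathrm{ml}}(B)$.

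The expected payment is $\sum_i q_i G_i^{-1}(q_i) \le B$, so the posted prices are feasible for the budget. Symmetrically, offer minus agent $j$ the price $t^m_j = F_j^{-1}(r_j)$. The agent is excluded exactly when $v_j < t^m_j$, which has probability $r_j$, and these events are independent across agents. The expected revenue is $\sum_j (1-r_j) F_j^{-1}(r_j) \ge B$, and the expected minus-side utility equals $U_m(\br) = W_m^{\mathrm{ml}}(B)$.

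\textbf{Step 2 (comparing the two programs).} Let $\bq^+$ be an optimizer of (PLUS) under the concave closure, so $W_p^+(B) = U_p^+(\bq^+)$. The constraint in (PLUS) depends only on the marginal vector, not on the correlation structure. Hence $\bq^+$ is also feasible for the multilinear version. This gives
\[
W_p^{\mathrm{ml}}(B) \ge U_p(\bq^+) \ge (1-1/e)\, U_p^+(\bq^+) = (1-1/e)\, W_p^+(B),
\]
where the second inequality is Theorem~\ref{thm:corr_gap}, applied to the monotone nonnegative submodular $u_p$. The same chain with $\br^+$ and $u_m$ handles the minus side. Combined with Step 1, this proves the lemma.

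\textbf{Main obstacle.} The delicate point is existence and attainment of the optima. We need the feasible sets to be compact, and we need the maxima to be attained so that $\bq(B)$ is well defined. The feasible sets are compact, since $G_i^{-1}$ and $F_j^{-1}$ are continuous on $[0,1]$ for continuous distributions with positive density. Both $U$ and $U^+$ are continuous. If attainment were an issue, I would work with near-optimal solutions and take limits.

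A second subtlety is whether the payment constraint $q_i G_i^{-1}(q_i)$ faithfully captures the expected payment of a posted price. For a posted price it does, because the payment is exactly $t^p_i$ on acceptance. I would also note that computing $\bq(B)$ exactly is not claimed here: the lemma assumes the optimal multilinear solution. The only probabilistic ingredient used is the independence of agents' types.
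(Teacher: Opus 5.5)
Your proposal is correct and follows essentially the same route as the paper's proof. Both use the chain $U_p(\bq(B)) \ge U_p(\bq^+) \ge (1-1/e)\,U_p^+(\bq^+)$, which rests on the concave-closure optimizer being feasible for the multilinear program and on Theorem~\ref{thm:corr_gap}, mirrored on the minus side. Your Step~1 makes explicit what the paper only asserts ``by definition'': the prices $G_i^{-1}(q_i)$ and $F_j^{-1}(r_j)$ induce independent selection and exclusion with exactly those marginals, so the realized utility equals the multilinear value.
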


Therefore, one can conclude that if we know the expected budget used by the optimal mechanism, the corresponding posted pricing mechanism gives us the desired result.
On the other hand, this is not known in advance, so we generate a finite list of candidate optimal budgets and select the candidate budget that maximizes the expected utility.
Since this list cannot always include the optimal budget due to the continuity of the entire budget space, we require the following concavity of the optimal expected utility with respect to the budget.
\begin{lemma}[Concavity of the optimal utility] \label{lem:concavity}
Under regular agent distributions, the plus-agent's optimal utility $W_p(B)$ and the minus-agent's optimal utility $W_m(B)$ are both concave functions of $B$.
\end{lemma}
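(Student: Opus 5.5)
The plan is a mixing (time-sharing) argument, which works for any class of mechanisms closed under randomization. I treat the plus side in detail; the minus side is symmetric. By definition, $W_p(B)$ is the supremum of $\Ex{u_p(S_p)}$ over DSIC, IR mechanisms whose expected total payment to plus agents is at most $B$. Fix budgets $B_1, B_2 \ge 0$ and $\alpha \in [0,1]$, and let $\Mec_1$ and $\Mec_2$ be mechanisms that are $\delta$-optimal for $W_p(B_1)$ and $W_p(B_2)$.

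The key step is to build a lottery mechanism $\Mec_\alpha$. Using a public coin tossed before bids are solicited (or, equivalently, independent of bids), it runs $\Mec_1$ with probability $\alpha$ and $\Mec_2$ otherwise. Its properties follow directly.
\begin{enumerate}
\item It is DSIC and IR. After the coin is realized, each agent faces a DSIC and IR mechanism, so truthful reporting is dominant pointwise in the coin.
\item Its expected payment is $\alpha B_1 + (1-\alpha)B_2$, by linearity of expectation.
\item Its expected utility is $\alpha\Ex{u_p}_{\Mec_1} + (1-\alpha)\Ex{u_p}_{\Mec_2}$. This holds even for combinatorial $u_p$, because the coin only mixes the distributions over outcomes.
\end{enumerate}
Taking $\delta\to0$ then gives
\begin{align*}
W_p(\alpha B_1+(1-\alpha)B_2) \ge \alpha W_p(B_1)+(1-\alpha)W_p(B_2).
\end{align*}

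If the class $\Mec(B)$ is intended to contain only deterministic mechanisms, the public-coin trick may not be available. In that case I would use a second route that relies on regularity. By Lemma~\ref{lm:virtual}, the expected payment of a DSIC mechanism to plus agent $i$ equals $\Ex{x_i\psi_i(c_i)}$. Given interim allocation probability $q_i$, regularity implies that this payment is minimized by a threshold rule, with cost $q_i G_i^{-1}(q_i)$. The cost function $q\mapsto qG^{-1}(q)$ is convex, since its derivative is $\psi(G^{-1}(q))$, which is increasing under regularity. Hence the feasible set $\{\bq : \sum_i q_iG_i^{-1}(q_i)\le B\}$ grows in a jointly convex way in $(\bq,B)$. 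The value function should then be concave if the objective, as a function of interim marginals, is concave, as it is for the concave closure $U_p^+$. This ties $W_p$ to $W_p^+$ via Lemma~\ref{lem:upper_bound}.

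The main obstacle is this second route: the optimal DSIC utility is not a function of the interim marginals alone, because correlation across agents matters. I would first try to argue that the optimum can be taken to be a randomized mixture of threshold mechanisms, which would reduce the second route to the first. For the minus side, the same lottery construction applies: the expected revenue $\alpha B_1+(1-\alpha)B_2$ still meets the target, and the utility from excluded agents mixes linearly. Under the second route, the revenue function $(1-r)F^{-1}(r)$ is concave in $r$ under regularity, so the superlevel constraint set is convex and the same conclusion follows.
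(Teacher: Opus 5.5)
Your first route is correct, and it takes a genuinely different path from the paper's. Mixing two near-optimal DSIC, IR mechanisms with a public coin proves concavity of $W_p$ and $W_m$ exactly as the main text defines them: the best DSIC, IR value under an ex-ante payment constraint (resp.\ revenue constraint). One small correction: the mixture's expected payment is \emph{at most} $\alpha B_1+(1-\alpha)B_2$, and its expected revenue is \emph{at least} that amount, not equal to it.

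The paper argues differently. It identifies $W_p(B)$ with the value of the marginal program $\max_{\bq} U_p(\bq)$ subject to $\sum_i q_iG_i^{-1}(q_i)\le B$. It uses regularity to make the cost curves $q_iG_i^{-1}(q_i)$ convex, asserts that the multilinear extension $U_p$ is concave, and concludes from concavity of a concave program's value function in its right-hand side. For the minus side it rewrites the revenue target as a forgone-revenue budget $R_{\max}-B$ and composes with an affine map.

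Your route needs neither regularity nor any structure on $u_p,u_m$, and it treats both sides identically. It yields exactly what Lemma~\ref{lem:budget_error} consumes, namely concavity of the true $W=W_p+W_m$. It says nothing, however, about the program values $W_p^{\mathrm{ml}},W_m^{\mathrm{ml}}$. Those are what the algorithm evaluates and what the shadow-price arguments of Appendix~\ref{apd:comp-statics} rely on.

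The paper's route aims at those values, but its key step does not hold as written. The Hessian of a multilinear extension has zero diagonal and, for submodular $u_p$, nonpositive off-diagonal entries. It is therefore indefinite unless $u_p$ is modular; for example, $1-(1-q_1)(1-q_2)$ is convex along $e_1-e_2$. So $U_p$ is concave only along nonnegative directions.

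The convex-program argument does work with the concave closure $U_p^+$ in place of $U_p$, giving concavity of $W_p^+$; this is what your second route observes. You are also right that this cannot by itself reach $W_p$, since the DSIC optimum is not a function of interim marginals alone. The paper itself works with $[0,1]$-valued allocations and randomized (ironed) mechanisms, so the randomized class is the right one, and your first route suffices on its own.
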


The above lemma results in the following lemma, implying that the error from the finiteness of the budget candidates can be made sufficiently small.
\begin{lemma}[Error from budget discretization] \label{lem:budget_error}
Let $W(B) = W_p(B) + W_m(B)$ be the total utility function given $B$. Let $B_k$ and $B_{k+1}$ be consecutive grid points in $\mathcal{B}$ such that $B_k \le B^* \le B_{k+1}$. Then, we have $W(B_k) \ge (1-\eps)W(B^*)$.
\end{lemma}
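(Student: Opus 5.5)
We argue from the concavity of $W$ (Lemma~\ref{lem:concavity}) together with the geometric spacing of the grid $\cB$. There are two cases, according to whether $B^*$ lies in the first cell $[0,1]$ or in a later cell $[(1+\eps)^{i},(1+\eps)^{i+1}]$.

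\emph{Case $B_k \ge 1$.} Here $B_k=(1+\eps)^i$ and $B^*\le B_{k+1}=(1+\eps)B_k$. Since $W=W_p+W_m$ is a sum of concave functions, it is concave on $[0,\infty)$, and it is nonnegative because $u_p,u_m\ge 0$. A nonnegative concave function $h$ on $[0,\infty)$ satisfies $h(\theta x)\ge \theta h(x)$ for $\theta\in[0,1]$: write $\theta x=\theta\cdot x+(1-\theta)\cdot 0$ and use $h(0)\ge 0$. Take $\theta=B_k/B^*\ge 1/(1+\eps)$. Then
\begin{align*}
W(B_k)\ \ge\ \frac{B_k}{B^*}\,W(B^*)\ \ge\ \frac{1}{1+\eps}\,W(B^*)\ \ge\ (1-\eps)\,W(B^*),
\end{align*}
as required.

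\emph{Case $B_k=0$.} Here $B^*\in[0,1]$. The normalization assumed before Lemma~\ref{lem:upper_bound} states that $W(x)\le(1+\eps)W(0)$ for all $x\in(0,1]$. Hence
\begin{align*}
W(0)\ \ge\ \frac{W(B^*)}{1+\eps}\ \ge\ (1-\eps)\,W(B^*).
\end{align*}

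\emph{Boundary.} We should also check that $B^*$ actually falls in some cell. Any ex-ante BB mechanism uses a budget of at most $B_{\max}$, because the expected revenue from minus agents is at most Myerson's optimum. The grid covers $[0,(1+\eps)^{M-1}]$, with $(1+\eps)^M>B_{\max}$. If $B^*$ exceeds the last grid point, we take $B_{k+1}=(1+\eps)^M$ as a virtual endpoint, and the first case applies unchanged. Alternatively, since $W_p$ is nondecreasing in $B$, $W_m$ is nonincreasing in $B$, and $W$ is concave, one can restrict attention to the relevant range.

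\emph{Main obstacle.} The delicate point is justifying that $W$ is concave on all of $[0,B_{\max}]$ and nonnegative at $0$, so that the scaling inequality $h(\theta x)\ge\theta h(x)$ is valid. This rests on Lemma~\ref{lem:concavity} and on the domain of the minus side being well defined for every $B$ in that range. Once this is in place, each of the two cases takes one line.
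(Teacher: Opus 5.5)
Your proof is correct and follows the paper's argument: concavity of $W$ applied to $B_k=\frac{B_k}{B^*}B^*+\bigl(1-\frac{B_k}{B^*}\bigr)\cdot 0$, dropping the $W(0)\ge 0$ term, and then using $B_k/B^*\ge 1/(1+\eps)\ge 1-\eps$. You also treat the first cell separately, via the normalization $W(x)\le(1+\eps)W(0)$ on $(0,1]$, and handle $B^*$ beyond the last grid point; this is a worthwhile addition, because the paper's bound $B_k/B^*\ge 1/(1+\eps)$ fails when $B_k=0$ and the paper leaves that case implicit.
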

Combining the lemmas, let $B_{\textnormal{OPT}}$ be the expected budget used by the optimal Bayesian mechanism, and let $B_k, B_{k+1}$ be consecutive grid points bracketing it ($B_k \le B_{\textnormal{OPT}} \le B_{k+1}$). Then
\begin{align*}
W^{\mathrm{ml}}(B^*) &\ge W^{\mathrm{ml}}(B_k) \\&\ge (1-1/e) \cdot W^+(B_k) &&\text{(Theorem~\ref{thm:corr_gap})}\\
&\ge (1-1/e) \cdot W(B_k) &&\text{(Lemma~\ref{lem:upper_bound})}\\
&\ge (1-1/e)(1-\eps) \cdot W(B_{\textnormal{OPT}}) &&\text{(Lemma~\ref{lem:budget_error})}\\
&= (1-1/e)(1-\eps) \cdot \textnormal{OPT},
\end{align*}
which yields the claimed $(1-1/e - O(\eps))$-approximation.

\section{Conclusion}
In this paper, we introduced and analyzed a novel two-sided mechanism design framework for self-financing events, where a designer's goal is to maximize an exogenous utility function by curating a set of participants. This problem requires balancing the recruitment of costly, high-impact plus agents with revenue-generating minus agents who may detract from the event's quality, all under a budget-balanced constraint.

Our analysis addresses this design problem primarily within the Bayesian setting. For the case of additive utilities, we fully characterize the optimal mechanisms that are incentive-compatible, IR, and BB. Extending our results to more general objectives, we then provide constant-factor approximation mechanisms for utility functions that are separable and submodular.

Key avenues for future work include extending our analysis beyond submodular separable utility functions to model the complex interplay between participants. Broadening the scope of prior-free results for this framework, by providing robust guarantees for submodular utilities or by relaxing the large market assumption, could also enhance its practical applicability.

\section*{Acknowledgements}
This work is partially supported by DARPA expMath, NSF AF:Small \#2218678, NSF AF:Small \#2114269, Army-Research Laboratory (ARL) \#W911NF2410052, and MURI on Algorithms, Learning and Game Theory.

\bibliographystyle{named}
\bibliography{ref}

\appendix
\onecolumn

\section{Further applications}\label{apd:fur-appl}
Notably, our framework subsumes broader applications beyond budget-balanced event curation.
We introduce several examples to motivate the practicality of our approach.

\paragraph{Carbon market}
Our framework also models the complex trade-offs in modern environmental policy. Consider a regulator tasked with reducing carbon emissions in a budget-neutral way. The success of their policy is measured by the net reduction in atmospheric carbon. On one hand, the regulator seeks to procure carbon abatement services from a market of plus agents. These agents, ranging from landowners paid to reforest their property to tech companies developing green energy, each face private costs to implement their solutions. While these initiatives provide direct positive environmental utility, they require significant public expenditure, compelling the regulator to design efficient procurement auctions to buy the most abatement for their budget. On the other hand, the regulator needs to fund these subsidies by regulating \emph{minus agents}---the existing industrial polluters. By designing a carbon market, the regulator generates revenue from firms that have a high private value for emitting, but this activity imposes a negative environmental externality. The regulator's ultimate goal is to design this market to find the optimal carbon price, curating a portfolio of emissions allowances and abatement projects that maximizes environmental quality, ensuring the system is self-financing.

\paragraph{Urban planning}
As another application, consider the challenge faced by a city planner aiming to foster a vibrant and livable urban environment. The success of a city is often measured by the quality of life it offers its residents, which depends on a careful balance of economic activity and public amenities. On one hand, a city seeks to enhance its appeal by procuring high-value public works and services from a diverse set of plus agents. These agents, such as construction firms contracted to build new parks or non-profit organizations that run cultural centers, each have private costs to complete their projects. While their contributions generate significant positive value for residents, they must be compensated via public funds, forcing the city to design procurement mechanisms to select the best projects without overspending. On the other hand, the city must fund these investments by fostering a strong economic base through \emph{minus agents} such as industrial facilities or high-density commercial developments that provide essential tax revenue but may impose negative externalities. The city planner's ultimate goal is to use tools like land auctions and procurement mechanisms to find the optimal mix of commercial and public projects, maximizing the city's overall livability while ensuring that public investments are sustainably funded by the tax revenue they help generate.

\section{Remaining Proofs in Section~\ref{sec:additive}}\label{apd:additive-proofs}
\subsection{Proof of Lemma~\ref{lm:myerson-monotone}}
\begin{proof}[Proof of Lemma~\ref{lm:myerson-monotone}]
    The proof closely follows the standard proof of Myerson's lemma, but we include the details for completeness.
    We first prove the monotonicity condition for plus agents.
    The DSIC condition~\eqref{eq:dsic_positive} must hold for any pair of reports, say $c_1$ and $c_2$.
    \begin{align*}
        t_i(c_1, b_{-i}) - c_1 \cdot x_i(c_1, b_{-i}) &\ge t_i(c_2, b_{-i}) - c_1 \cdot x_i(c_2, b_{-i}) \\
        t_i(c_2, b_{-i}) - c_2 \cdot x_i(c_2, b_{-i}) &\ge t_i(c_1, b_{-i}) - c_2 \cdot x_i(c_1, b_{-i})
    \end{align*}
    Rearranging both inequalities to isolate the payment difference, $t_i(c_2, b_{-i}) - t_i(c_1, b_{-i})$, yields:
    \begin{align*}
        c_2 [x_i(c_2, b_{-i}) - x_i(c_1, b_{-i})] &\ge t_i(c_2, b_{-i}) - t_i(c_1, b_{-i}) \\
        c_1 [x_i(c_2, b_{-i}) - x_i(c_1, b_{-i})] &\le t_i(c_2, b_{-i}) - t_i(c_1, b_{-i})
    \end{align*}
    Combining these two:
    \begin{align*}
     c_1 [x_i(c_2, b_{-i}) - x_i(c_1, b_{-i})] \le c_2 [x_i(c_2, b_{-i}) - x_i(c_1, b_{-i})],
    \end{align*}
    which is equivalent to
    \begin{align*}
        (c_2 - c_1) [x_i(c_2, b_{-i}) - x_i(c_1, b_{-i})] \le 0.
    \end{align*}
    Assuming $c_2 > c_1$, then $x_i(c_2, b_{-i}) \le x_i(c_1, b_{-i})$, \ie it is monotone decreasing on the plus agent's cost.

    The argument for minus agents is symmetric. The DSIC condition from Equation~\eqref{eq:dsic_negative} must hold for any two reported valuations, say $v_1$ and $v_2$.
    \begin{align*}
        v_1 \cdot x_j(v_1, b_{-j}) - t_j(v_1, b_{-j}) &\ge v_1 \cdot x_j(v_2, b_{-j}) - t_j(v_2, b_{-j}) \\
        v_2 \cdot x_j(v_2, b_{-j}) - t_j(v_2, b_{-j}) &\ge v_2 \cdot x_j(v_1, b_{-j}) - t_j(v_1, b_{-j})
    \end{align*}
    Rearranging both inequalities to isolate the payment difference $t_j(v_2, b_{-j}) - t_j(v_1, b_{-j})$:
    \begin{align*}
        v_1 [x_j(v_2, b_{-j}) - x_j(v_1, b_{-j})] &\le t_j(v_2, b_{-j}) - t_j(v_1, b_{-j}) \\
        v_2 [x_j(v_2, b_{-j}) - x_j(v_1, b_{-j})] &\ge t_j(v_2, b_{-j}) - t_j(v_1, b_{-j})
    \end{align*}
    Combining these two inequalities yields:
    \begin{align*}
        v_1 [x_j(v_2, b_{-j}) - x_j(v_1, b_{-j})] \le v_2 [x_j(v_2, b_{-j}) - x_j(v_1, b_{-j})]
    \end{align*}
    which is equivalent to
    \begin{align*}
        (v_2 - v_1) [x_j(v_2, b_{-j}) - x_j(v_1, b_{-j})] \ge 0.
    \end{align*}
    Assuming $v_2 > v_1$, this implies that $x_j(v_2, b_{-j}) \ge x_j(v_1, b_{-j})$, i.e., the allocation rule is monotone increasing in the minus agent's valuation.

    \paragraph{Payment rule}
    Recall that
    \begin{align*}
        U_i(\tilde{c}_i | c_i, b_{-i}) = t_i(\tilde{c}_i, b_{-i}) - c_i \cdot x_i(\tilde{c}_i, b_{-i}).
    \end{align*}
    Let $U_i(c_i, b_{-i})$ be the plus agent $i$'s equilibrium utility.
    By the Envelope Theorem:
    \begin{align*}
        \frac{d U_i(c_i, b_{-i})}{d c_i} &= -x_i(c_i, b_{-i})
    \end{align*}
    Integrating from $c_i$ to $c_{\max}$ gives the agent's utility:
    \begin{align*}
        U_i(c_i, b_{-i}) &= \int_{c_i}^{c_{\max}} x_i(s, b_{-i}) \,ds + U_i(c_{\max}, b_{-i})
    \end{align*}
    Recalling that
    \begin{align*}
        U_i(c_i, b_{-i})  = t_i(c_i,b_{-i}) - c_i \cdot x_i(c_i, b_{-i}),
    \end{align*}
    we finally have
    \begin{align*}
        t_i(c_i, b_{-i}) = c_i \cdot x_i(c_i, b_{-i}) + \int_{c_i}^{c_{\max}} x_i(s, b_{-i}) \,ds + U_i(c_{\max}, b_{-i}).
    \end{align*}

    Let $U_j(v_j, b_{-j})$ be the minus agent $j$'s equilibrium utility given their true valuation $v_j$ and others' bids $b_{-j}$. By the Envelope Theorem, the derivative of their utility with respect to their type is:
    \begin{align*}
        \frac{d U_j(v_j, b_{-j})}{d v_j} &= x_j(v_j, b_{-j})
    \end{align*}
    Integrating from $v_{\min}$ to $v_j$ gives the agent's total utility:
    \begin{align*}
        U_j(v_j, b_{-j}) &= \int_{v_{\min}}^{v_j} x_j(s, b_{-j}) \,ds + U_j(v_{\min}, b_{-j})
    \end{align*}
    Recalling the definition of utility for a minus agent,
    \begin{align*}
        U_j(v_j, b_{-j})  = v_j \cdot x_j(v_j, b_{-j}) - t_j(v_j,b_{-j}),
    \end{align*}
    we can solve for the payment rule $t_j$ by substitution:
    \begin{align*}
        t_j(v_j, b_{-j}) = v_j \cdot x_j(v_j, b_{-j}) - \left( \int_{v_{\min}}^{v_j} x_j(s, b_{-j}) \,ds + U_j(v_{\min}, b_{-j}) \right).
    \end{align*}
    This completes the characterization of the payment rule for minus agents.
\end{proof}

\subsection{Proof of Lemma~\ref{lm:virtual}}
\begin{proof}[Proof of Lemma~\ref{lm:virtual}]
    First, if $c_i = c_{\max}$ or $v_j = 0$, assigning positive allocation probability cannot improve the objective in an optimal mechanism. Thus, without loss of generality, we assume $U_{i,0} = U_{j,0} = 0$.

    From Equation~\eqref{eq:payment_negative}, the payment is:
    \begin{align*}
        t_j(v_j) = v_j \cdot x_j(v_j) - \int_{v_{\min}}^{v_j} x_j(s) \,ds
    \end{align*}
    The designer's expected revenue from agent $j$ is:
    \begin{align*}
        \mathbb{E}_{v_j}[t_j(v_j)] &= \mathbb{E}_{v_j} [ v_j x_j(v_j) ] - \mathbb{E}_{v_j} \left[ \int_{v_{\min}}^{v_j} x_j(s) \,ds \right]
    \end{align*}
    Again, we use integration by parts on the second term. Note that this is equivalent to
    \begin{align*}
        \int_{v_{\min}}^{v_{\max}} \left( \int_{v_{\min}}^{v_j} x_j(s) \,ds \right) f_j(v_j) \,dv_j.
    \end{align*}
    Using integrating by parts:
    \begin{align*}
        &= \left[ \left( \int_{v_{\min}}^{v_j} x_j(s, b_{-j}) \,ds \right) \cdot \left( -(1-F_j(v_j)) \right) \right]_{v_{\min}}^{v_{\max}} - \int_{v_{\min}}^{v_{\max}} \left( -(1-F_j(v_j)) \right) x_j(v_j, b_{-j}) \,dv_j
        \\
        &=(0-0) + \int_{v_{\min}}^{v_{\max}} x_j(v_j, b_{-j}) (1 - F_j(v_j)) \,dv_j
        \\
        &= \mathbb{E}_{v_j} \left[ x_j(v_j) \frac{1 - F_j(v_j)}{f_j(v_j)} \right]
    \end{align*}
    Substituting this back:
    \begin{align*}
        \mathbb{E}_{v_j}[t_j(v_j)] &= \mathbb{E}_{v_j} [ v_j x_j(v_j) ] - \mathbb{E}_{v_j} \left[ x_j(v_j) \frac{1 - F_j(v_j)}{f_j(v_j)} \right] \\
        &= \mathbb{E}_{v_j} \left[ x_j(v_j) \left( v_j - \frac{1 - F_j(v_j)}{f_j(v_j)} \right) \right]
    \end{align*}

    The derivation for the plus agents' expected payment is symmetric, showing that their expected cost equals their expected virtual cost.
    From Equation~\eqref{eq:payment_positive}, assuming $U_{i,0}=0$, the payment to agent $i$ is:
    \begin{align*}
        t_i(c_i) = c_i \cdot x_i(c_i) + \int_{c_i}^{c_{\max}} x_i(s) \,ds
    \end{align*}
    The designer's expected cost for agent $i$ is therefore:
    \begin{align*}
        \mathbb{E}_{c_i}[t_i(c_i)] &= \mathbb{E}_{c_i} [ c_i x_i(c_i) ] + \mathbb{E}_{c_i} \left[ \int_{c_i}^{c_{\max}} x_i(s) \,ds \right]
    \end{align*}
    We use integration by parts on the second term. Let $g_i(\cdot)$ and $G_i(\cdot)$ be the PDF and CDF of costs for agent $i$.
    \begin{align*}
        \mathbb{E}_{c_i} \left[ \int_{c_i}^{c_{\max}} x_i(s) \,ds \right] &= \int_{c_{\min}}^{c_{\max}} \left( \int_{c_i}^{c_{\max}} x_i(s) \,ds \right) g_i(c_i) \,dc_i \\
    \end{align*}
    Using integration by parts with $u = \int_{c_i}^{c_{\max}} x_i(s) \,ds$ and $dv = g_i(c_i) \,dc_i$, we get $du = -x_i(c_i) \,dc_i$ and $v = G_i(c_i)$. This yields:
    \begin{align*}
        &= \left[ \left( \int_{c_i}^{c_{\max}} x_i(s) \,ds \right) G_i(c_i) \right]_{c_{\min}}^{c_{\max}} - \int_{c_{\min}}^{c_{\max}} G_i(c_i) (-x_i(c_i)) \,dc_i \\
        &= (0-0) + \int_{c_{\min}}^{c_{\max}} x_i(c_i) G_i(c_i) \,dc_i \\
        &= \mathbb{E}_{c_i} \left[ x_i(c_i) \frac{G_i(c_i)}{g_i(c_i)} \right]
    \end{align*}
    Substituting this result back into the expected cost expression:
    \begin{align*}
        \mathbb{E}_{c_i}[t_i(c_i)] &= \mathbb{E}_{c_i} [ c_i x_i(c_i) ] + \mathbb{E}_{c_i} \left[ x_i(c_i) \frac{G_i(c_i)}{g_i(c_i)} \right] \\
        &= \mathbb{E}_{c_i} \left[ x_i(c_i) \left( c_i + \frac{G_i(c_i)}{g_i(c_i)} \right) \right]
    \end{align*}
    This completes the derivation, showing that the expected payment to a plus agent is their expected virtual cost, $\psi_i(c_i) = c_i + G_i(c_i)/g_i(c_i)$, conditional on being selected.
\end{proof}

\subsection{Proof of Theorem~\ref{thm:optimal_mechanism}}
\begin{proof}[Proof of Theorem~\ref{thm:optimal_mechanism}]
Recall that The designer's objective is to choose an allocation rule $x$ to maximize the expected utility, subject to the constraints that the mechanism is incentive compatible (IC) and budget balanced (BB).
\begin{align*}
\max_{x} \quad &\mathbb{E} \left[ \sum_i x_i(c_i) p_i + \sum_j x_j(v_j) n_j \right] \\
\text{subject to} \quad &(IC,IR) \quad x \text{ is implementable (monotone) and IR} \\
&(BB) \quad \mathbb{E} \left[ \sum_j t_j(v_j) - \sum_i t_i(c_i) \right] \ge 0
\end{align*}

Note that the budget-balancedness constraint can be written as a function over the allocation rule:
\begin{align*}
\mathbb{E} \left[ \sum_j x_j(v_j) \phi_j(v_j) - \sum_i x_i(c_i) \psi_i(c_i) \right] \ge 0
\end{align*}

This is a constrained optimization problem.
We introduce a Lagrange multiplier $\lambda \ge 0$ for the budget-balancedness constraint.
The Lagrangian function $L$ is:
\begin{align*}
    L(\bx, \lambda) &=
    \Ex{ \sum_i x_i p_i + \sum_j x_j n_j} + \lambda \left( \Ex{ \sum_j x_j \phi_j - \sum_i x_i \psi_i }\right)
    \\
    &=
    \Ex{ \sum_i x_i p_i + \sum_j x_j n_j + \lambda \left( \sum_j x_j \phi_j - \sum_i x_i \psi_i \right) }
\end{align*}

For simplicity, let us write $L(\bx, \lambda) = W(\bx) + \lambda B(\bx)$.
We now regroup the terms by agent type:
\begin{align}
    L(\bx, \lambda) = \mathbb{E} \left[ \sum_i x_i(c_i) \underbrace{\left(p_i - \lambda \psi_i(c_i)\right)}_{\text{virtual surplus of plus agent}} + \sum_j x_j(v_j) \underbrace{\left(n_j + \lambda \phi_j(v_j)\right)}_{\text{virtual surplus of minus agent}} \right] \label{eq:lagrangian}
\end{align}

For a solution $(\bx^*, \lambda^*)$ to be optimal, it must satisfy the following KKT conditions:

\begin{enumerate}
    \item (Stationarity) The Lagrangian is maximized at $\bx^*$. For our problem, this means $\bx^*$ must be the allocation rule that maximizes $L(\bx, \lambda^*)$. As can be seen from~\eqref{eq:lagrangian}, this is the rule where each agent is selected if and only if their virtual surplus is non-negative.
    \begin{align*}
        x_i^*(c_i) = 1 &\iff p_i - \lambda^* \psi_i(c_i) \ge 0 \\
        x_j^*(v_j) = 1 &\iff n_j + \lambda^* \phi_j(v_j) \ge 0
    \end{align*}

    \item (Primal Feasibility) The solution $\bx^*$ must be feasible for the original problem. This simply means the budget constraint must be satisfied.
    \begin{align*}
        B(\bx^*) \ge 0
    \end{align*}

    \item (Dual Feasibility) The Lagrange multiplier for a $\ge$ constraint must be non-negative.
    \begin{align*}
        \lambda^* \ge 0
    \end{align*}

    \item (Complementary Slackness) This is the most insightful condition. It states that the multiplier must be zero if the constraint is not binding (i.e., if there is a surplus).
    \begin{align*}
        \lambda^* \cdot B(\bx^*) = 0
    \end{align*}
\end{enumerate}

The KKT conditions, particularly Complementary Slackness, give us a concrete algorithm for finding the optimal $\lambda^*$. Let $B(\lambda)$ denote the expected budget surplus when the allocation rule is determined by a given $\lambda$, \ie $B(\lambda) = B(\bx^*(\lambda))$.

Suppose we set $\lambda = 0$ and the budget constraint binds, \ie $B(\lambda) \ge 0$.
Then this is the optimal allocation rule.
For instance, if $p_i \ge 0$ and $n_j \le 0$, the optimal allocation rule $\bx^*(0)$ is to recruit every plus agent and no minus agent, according to the stationarity condition.

Now suppose we test $\lambda=0$ and find that $B(0) < 0$. This violates the Primal Feasibility condition, so $(\lambda=0)$ cannot be the optimal solution.
From Dual Feasibility, we know $\lambda^* \ge 0$. Since $\lambda^* \neq 0$, we must have $\lambda^* > 0$.
Now we turn to Complementary Slackness: $\lambda^* \cdot B(\bx^*(\lambda^*)) = 0$.
Since we know $\lambda^* > 0$, we must have $B(\bx^*(\lambda^*)) = 0 $.
Note that $B(\lambda)$ is a non-decreasing function of $\lambda$, and $u(\bx(\lambda))$ is a non-increasing function of $\lambda$ since by the stationarity condition, increasing $\lambda$ only recruits less plus agent and more minus agents.
Thus, the optimal mechanism is to find the minimal $\lambda^*$ that binds $B(\lambda^*) =0$, and compute corresponding $\bx^*(\lambda^*)$.

Finally, we must confirm that the derived allocation rule is implementable.
\begin{itemize}
    \item For a plus agent $i$, the allocation rule is $x_i=1 \iff p_i \ge \lambda \psi_i(c_i)$. Since $\psi_i(c_i)$ is assumed to be an increasing function of $c_i$, a higher cost $c_i$ makes this condition harder to satisfy. Thus, $x_i(c_i)$ is non-increasing in $c_i$.
    \item For a minus agent $j$, the rule is $x_j=1 \iff n_j + \lambda \phi_j(v_j) \ge 0$. Since $\phi_j(v_j)$ is an increasing function of $v_j$, a higher valuation $v_j$ makes this condition easier to satisfy. Thus, $x_j(v_j)$ is non-decreasing in $v_j$.
\end{itemize}
Since the allocation rules are monotone, they are implementable. This completes the proof that the mechanism described in Theorem~\ref{thm:optimal_mechanism} is optimal.
\end{proof}

\subsection{Proof of Theorem~\ref{thm:bic-expost-optimal}}

\begin{proof}{Proof of Theorem~\ref{thm:bic-expost-optimal}}
\emph{\((\bx^*,\bt^*)\) is optimal for BIC \& ex-ante BB.}
We write the BIC \& ex-ante BB program explicitly, eliminate payments using the payment/virtual identities (Lemmas~\ref{lm:myerson-monotone} and~\ref{lm:virtual}), form the Lagrangian, and solve it via KKT. This yields the same threshold allocation as in Theorem~\ref{thm:optimal_mechanism}.

\textit{BIC \& ex-ante BB as an allocation-only program.}
Under BIC and interim IR in a single-parameter environment, the interim selection probabilities are monotone in own type and payments satisfy the envelope identities. In particular (Lemma~\ref{lm:virtual}),
\begin{align*}
\Ex{t_i} \;=\; \Ex{x_i\,\psi_i(c_i)}
\quad\text{and}\quad
\Ex{t_j} \;=\; \Ex{x_j\,\phi_j(v_j)}.
\end{align*}
Therefore the designer's BIC problem with ex-ante BB can be written purely over allocation rules:
\begin{align*}
\max_{\bx}\quad & \Ex{\sum_{i\in\cN} p_i\,x_i \;+\; \sum_{j\in\cM} n_j\,x_j}\\
\text{s.t.}\quad &
\Ex{\sum_{j\in\cM} x_j\,\phi_j(v_j) \;-\; \sum_{i\in\cN} x_i\,\psi_i(c_i)} \;\ge\; 0
\quad\text{(ex-ante BB)}\\
& \bx \text{ is BIC-implementable and interim IR.}
\end{align*}
BIC-implementability here is equivalent to (interim) monotonicity of each agent's selection probability in own type; interim IR fixes the integration constants in the envelope formula. Feasible payments then exist once \(\bx\) is fixed.

\textit{Lagrangian and separation.}
Introduce a multiplier \(\lambda\ge 0\) for the ex-ante BB constraint. The Lagrangian is
\begin{align*}
\mathcal{L}(\bx,\lambda)
&=\Ex{\sum_{i\in\cN} p_i\,x_i + \sum_{j\in\cM} n_j\,x_j}
+\lambda\,\Ex{\sum_{j\in\cM} x_j\,\phi_j(v_j) - \sum_{i\in\cN} x_i\,\psi_i(c_i)}\\
&=\Ex{\sum_{i\in\cN} x_i\,\big(p_i-\lambda\,\psi_i(c_i)\big)
\;+\; \sum_{j\in\cM} x_j\,\big(n_j+\lambda\,\phi_j(v_j)\big)}.
\end{align*}
For fixed \(\lambda\), the integrand is linear in \(x_k\in[0,1]\); absent implementability constraints, pointwise maximization sets \(x_k=1\) whenever its coefficient is nonnegative and \(x_k=0\) otherwise.

\textit{Stationarity and implementability (threshold structure).}
Define the virtual-surplus coefficients
\begin{align*}
\Gamma_i^{+}(c_i;\lambda)\equiv p_i-\lambda\,\psi_i(c_i),
\qquad
\Gamma_j^{-}(v_j;\lambda)\equiv n_j+\lambda\,\phi_j(v_j).
\end{align*}
Under regularity, \(\psi_i(\cdot)\) and \(\phi_j(\cdot)\) are (weakly) increasing, so the pointwise maximizers
\begin{align*}
x_i^\lambda(c_i) \;=\; \mathbb{I}\!\big[\Gamma_i^{+}(c_i;\lambda)\ge 0\big],
\qquad
x_j^\lambda(v_j) \;=\; \mathbb{I}\!\big[\Gamma_j^{-}(v_j;\lambda)\ge 0\big]
\end{align*}
are nonincreasing in \(c_i\) and nondecreasing in \(v_j\), respectively. Hence their interim selection probabilities are monotone in own type and are BIC-implementable (with interim IR payments given by the envelope integrals of Lemma~\ref{lm:myerson-monotone}). In fact, because these rules depend only on own type, they are ex-post monotone as well.

\textit{Choosing \(\lambda^*\): primal/dual feasibility and complementary slackness.}
Let
\begin{align*}
B(\lambda)\;\equiv\;\Ex{\sum_{j\in\cM} x_j^\lambda\,\phi_j(v_j) - \sum_{i\in\cN} x_i^\lambda\,\psi_i(c_i)}
\end{align*}
be the induced (expected) budget surplus. As \(\lambda\) increases, the plus-side thresholds become stricter and the minus-side thresholds weaker, so \(B(\lambda)\) is weakly increasing and right-continuous. Let
\begin{align*}
\lambda^* \;=\; \min\{\lambda\ge 0:\; B(\lambda)\ge 0\}.
\end{align*}
By construction,
\begin{align*}
\text{(primal feasibility)}\quad B(\lambda^*)\ge 0,\qquad
\text{(dual feasibility)}\quad \lambda^*\ge 0,\qquad
\text{(compl. slackness)}\quad \lambda^*\,B(\lambda^*)=0,
\end{align*}
and \(\bx^*\equiv x^{\lambda^*}\) maximizes \(\mathcal{L}(\cdot,\lambda^*)\) among BIC-implementable allocations (stationarity). Therefore \(\bx^*\) solves the BIC \& ex-ante BB program.

\textit{Equality with the DSIC allocation.}
The same Lagrangian, threshold stationarity, and KKT logic underlie Theorem~\ref{thm:optimal_mechanism} for DSIC \& ex-ante BB. Hence the optimizer here is exactly the same threshold allocation
\begin{align*}
x_i^*(c_i) \;=\; \mathbb{I}\!\big[p_i-\lambda^*\,\psi_i(c_i)\ge 0\big],
\qquad
x_j^*(v_j) \;=\; \mathbb{I}\!\big[n_j+\lambda^*\,\phi_j(v_j)\ge 0\big],
\end{align*}
with payments given by the (BIC/DSIC) envelope integrals. This proves that \((\bx^*,\bt^*)\) attains the BIC \& ex-ante BB optimum.
\medskip

\emph{Transformation and Optimality}
Now we apply Lemma~\ref{lem:cai-transform} to \((\bx^*,\bt^*)\). We obtain \((\bx^*,\hat{\bt})\) that is BIC, interim IR, and ex-post SBB, with the same allocation rule and the same interim payments. Since the designer's objective depends only on \(\bx\), the expected utility of \((\bx^*,\hat{\bt})\) equals that of \((\bx^*,\bt^*)\).

By Lemma~\ref{lem:cai-transform} (and its converse), under BIC \& interim IR the ex-ante BB and ex-post SBB feasible regions are linearly related via payment transformations while preserving \(\bx\). Therefore,
\begin{align*}
OPT_{\textnormal{BIC, ex-ante BB}} \;=\; OPT_{\textnormal{BIC, ex-post BB}}.
\end{align*}
Since \((\bx^*,\bt^*)\) attains \(OPT_{\textnormal{BIC, ex-ante BB}}\) and \((\bx^*,\hat{\bt})\) is feasible for BIC \& ex-post BB with the same value, \((\bx^*,\hat{\bt})\) is optimal for BIC \& ex-post BB.
\end{proof}

\section{Additive Utility: Ex-post BB and irregular Distributions}\label{sec:add-expost-irr}

\subsection{Ex-post BB DSIC Mechanism}

While the optimal Bayesian mechanism from Theorem~\ref{thm:optimal_mechanism} is budget-balanced in expectation (ex-ante), for many applications, a stronger ex-post guarantee is desirable.
Following the approach~\cite{balkanski2016bayesian}, one can easily transform an ex-ante budget-balanced mechanism into an ex-post budget-balanced mechanism with small sacrifice in the utility, under a proper assumption.
To this end, we first introduce the following \emph{large market} assumption.

\begin{definition}[Two-Sided Large Market Assumption]
A market is considered \textbf{$k$-large} with respect to an ex-ante policy $(\bx, \bt)$ with an expected budget transfer of $B = \Ex{\sum_{i \in \cN} t_i} = \Ex{\sum_{j \in \cM} t_j}$, if for any realization of costs and valuations, the payment to any single agent is bounded:
\begin{itemize}
    \item For every plus agent $i \in \cN$, $t_i \le B/k$.
    \item For every minus agent $j \in \cM$, $t_j \le B/k$.
\end{itemize}
\end{definition}
This ensures that the outcome is not dependent on any single whale agent, allowing for the application of concentration bounds.

To extend our ex-ante budget-balanced mechanism to one that is ex-post budget-balanced with high probability, we rely on standard contention resolution scheme for sums of independent bounded random variables, as formulated by~\cite{chekuri2011submodular}.

First, we state the bound for the plus-agent side, which is a direct application of their result.

\begin{lemma}[\cite{chekuri2011submodular}] \label{lem:cost_concentration}
Given a budget $B$ and a set of independent random variables $\{t_i\}_{i \in \cN}$ for the payments to plus agents satisfying:
\begin{itemize}
    \item $\Exu{}{\sum_i t_i} \le (1-\eps)B$,
    \item $t_i \in [0, B/k]$ for all $i$,
    \item $k > 2/\eps$,
\end{itemize}
then the probability that the realized total cost exceeds the budget is small. Specifically, for $\eps \in (0, 1/2)$:
\begin{align*}
    \Pr{\sum_i t_i \le (1-1/k)B} \ge 1-e^{-\eps^2(1-\eps)k/12}.
\end{align*}
\end{lemma}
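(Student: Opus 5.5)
The plan is to reduce the statement to a standard multiplicative Chernoff bound for independent $[0,1]$-valued random variables. The one extra ingredient is the usual observation that the upper-tail Chernoff bound still holds when the true mean is replaced by any upper bound $\mu_H$ on it. The bound can be taken as standard; the work is in choosing the deviation parameter and checking the arithmetic.

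\emph{Normalization.} Set $X_i = k t_i / B \in [0,1]$, which are independent, and let $X = \sum_i X_i$. Then $\Ex{X} \le (1-\eps)k =: \mu_H$. The event $\sum_i t_i > (1-1/k)B$ is exactly the event $X > k-1$. Write the threshold as $k - 1 = (1+\delta)\mu_H$, so that
\begin{align*}
\delta = \frac{k-1}{(1-\eps)k} - 1 = \frac{\eps k - 1}{(1-\eps)k}.
\end{align*}

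\emph{Bounding $\delta$.} Since $k > 2/\eps$, we have $\eps k - 1 > \eps k/2$, so $\delta > \eps/(2(1-\eps)) > 0$. Also $\delta \le \eps/(1-\eps) < 1$ because $\eps < 1/2$. Hence $\delta \in (0,1)$, which is the range where the form $\Pr{X \ge (1+\delta)\mu_H} \le \exp(-\delta^2 \mu_H/3)$ applies.

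\emph{Plugging in.} Substituting the bounds on $\delta$ and $\mu_H$ gives
\begin{align*}
\frac{\delta^2 \mu_H}{3} \ge \frac{\eps^2}{4(1-\eps)^2}\cdot\frac{(1-\eps)k}{3} = \frac{\eps^2 k}{12(1-\eps)} \ge \frac{\eps^2(1-\eps)k}{12}.
\end{align*}
Therefore $\Pr{\sum_i t_i > (1-1/k)B} \le e^{-\eps^2(1-\eps)k/12}$, and taking complements gives the claim.

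\emph{Main obstacle.} The only delicate step is justifying Chernoff with $\mu_H$ in place of the true mean. I would handle this in one of two ways. The first is to note that the moment-generating-function proof only uses $\Ex{e^{sX}} \le e^{(e^s-1)\Ex{X}} \le e^{(e^s-1)\mu_H}$ for $s > 0$. The alternative is to pad $X$ with extra independent $[0,1]$ variables so that its mean equals $\mu_H$, which can only increase the upper-tail probability. Beyond that, it is enough to keep $\delta$ in $(0,1)$ and to use $k > 2/\eps$ to absorb the $-1$ in the threshold.
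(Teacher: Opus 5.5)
Your proof is correct. The paper gives no proof of this lemma; it cites \cite{chekuri2011submodular} and calls the statement a direct application. So your write-up replaces a citation rather than offering an alternative route. It is the standard multiplicative Chernoff argument with the upper bound $\mu_H=(1-\eps)k$ on the mean.

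The exponent in the statement is what that computation gives with the cruder estimate
$\delta=\frac{\eps k-1}{(1-\eps)k}\ge \eps-\frac{1}{k}\ge\frac{\eps}{2}$,
since this yields $\delta^2\mu_H/3\ge \eps^2(1-\eps)k/12$ directly. Your sharper bound $\delta>\eps/(2(1-\eps))$ gives the slightly stronger exponent $\eps^2k/(12(1-\eps))$, which you correctly weaken to the stated one.

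Both of your justifications for replacing the true mean by $\mu_H$ are valid:
\begin{itemize}
\item the moment-generating-function bound $\mathbb{E}[e^{sX}]\le e^{(e^s-1)\mu_H}$ for $s>0$;
\item padding with extra independent $[0,1]$ variables, which can only increase the upper tail.
\end{itemize}
The only unstated edge case is $B=0$; there every $t_i=0$ and the claim is trivial.
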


Next, we require an analogous concentration bound for the minus-agent side. The challenge is that we need a \emph{lower tail} bound on revenue (i.e., the probability that revenue is unexpectedly low), whereas the above lemma is an \emph{upper tail} bound on costs. We formally state the required lemma and then prove that it follows from the first.

\begin{lemma}\label{lem:revenue_concentration}
Given a revenue target $B$ and a set of independent random variables $\{t_j\}_{j \in \cM}$ for the revenues from minus agents satisfying:
\begin{itemize}
    \item  $\Ex{\sum_{j \in \cM}t_j} \ge (1+\eps)B$,
    \item $t_j \in [0, B/k]$ for all $j$,
    \item  $k > 2/\eps$,
\end{itemize}
then the probability that the realized total revenue falls below the target is small. Specifically, for $\eps \in (0, 1/2)$:
\begin{align*}
    Pr(\sum_{j \in \cM} t_j \ge (1+1/k)B ) \ge 1-e^{-\eps^2(1-\eps)k/12}.
\end{align*}
\end{lemma}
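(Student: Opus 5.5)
We prove the lower-tail bound directly with a multiplicative Chernoff bound. This is cleaner than literally reducing to Lemma~\ref{lem:cost_concentration} by complementation. Complementation would use $s_j = B/k - t_j \in [0,B/k]$. The total of the $s_j$ then depends on $m$, which spoils the correspondence between the two budgets.

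\emph{Normalization.} Set $X_j = k\,t_j/B \in [0,1]$ and $X=\sum_{j\in\cM}X_j$. The $X_j$ are independent because the $t_j$ are. The hypothesis gives $\mu := \Ex{X} \ge (1+\eps)k =: \mu_L$. The bad event $\sum_j t_j < (1+1/k)B$ is exactly $X < k+1$.

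\emph{Placing the threshold below the mean.} Since $k > 2/\eps$, we have $1 < \eps k/2$. Hence $k+1 < k(1+\eps/2)$, and this equals $(1-\delta)\mu_L$ with
\begin{align*}
\delta = 1-\frac{1+\eps/2}{1+\eps} = \frac{\eps}{2(1+\eps)} \in (0,1).
\end{align*}

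\emph{Applying Chernoff.} We use the standard lower-tail bound for sums of independent $[0,1]$ variables. It states $\Pr{X \le (1-\delta)\mu_L} \le \exp(-\delta^2\mu_L/2)$ whenever $\mu_L \le \Ex{X}$. This holds because lowering the mean only helps; formally, one can scale the $X_j$ down to mean exactly $\mu_L$, which is stochastically dominated. This yields
\begin{align*}
\Pr{X < k+1} \le \exp\Big(-\frac{\eps^2}{8(1+\eps)^2}(1+\eps)k\Big) = \exp\Big(-\frac{\eps^2 k}{8(1+\eps)}\Big).
\end{align*}

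\emph{Matching the stated exponent.} For $\eps\in(0,1/2)$ we have $8(1+\eps)\le 12$, so the exponent is at least $\eps^2 k/12 \ge \eps^2(1-\eps)k/12$. Taking complements gives $\Pr{\sum_j t_j \ge (1+1/k)B} \ge 1-e^{-\eps^2(1-\eps)k/12}$, as claimed.

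The only delicate step is the threshold placement. One must convert the additive slack $1/k$ in the target into a multiplicative deviation $\delta=\Theta(\eps)$ from the mean, and this is exactly where $k>2/\eps$ and $\eps<1/2$ enter. The remaining steps are routine constant bookkeeping. If one insists on deriving the result from Lemma~\ref{lem:cost_concentration}, the same Chernoff inequality underlies that lemma, so the argument above is essentially its mirror image.
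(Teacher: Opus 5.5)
Your proof is correct and takes the route the paper intends. The paper omits the argument, saying only that it follows from Lemma~\ref{lem:cost_concentration} by the two-sided nature of the Chernoff--Hoeffding bound. Your direct lower-tail multiplicative Chernoff argument is exactly that mirror-image derivation written out, including the threshold placement $k+1<(1-\delta)(1+\eps)k$ with $\delta=\eps/(2(1+\eps))$ and the pointwise scaling step that handles $\mathbb{E}[X]\ge(1+\eps)k$. It in fact gives the slightly stronger exponent $\eps^2k/(8(1+\eps))$, which dominates the stated $\eps^2(1-\eps)k/12$ for $\eps<1/2$.
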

We omit the proof as it is a direct extension Lemma~\ref{lem:cost_concentration} (see Lemma 4.15 by~\cite{chekuri2011submodular}) given the two-sided nature of Chernoff-Hoeffding bound.

Then, one can immediately reuse the machinery by~\cite{balkanski2016bayesian} to obtain the following result:
\begin{theorem}
    For $\eps \in (0,1/2)$, suppose the posted prices obtained by the multilinear extension programs for PLUS with budget $(1-\eps)B$ and MINUS with budget $(1+\eps)B$ satisfy $2/\eps \le k \le B/t_i$ for any $i \in \cN$ and $2/\eps \le k \le B/t_j$ for any $j \in \cM$.
    Then, this posted pricing mechanism is $(1-\eps)(1-e^{-\eps^2(1-\eps)k/12})$ approximation to the optimal Bayesian mechanism under regular distributions.
\end{theorem}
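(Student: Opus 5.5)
The plan rests on the fact that this appendix is in the \emph{additive} setting, $u(\bx)=\sum_i p_i x_i+\sum_j n_j x_j$. There the multilinear extension and the concave closure of each side coincide: both are the linear function $\sum_i p_i q_i$ (resp.\ its minus-side analogue). So Theorem~\ref{thm:corr_gap} costs nothing and there is no $(1-1/e)$ loss. The target is then $(1-\eps)$ from rescaling the budget, times $(1-e^{-\eps^2(1-\eps)k/12})$ from enforcing ex-post BB.

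\textbf{Step 1 (benchmark).} Let $B_{\textnormal{OPT}}$ be the expected transfer of the optimal ex-ante BB mechanism of Theorem~\ref{thm:optimal_mechanism}. Its value is at most $W(B_{\textnormal{OPT}})$. Repeating Lemma~\ref{lem:upper_bound} with linear objectives, this is at most the optimal value of the linear versions of (PLUS) and (MINUS) at budget $B_{\textnormal{OPT}}$. Those linear programs are exactly what the posted prices $G^{-1}(\bq)$, $F^{-1}(\br)$ realize in expectation, by Lemma~\ref{lem:pp_apx} with ratio $1$.

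\textbf{Step 2 (budget rescaling).} Pick $B$ so that the minus side's shifted target equals the benchmark, i.e.\ $(1+\eps)B=B_{\textnormal{OPT}}$. Then the MINUS program at $(1+\eps)B$ loses nothing. The PLUS program at $(1-\eps)B=\frac{1-\eps}{1+\eps}B_{\textnormal{OPT}}$ has value, by concavity of $W_p$ (Lemma~\ref{lem:concavity}) and $W_p(0)\ge0$,
\[
W_p\!\left(\tfrac{1-\eps}{1+\eps}B_{\textnormal{OPT}}\right)\;\ge\;\tfrac{1-\eps}{1+\eps}\,W_p(B_{\textnormal{OPT}}).
\]
This is a $(1-O(\eps))$ factor. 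I would state the bound with $\eps$ rescaled by a constant; alternatively, run the argument with shifts $(1\pm\eps/3)$ to land exactly on $(1-\eps)$.

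\textbf{Step 3 (ex-post enforcement).} Process the accepted agents sequentially, and stop paying or accepting plus agents once cumulative spending would exceed the revenue collected so far. This keeps the mechanism ex-post BB. It stays DSIC and IR because it is a posted price with a bid-independent stopping rule, as in \cite{balkanski2016bayesian}. Let $\mathcal{E}$ be the good event that $\sum_i t_i\le(1-1/k)B$ and $\sum_j t_j\ge(1+1/k)B$. The large-market hypothesis $t_i,t_j\le B/k$ with $k\ge 2/\eps$ lets us apply Lemmas~\ref{lem:cost_concentration} and~\ref{lem:revenue_concentration}, which give $\Pr[\mathcal{E}]\ge 1-e^{-\eps^2(1-\eps)k/12}$ (one would need a union bound or independence of the two sides; the two sides are independent, so the product form holds). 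On $\mathcal{E}$ no truncation happens.

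\textbf{Step 4 (additivity closes it).} Because utility is additive, expected utility is a sum over agents. Each plus agent's selection probability drops by at most a factor $\Pr[\mathcal{E}]$. Minus agents are untouched or only excluded more, which helps since $n_j\le0$. So by linearity of expectation the utility is at least $\Pr[\mathcal{E}]$ times the ex-ante posted-price utility. Combining with Steps 1--2 gives the $(1-\eps)(1-e^{-\eps^2(1-\eps)k/12})$ guarantee.

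\textbf{Main obstacle.} The hardest step is Step~4's per-agent claim. A plus agent's acceptance must be shown to be preserved with probability at least $\Pr[\mathcal{E}]$ conditional on that agent accepting. I would try to handle this by noting that each $t_i\le B/k$, so the slack $B/k$ in $\mathcal{E}$ absorbs the conditioning on any single agent.
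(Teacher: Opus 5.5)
Your outline is the route the paper intends. The paper's own justification is a one-line appeal to the machinery of \cite{balkanski2016bayesian} together with Lemmas~\ref{lem:cost_concentration} and~\ref{lem:revenue_concentration}. Step~1 and the truncation rule in Step~3 are fine, provided all minus-agent revenue is collected before any plus agent is paid; otherwise ``no truncation on $\mathcal{E}$'' fails. Your Step~4 idea is also right once made precise. For each plus agent $i$, use $\mathcal{E}_i=\{\sum_{i'\neq i}t_{i'}\le(1-1/k)B\}\cap\{\sum_{j}t_j\ge(1+1/k)B\}$. This event is independent of $c_i$, and the lemmas apply to it because the other agents' expected spending is still at most $(1-\eps)B$. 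On $\mathcal{E}_i\cap\{c_i\le t^p_i\}$, total spending is at most $(1-1/k)B+B/k=B$, so $i$ is served; this is exactly what the $(1-1/k)B$ threshold is for. What does not follow is the constant you conclude with.

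\textbf{The probability bound.} Independence of the two sides gives $\mathrm{Pr}[\mathcal{E}_i]\ge\bigl(1-e^{-\eps^2(1-\eps)k/12}\bigr)^2$, not $1-e^{-\eps^2(1-\eps)k/12}$. To get a single factor you must concentrate the net surplus $\sum_j t_j-\sum_{i'\neq i}t_{i'}$ directly. Its mean is at least $2\eps B$, its summands are bounded by $B/k$, and its variance is at most $B/k$ times the sum of the means. Bernstein's inequality then gives failure probability at most $e^{-\eps^2k/3}$, which suffices.

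\textbf{The budget-shift factor.} The factor $\frac{1-\eps}{1+\eps}$ from Step~2 is a genuine loss, and ``rescaling $\eps$'' does not remove it. Shifts $1\pm\eps/3$ define a different mechanism, need $k>6/\eps$, and weaken the exponent to $(\eps/3)^2(1-\eps/3)k/12$. In fact no choice of $B$ achieves $(1-\eps)$ in general. Take many minus agents with negligible $|n_j|$, so the optimum extracts essentially all of $B_{\max}$. Take many identical plus agents with nearly deterministic costs, so $W_p$ is essentially linear. Feasibility of (MINUS) at $(1+\eps)B$ then forces $(1-\eps)B\le\frac{1-\eps}{1+\eps}B_{\max}$, so the ratio tends to $\frac{1-\eps}{1+\eps}$ as $k\to\infty$.

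The paper's $(1-\eps)$ is inherited from the one-sided argument, which shrinks the budget to $(1-\eps)B$. It ignores that the minus side is asked for $(1+\eps)B$, which may exceed $B_{\max}$. Your choice $(1+\eps)B=B_{\textnormal{OPT}}$ is the right repair. With the net-surplus bound, your argument yields $\frac{1-\eps}{1+\eps}\bigl(1-e^{-\eps^2(1-\eps)k/12}\bigr)$, and that is what you should claim.
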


\subsection{Irregular Distributions}
If the agents' distributions are not regular, one can apply the standard ironing technique by~\cite{myerson1981optimal}.
Essentially, for the plus agent, we can define the cost curve of agent $i$ by $C_i(q_i) = q_iG_i^{-1}(q_i)$.
The standard ironing technique constructs a convex hull $\bar{C}_i(q_i)$ of the cost curve, and define ironed virtual costs $\bar{\phi}_i(c_i) = \bar{C}'_i(q_i)$ be the derivative of it, which becomes monotone given the ironed cost curve.
Then, \cite{myerson1981optimal} show that the expected payment (under any incentive compatible) to any agent is at least $\bar{C}_i(q_i)$ if agent $i$'s allocation probability is $q_i$, and this is further achievable by a simple randomized posted pricing mechanism.
For minus agent side, we can analogously define a revenue curve to be $R_j(p_j) = p_j F_j^{-1}(1-p_j)$, and follow the same arguments, \eg see the two-sided market counterpart by~\cite{myerson1983efficient}.

\begin{algorithm}[tb]
\caption{Optimal Mechanism for Additive Utility with Matroid Feasibility}
\label{alg:optimal_matroid}
\textbf{Input}: Reported bids $\bc$ and $\bv$, matroid $(E, \mathcal{I})$ over agents $E = \cN \cup \cM$
\begin{algorithmic}[1]
\STATE Find the smallest shadow price $\lambda^* \ge 0$ such that the ex-ante budget balance constraint is met:
$$ \Exu{}{\sum_{j \in \cM} t_j(\lambda^*)} \ge \Exu{}{\sum_{i \in \cN} t_i(\lambda^*)} $$

\STATE Define the final allocation rule $\bx^* = \bx(\lambda^*)$ based on the found shadow price:
\STATE \quad Compute virtual weights $w_i(\lambda^*) = p_i - \lambda^* \psi_i(c_i)$ for all $i \in \cN$.
\STATE \quad Compute virtual weights $w_j(\lambda^*) = n_j + \lambda^* \phi_j(v_j)$ for all $j \in \cM$.
\STATE \quad Sort all agents $k \in E$ by nonincreasing $w_k(\lambda^*)$.
\STATE \quad Initialize $S \leftarrow \emptyset$.
\STATE \quad \textbf{for each} agent $k$ in the above order:
\STATE \qquad \textbf{if} $w_k(\lambda^*) \le 0$ \textbf{ then break}
\STATE \qquad \textbf{if} $S \cup \{k\} \in \mathcal{I}$ \textbf{ then } $S \leftarrow S \cup \{k\}$.
\STATE \quad Set $x_k^* \leftarrow 1$ if $k \in S$, and $x_k^* \leftarrow 0$ otherwise.

\STATE Compute the final payments $\bt^* = \bt(\lambda^*)$ based on the allocation rule $\bx^*$ using the integral formulas in Equations~\eqref{eq:payment_positive} and~\eqref{eq:payment_negative}.

\STATE \textbf{return} $(\bx^*, \bt^*)$
\end{algorithmic}
\end{algorithm}

\section{Additive Utility under Matroid Constraint}\label{sec:additive-matroid}

In many practical applications, feasible allocations are restricted by structural constraints.
For example, in conference scheduling, only one paper per session or topic can be selected;
in sponsorship allocation, certain sponsors may not appear together;
and in team formation, skill overlaps or conflicts may restrict which participants can coexist.
These situations can often be modeled by \emph{matroid constraints} on the set of chosen agents.

Let $(E, \mathcal{I})$ be a matroid where $E = \mathcal{N} \cup \mathcal{M}$ is the ground set consisting of all positive and minus agents, and $\mathcal{I} \subseteq 2^E$ is the collection of feasible subsets satisfying the hereditary and augmentation properties.
As before, each plus agent $i \in \mathcal{N}$ has utility $p_i$ and cost $c_i$,
and each minus agent $j \in \mathcal{M}$ has utility $n_j$ and valuation $v_j$.
The mechanism must choose a feasible set $S \in \mathcal{I}$ of agents to maximize the utility while maintaining budget balance.

We show that a modified version of Algorithm \ref{alg:optimal_additive} provides the optimal utility given matroid constraints under regular distributions. For any fixed Lagrange multiplier $\lambda \ge 0$, define the \emph{virtual weights}
\[
w^\lambda_i(c_i) = p_i - \lambda \psi_i(c_i) \quad \text{for } i \in \mathcal{N}, \qquad
w^\lambda_j(v_j) = n_j + \lambda \phi_j(v_j) \quad \text{for } j \in \mathcal{M}.
\]
Then, recalling the proof of theorem \ref{thm:optimal_mechanism}, the goal of the allocation for a given $\lambda$ would be to maximize
\[
\sum_{i\in \mathcal{N}} x^\lambda_i(c_i) w^\lambda_i(c_i) + \sum_{j\in \mathcal{M}} x^\lambda_j(v_j) w^\lambda_j(v_j)
\]
with $x_i^\lambda$ and $x_j^\lambda$ values now needing to satisfy the matroid independence constraints. Thus, we can define an allocation rule for $\lambda$ that greedily selects agents in decreasing order of $w^\lambda_i(c_i)$ and $w^\lambda_j(v_j)$ weights,
subject to feasibility under $\mathcal{I}$. Since greedy selection is optimal under matroid constraints for additive utility, this results in the allocation rule that optimizes our objective value for a given value of $\lambda$.  Finally, to satisfy complementary slackness and find the overall optimum, we find the smallest value of $\lambda$ that can satisfy budget-balance conditions, similar to Algorithm \ref{alg:optimal_additive}. This allocation rule is described in Algorithm \ref{alg:optimal_matroid}.

The main adjustment needed from the proof of Theorem \ref{thm:optimal_mechanism} to complete the proof here is to show that
\begin{enumerate}
    \item The function $B(\lambda)$ calculating the budget surplus given $\lambda$ is non-decreasing, and
    \item The allocation rule $x_i^\lambda$ and $x_j^\lambda$ are implementable.
\end{enumerate}
Starting with the second point, we can first consider any plus agent. Then, for a given value of $\lambda$, whether it is chosen or not depends only on how it is placed in the ordering using $w_i^\lambda(c_i)=p_i - \lambda \psi_i(c_i)$. As $\psi_i(c_i)$ is assumed to be an increasing function of $c_i$, increasing $c_i$ can only decrease the weight of agent $i$, and push it further down the greedy ordering. So, $x_i^\lambda$ will be non-increasing in $c_i$. Similarly, $x_j^\lambda$ for $j \in \mathcal{M}$ will be non-decreasing in $v_j$, and therefore the allocation rule is implementable.

\begin{discussion}
    Note that without any set constraint, the interim payment for each agent can be easily calculated by Lemma~\ref{lm:virtual} since
    the optimal allocation simply allocates to every agent with nonnegative virtual valuation, \ie $x_i(c_i) = 1$ iff $\psi_i(c_i) = 1$ and $x_j(v_j) = 1$ iff $\phi_j(v_j)$ = 1.
    On the other hand, under the matroid constraint, we cannot simply rely on Lemma~\ref{lm:virtual} since the allocation could vary depending on the set constraint.
    A simple way to calculate the expected payment for each agent is to directly calculate the expected value of Equations~\eqref{eq:payment_positive} and~\eqref{eq:payment_negative} by considering every possible realization of $b_{-i}$ and $b_{-j}$.
    Note, however, that this requires exponential time.
    We leave it as a major open problem to whether it is possible to make it more efficient.
\end{discussion}

\section{Remaining Proofs in Section~\ref{sec:beyond}}\label{apd:beyond-proofs}

\subsection{Proof of Lemma~\ref{lem:upper_bound}}
\begin{proof}[Proof of Lemma~\ref{lem:upper_bound}]
The proof is analogous to~\cite{balkanski2016bayesian}, extending single-sided argument to two-sided one.

We first consider plus-agent side.
The optimal mechanism for the plus-agent subproblem, which achieves welfare $W_p(B)$, solves the following program:
\begin{align*}
    \max_{\bx_p(\cdot), \bt_p(\cdot)} \quad &\Ex{u_p(S_p)} \\
    \text{s.t.} \quad &\Ex{\sum_{i \in \cN} t_i} \le B \\
    &(\bx_p, \bt_p) \text{ is DSIC and IR.}
\end{align*}
We relax this program in three steps to find an upper bound on $W_p(B)$.
First, by Myerson's lemma, for any IC mechanism, the expected payment to a plus agent (seller) equals their expected virtual cost. The budget constraint becomes $\Ex{\sum_{i \in \cN} x_i \psi_i(c_i)} \le B$.

Next, an allocation rule $\bx_p(\cdot)$ induces a distribution $\mathcal{D}_p$ over selected sets $S_p$, with marginal probabilities $q_i = \Ex{x_i}$. By the definition of the concave closure, the expected utility is upper-bounded by the value of the concave closure on the marginals: $\Exu{S_p \sim \mathcal{D}_p}{u_p(S_p)} \le U_p^+(q)$.

Finally, for a given marginal selection probability $q_i$, the minimum possible expected virtual cost is achieved by a posted-price mechanism that offers price $\hat{c}_i = G_i^{-1}(q_i)$ (where $G_i$ is the CDF of costs). The expected payment in this case is $q_i G_i^{-1}(q_i)$. Since the true mechanism might be less efficient, we have $\Ex{x_i \psi_i(c_i)} \ge q_i G_i^{-1}(q_i)$. To relax the budget constraint (i.e., to make the feasible set larger), we replace the expected virtual cost with this lower bound.

Combining these yields a final relaxed program, the concave closure program for plus agents, whose value is an upper bound on $W_p(B)$:
\begin{align*}
    \max_{q \in [0,1]^n} \quad &U_p^+(q) \\
    \text{s.t.} \quad &\sum_{i \in \cN} q_i G_i^{-1}(q_i) \le B
\end{align*}

Now we provide analogous argument for the minus-agent side.
The optimal mechanism for the minus-agent subproblem, which achieves welfare $W_m(B)$, solves:
\begin{align*}
    \max_{\bx_m(\cdot), \bt_m(\cdot)} \quad &\Ex{u_m(E_m)} \\
    \text{s.t.} \quad &\Ex{\sum_{j \in \cM} t_j} \ge B \\
    &(\bx_m, \bt_m) \text{ is DSIC and IR.}
\end{align*}
Here, the decision is which agents to exclude, $E_m$. The revenue is generated by the admitted agents $S_m = \cM \setminus E_m$. Let $r_j = \Ex{x_j}$ be the marginal probability that agent $j$ is \emph{excluded}. We apply a symmetric sequence of relaxations.

First, by Myerson's lemma, the expected revenue from minus agents equals their expected virtual valuations. The constraint becomes $\Ex{\sum_{j \in S_m} \phi_j(v_j)} \ge B$.

The allocation rule over exclusions induces a distribution $\mathcal{D}_m$ on sets $E_m$, with marginal exclusion probabilities $r_j$. The expected utility is upper-bounded by the concave closure: $\Exu{E_m \sim \mathcal{D}_m}{u_m(E_m)} \le U_m^+(r)$.

Finally, the constraint is $\Ex{\sum_{j \in \cM \setminus E_m} \phi_j(v_j)} \ge B$. The probability that agent $j$ is admitted is $(1-r_j)$. To relax this constraint (make the feasible set of $r$ vectors larger), we must consider the \emph{maximum} possible expected revenue for a given admission probability $(1-r_j)$. This is achieved by admitting agents with the highest virtual valuations. A posted admission price $q_j$ such that $\text{Pr}[v_j \ge q_j] = 1-r_j$ (i.e., $F_j(q_j)=r_j$) would generate expected revenue of $q_j(1-r_j) = F_j^{-1}(r_j)(1-r_j)$. The true mechanism's revenue may be lower, so we replace the expected revenue with this upper bound to relax the constraint.

This yields the concave closure program for minus agents, whose value is an upper bound on $W_m(B)$:
\begin{align*}
    \max_{r \in [0,1]^m} \quad &U_m^+(r) \\
    \text{s.t.} \quad &\sum_{j \in \cM} (1-r_j) F_j^{-1}(r_j) \ge B
\end{align*}
Thus, we have established that the value of the optimal mechanism for each subproblem is upper-bounded by the value of its corresponding concave closure program.
\end{proof}

\subsection{Proof of Lemma~\ref{lem:pp_apx}}
\begin{proof}[Proof of Lemma~\ref{lem:pp_apx}]
We prove the result for each subproblem separately.

We first consider plus-agent side.
Let $q_{cc}^*$ be the optimal solution to the concave closure program for the plus-agent side, whose value $U_p^+(q_{cc}^*)$ is an upper bound on $W_p(B)$ by Lemma~\ref{lem:upper_bound}. Let $q_{ml}^*$ be the optimal solution to the corresponding multilinear extension program. The expected utility achieved by our mechanism $M_p(B)$ is, by definition, $U_p(q_{ml}^*)$. We can lower-bound this value as follows:
\begin{align*}
    U_p(q_{ml}^*) &\ge U_p(q_{cc}^*)  \\
    &\ge \left(1-\frac{1}{e}\right) U_p^+(q_{cc}^*) \tag{By Theorem~\ref{thm:corr_gap}} \\
    &\ge \left(1-\frac{1}{e}\right) W_p(B) \tag{By Lemma~\ref{lem:upper_bound}}
\end{align*}
Thus, the mechanism $M_p(B)$ achieves at least a $(1-1/e)$-approximation of $W_p(B)$.

The argument for minus-agent side is symmetric. Let $r_{cc}^*$ be the optimal solution to the concave closure program for the minus-agent side (maximizing $U_m^+(r)$ subject to the revenue constraint). We know its value $U_m^+(r_{cc}^*)$ is an upper bound on $W_m(B)$. Let $r_{ml}^*$ be the optimal solution to the corresponding multilinear extension program. The utility achieved by our mechanism $M_m(B)$ is $U_m(r_{ml}^*)$. We can lower-bound this value with the same chain of reasoning:
\begin{align*}
    U_m(r_{ml}^*) &\ge U_m(r_{cc}^*) \\
    &\ge \left(1-\frac{1}{e}\right) U_m^+(r_{cc}^*) \tag{By Theorem~\ref{thm:corr_gap}} \\
    &\ge \left(1-\frac{1}{e}\right) W_m(B) \tag{By Lemma~\ref{lem:upper_bound}}
\end{align*}
Thus, the mechanism $M_m(B)$ also achieves at least a $(1-1/e)$-approximation of its respective optimal value, $W_m(B)$.
\end{proof}

\subsection{Proof of Lemma~\ref{lem:concavity}}
\begin{proof}[Proof of Lemma~\ref{lem:concavity}]
The sum of concave functions is concave, so the total optimal utility function $W(B) = W_p(B) + W_m(B)$ will be concave if its components are. We provide the full proof for both $W_p(B)$ and $W_m(B)$.

First, consider the plus-agent side.
The optimal utility $W_p(B)$ is defined as the value of the multilinear extension program:
\begin{align*}
    W_p(B) = \max_{q \in [0,1]^n} \quad &U_p(q) \\
    \text{s.t.} \quad &\sum_{i \in \cN} q_i G_i^{-1}(q_i) \le B
\end{align*}
where $U_p(q)$ is the multilinear extension of the submodular utility function $u_p(\cdot)$.

This value function $W_p(B)$ is the result of maximizing a concave function over a convex set. First, the objective $U_p(q)$ is a concave function of the probability vector $q$, as it is the multilinear extension of a submodular function~\cite{balkanski2016bayesian}. Second, the expected cost for agent $i$, $C_i(q_i) = q_i G_i^{-1}(q_i)$, is a convex function for regular distributions. The total cost $C(q) = \sum_i C_i(q_i)$ is therefore also convex, making the feasible set $\mathcal{Q}_B = \{ q \mid C(q) \le B \}$ a convex set. Standard results in convex optimization show that the value function of a program maximizing a concave objective over an expanding convex set is concave in the constraint parameter. Thus, $W_p(B)$ is concave in $B$.

The proof for the minus-agent side is analogous.
The optimal utility $W_m(B)$ is the value of the program:
\begin{align*}
    \max_{\mathbf{r}} \quad &U_m(r) \\
    \text{s.t.} \quad &\sum_{j \in \cM} (1-r_j) F_j^{-1}(r_j) \ge B
\end{align*}
where $r_j$ is the probability of \emph{excluding} agent $j$, and $U_m(r)$ is the multilinear extension of the submodular utility $u_m(\cdot)$ defined over excluded agents.

To demonstrate concavity, we re-cast this as an equivalent procurement problem, which is structurally identical to the plus-agent case. Let $R_{max} = \sum_{j \in \cM} \mathbb{E}_{v_j}[\max_{q_j} \{q_j \mid v_j \ge q_j\}]$ be the maximum possible expected revenue from all minus agents. The revenue constraint can be rewritten in terms of \emph{forgone revenue}:
\begin{align*}
    \Exu{}{\text{Revenue from admitted agents}} &\ge B \\
    R_{max} - \Exu{}{\text{Forgone revenue from excluded agents}} &\ge B \\
    \Exu{}{\text{Forgone revenue from excluded agents}} &\le R_{max} - B
\end{align*}
The cost of excluding agent $j$ is the expected revenue we forego. Let this be $C_j^E(r_j)$. The minus-agent problem is now equivalent to:
\begin{align*}
    \max_{r \in [0,1]^m} \quad &U_m(r) \\
    \text{s.t.} \quad &\sum_{j \in \cM} C_j^E(r_j) \le R_{max} - B
\end{align*}
This is a procurement problem where we "buy" exclusions to gain utility $u_m$, and our budget for these costs is $B' = R_{max} - B$. From the analysis of the plus-agent side, we know that the optimal value of this program, let's call it $W_m^{procure}(B')$, is a concave function of its budget, $B'$.

Our original value function is $W_m(B) = W_m^{procure}(R_{max} - B)$. This is the composition of a concave function, $W_m^{procure}(\cdot)$, with an affine transformation of $B$. The composition of a concave function with an affine function is concave. Therefore, $W_m(B)$ is concave in $B$.

Since both $W_p(B)$ and $W_m(B)$ are concave, their sum $W(B)$ is also concave.
\end{proof}

\subsection{Proof of Lemma~\ref{lem:budget_error}}
\begin{proof}[Proof of Lemma~\ref{lem:budget_error}]
The function $W(B)$ is the sum of two concave functions, and is therefore concave. The point $B_k$ can be written as a convex combination of $0$ and $B^*$: $B_k = \frac{B_k}{B^*} B^* + (1 - \frac{B_k}{B^*}) \cdot 0$. By the definition of concavity:
\begin{align*}
    W(B_k) &\ge \frac{B_k}{B^*} W(B^*) + \left(1 - \frac{B_k}{B^*}\right) W(0)
\end{align*}
Since $W(0) \ge 0$, we can drop the second term. From the bracketing property, we have $\frac{B_k}{B^*} \ge \frac{1}{1+\eps} \ge 1-\eps$. Therefore:
$$ W(B_k) \ge (1-\eps) W(B^*) = (1-\eps) W_{OPT} $$
\end{proof}

\subsection{Proof of Theorem~\ref{thm:bepp}}
\begin{proof}[Proof of Theorem~\ref{thm:bepp}]
The proof for DSIC was established previously and relies on the fact that the final prices are computed ex-ante and are fixed from the agents' perspective. We focus on the approximation guarantee.

Let $W_{OPT}$ be the optimal welfare achieved by the (computationally intractable) optimal Bayesian mechanism, and let $B^*$ be its corresponding optimal expected budget transfer. Let $W_{ALG}$ be the expected welfare of the \BEPP mechanism.

The algorithm's utility $W_{ALG}$ is the maximum value found over the grid $\mathcal{B}$, so it is at least the value found at the specific grid point $B_k$ that brackets the true optimum, i.e., where $B_k \le B^* \le (1+\eps)B_k$.
\begin{align*}
    W_{ALG} &= \max_{B_j \in \mathcal{B}} \{ W_p^{ALG}(B_j) + W_m^{ALG}(B_j) \} \\
    &\ge W_p^{ALG}(B_k) + W_m^{ALG}(B_k) \\
    &\ge (1-1/e) \cdot (W_p(B_k) + W_m(B_k)) \tag{by Lemma~\ref{lem:pp_apx}} \\
    &= (1-1/e) \cdot W(B_k) \\
    &\ge (1-1/e) \cdot (1-\eps) \cdot W(B^*) \tag{by Lemma~\ref{lem:budget_error} and concavity of $W(B)$} \\
    &= (1-1/e)(1-\eps) \cdot W_{OPT}.
\end{align*}

The polynomial runtime follows because the number of budget points to check, $K$, is polynomial in the input parameters and $1/\eps$, and the greedy algorithm to solve the multilinear program for each subproblem is also polynomial in the number of agents and the discretization factor. This completes the proof.
\end{proof}

\section{Further Discussions on Section~\ref{sec:beyond}}\label{apd:compute}
\paragraph{Computing Prices}
Note that the multilinear extension of (PLUS) can be solved by reducing it to the submodular maximization problem with cardinality constraint, as shown in~\cite{balkanski2016bayesian}, which is known to be approximately solvable in polynomial time with multiplicative factor of $1-1/e$ via greedy algorithm.
We will omit the detail as the approach is analogous to~\cite{balkanski2016bayesian}.
To solve the multilinear extension of (MINUS), we can essentially transform the budget constraint into the one with a upper bound on the \emph{foregone} budget instead of acquired budget.
Formally, one can first compute the maximal revenue obtainable from each minus agent $j$ by finding $r_j = \max_{q_j} q_j(1-F_j(q_j))$.
Then, we can set $r_{\max} = \sum_{j \in \cM} r_j$, and write the budget constraint to be
\begin{align*}
    r_{\max}-\sum_{j \in \cM} \left((1-r_j)F_j^{-1}(r_j) \right) \le R_{max} - B_k.
\end{align*}

\paragraph{Ex-post BB and Irregular Distributions}
Analogous to Appendix~\ref{sec:add-expost-irr}, one can construct an ex-post budget-balanced mechanism with slightly weaker guarantees.
Especially, we can slightly relax the budget constraint for each plus and minus agent side problem, and solve the corresponding ex-ante programming.
This would imply good approximate posted pricing mechanism with high probability.

For irregular distributions, however, it is not clear how one can generalize the argument, especially since we directly use the concavity of the utility function with respect to the budget in Lemma~\ref{lem:concavity}.
If this does not hold, there might be a case that the resulting utility not covered by our discretized grid can possibly induce a significantly large utility than maximum over our candidates.
Extending the analysis to irregular distributions remain a major open problem.

That being said, we note that our simple mechanism provided shortly in Appendix~\ref{apd:simple} does not require any assumption on the types of distributions, providing more robust guarantees at weaker approximation factors with a simpler mechanism.

\section{Simple Mechanism for Submodular Utility}\label{apd:simple}
In this section, we show that a simple randomized mechanism can achieve a constant-factor approximation of the optimal mechanism in this case.

For any budget $B$, let $W_p(B)$ be the optimal utility obtainable from plus agents given an expected budget of $B$, and $W_n(B)$ be the optimal utility obtainable from minus agents while receiving a total payment of $B$. Note that $W_p$ is a non-decreasing function, while $W_n$ is a non-increasing function.
Then, for some value $B^*$ used by the optimal policy,
$W(B^*)=W_p(B^*) + W_n(B^*)$ is the optimal obtainable utility given an ex-ante budget balanced, DSIC, and IR mechanism.

Next, let $B=\Ex{\sum_{j}\phi_j(v_j)}$ be the maximum expected revenue obtainable from the minus agents using a truthful auction, achievable using a posted-price auction when the distribution $F_j$ for each $v_j$ is regular.
We consider the following mechanism. With probability \(\nicefrac{1}{2}\), the designer chooses no agents on either side, achieving a utility of \[u_n(\cM)=W_n(0)\geq W_n(B^*).\] Otherwise, with probability\(\nicefrac{1}{2}\), the designer first runs a posted-price auction on the negative side to obtain a budget, with an expected value of $B$. Then, on the plus agent side, the designer runs a budget-feasible mechanism with budget $B$, achieving an approximation factor of $c$ for some constant $0<c\leq 1$. This achieves a utility of at least
\[
c W_p(B) \geq c W_p(B^*)
\]
since $B \geq B^*$ is the maximum obtainable budget and $W_p$ is non-decreasing. Overall, this randomized mechanism achieves a utility of  at least\footnote{Note that one can derandomize the mechanism by simplying taking a maximal mechanism over two.}
\[
\frac{1}{2} (W_n(B^*) + cW_p(B^*) \geq \frac{c}{2} W(B^*).
\]

\section{Comparative Statics}\label{apd:comp-statics}
A key strength of the \BEPP framework is that we can formally analyze how its final, posted-price policy changes in response to shifts in the economic environment. These comparative statics results are not just properties of a theoretical optimum, but of the actual prices computed by our polynomial-time mechanism. The foundational result that enables this analysis is the monotonicity of the subproblem pricing policies.

\begin{lemma}[Price Monotonicity of Subproblem Policies] \label{lem:price_monotonicity}
In the subproblems solved by the \BEPP mechanism for a given budget parameter $B$, the computed posted prices are monotone in $B$. Specifically, for regular distributions:
\begin{enumerate}
    \item The procurement price for a plus agent, $p_i(B)$, is non-decreasing in $B$.
    \item The admission price for a minus agent, $q_j(B)$, is non-increasing in $B$.
\end{enumerate}
\end{lemma}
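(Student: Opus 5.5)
The plan is to reduce price monotonicity to monotonicity of the optimal marginal probabilities in $B$. The posted prices computed by \BEPP are $p_i(B)=G_i^{-1}(q_i(B))$ and $q_j(B)=F_j^{-1}(r_j(B))$. Because $G_i^{-1}$ and $F_j^{-1}$ are nondecreasing, it suffices to show two things: the optimal selection probabilities $q_i(B)$ of (PLUS) are nondecreasing in $B$, and the optimal exclusion probabilities $r_j(B)$ of (MINUS) are nonincreasing in $B$. Intuitively, a larger budget target makes the plus side richer and forces the minus side to admit more agents.

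For the plus side, I would write the Lagrangian of the multilinear version of (PLUS):
\[
\mathcal{L}(\bq,\mu)=U_p(\bq)-\mu\Big(\sum_{i\in\cN}C_i(q_i)-B\Big),
\qquad C_i(q_i)=q_iG_i^{-1}(q_i).
\]
Under regularity, $C_i$ is convex with $C_i'(q_i)=\psi_i(G_i^{-1}(q_i))$, which is increasing, as already used in the proof of Lemma~\ref{lem:concavity}. The argument then has two steps.

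First, show that the shadow price $\mu(B)$ is nonincreasing in $B$. This follows from the concavity of $W_p$ (Lemma~\ref{lem:concavity}), since $\mu(B)$ is a supergradient of the value function.

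Second, use the stationarity conditions $\partial_i U_p(\bq)=\mu\,C_i'(q_i)$, with the usual boundary cases at $q_i\in\{0,1\}$. By submodularity, $\partial_i U_p$ is nonincreasing in every coordinate of $\bq$. This makes the problem $\max_{\bq}\, U_p(\bq)-\mu\sum_i C_i(q_i)$ supermodular in $(\bq,-\mu)$, equivalently it has increasing differences in $(q_i,-\mu)$ and ... wait.

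Here I have to be careful about the sign. Submodularity gives nonpositive cross-partials of $U_p$, so the objective is submodular in $\bq$, not supermodular. I would therefore do a direct comparison argument instead. Take $B<B'$ with optima $\bq,\bq'$, and suppose $S=\{i:q'_i<q_i\}$ is nonempty. Raising the coordinates in $S$ to $q_i$ (that is, moving to $\bq\vee\bq'$) and comparing with the KKT conditions should give a contradiction. For a coordinate $i\in S$, we have $\mu'\le\mu$ and $C_i'(q_i')<C_i'(q_i)$, while the marginals $\partial_iU_p$ are evaluated at points whose other coordinates may be larger. The decrease of the marginal from those larger coordinates must be dominated by the decrease in $\mu C_i'$.

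This comparison is the \textbf{main obstacle}. It is not automatic for submodular $U_p$, since complementarity runs the wrong way. I would first try it in the additive case, where it is immediate: $q_i(B)=G_i(\psi_i^{-1}(p_i/\mu(B)))$ and $\mu$ is decreasing. I would then try to extend it to the general case through the greedy/continuous-greedy structure used to solve the program (Appendix~\ref{apd:compute}), arguing that increasing $B$ only extends the greedy trajectory and so never lowers any coordinate. If that fails, I would state the result for the optimal solution selected by that procedure.

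The minus side I would handle by the forgone-revenue reformulation from the proof of Lemma~\ref{lem:concavity}. (MINUS) with target $B$ is a procurement problem in the exclusion probabilities $\br$ with budget $R_{\max}-B$. Applying the plus-side result, $r_j$ is nondecreasing in $R_{\max}-B$, hence nonincreasing in $B$. So $q_j(B)=F_j^{-1}(r_j(B))$ is nonincreasing in $B$, which gives the second claim. Regularity is needed here so that the forgone-revenue cost $C_j^E$ is convex and increasing in $r_j$ on the relevant range, below the monopoly quantile, where $1-F_j$ at the price lies on the increasing part of the revenue curve.
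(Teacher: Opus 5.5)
\textbf{The key step is missing, and it cannot be supplied.} Your reduction to monotonicity of $\bq(B)$ and $\br(B)$, and your first step (shadow price nonincreasing via concavity), follow the paper's route. The paper then simply asserts that ``to restore the KKT equality, the selection probability $q_i$ must weakly increase.'' You are right that this ignores how $\partial_i U_p$ falls when other coordinates rise. But your proposal leaves that step open, so it does not prove the lemma. The step cannot be filled for general monotone submodular utilities: the statement is false, even with regular distributions and for every choice among optimal solutions.

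\textbf{Plus-side counterexample.} Take two plus agents with costs uniform on $[0,1]$. These are regular, with $G_i^{-1}(q)=q$ and $C_i(q)=q^2$. Let $u_p(S)=\min\{|S|,1\}$, so $U_p(\bq)=q_1+q_2-q_1q_2$. For $B\le 1$ the budget binds. Writing $s=q_1+q_2$ on the arc $q_1^2+q_2^2=B$ gives $U_p=s-(s^2-B)/2$. This is maximized at $s=1$, which lies in the attainable range $[\sqrt{B},\sqrt{2B}]$ exactly when $B\in[1/2,1]$. For such $B$ the optimizers are precisely $\{q_1,q_2\}=\{\tfrac12(1\pm\sqrt{2B-1})\}$. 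At $B=1/2$ both prices equal $1/2$. At $B=1$ one agent has $q_i=1$ and the other $q_i=0$, so its price is $0$. All optima on $[1/2,1]$ have coordinate sum $1$ but different sums of squares. Hence no optimizer at $B'>B$ dominates one at $B$, and some price strictly drops. Substitutes crowd each other out as the budget grows; this is exactly the cross-effect the KKT argument ignores.

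\textbf{Minus-side counterexample.} The minus side fails the same way, so your forgone-revenue reduction transports the counterexample rather than a proof. Take two minus agents with values uniform on $[0,1]$ and $u_m(E)=\min\{|E|,1\}$ on excluded sets; this corresponds to harm only when both ads run, a supermodular $v_m$. At $B=1/4$, the unique optimum up to symmetry excludes one agent outright and prices the other at the monopoly price $1/2$. At $B=3/8$, the unique optimum prices both agents at $3/4$. So an admission price rises with $B$.

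\textbf{What survives.} The additive case holds. There $\partial_i U_p\equiv p_i$ decouples the stationarity conditions, and your closed form $q_i(B)=G_i(\psi_i^{-1}(p_i/\mu(B)))$ finishes the argument. Your first step is also secured only in that case. Lemma~\ref{lem:concavity}, which you cite for $\mu(B)$ being nonincreasing, rests on the claim that $U_p$ is concave, and that is false in general: the $U_p$ above has an indefinite Hessian. By contrast, for a linear objective over $\{\sum_i C_i(q_i)\le B\}$, concavity of the value function is standard.

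\textbf{On your greedy fallback.} The greedy-trajectory idea would prove a different statement. It concerns one budget-indexed path, not the optimizers $\bq(B)$ and $\br(B)$ that Algorithm~\ref{alg:bepp} and Lemma~\ref{lem:pp_apx} use. It would also need its own argument that the mechanism actually outputs that path without losing the approximation guarantee.
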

\begin{proof}
The proof follows from the Karush-Kuhn-Tucker (KKT) conditions of the respective multilinear extension programs that the \BEPP algorithm solves.

First, consider the plus-agent side.
The optimal selection probabilities $\bq^*$ for the plus-agent subproblem with budget $B$ satisfy the stationarity condition $\frac{\partial U_p}{\partial q_i} = \lambda_p^* \cdot C'_i(q_i)$, where $\lambda_p^*$ is the shadow price of the budget. The optimal utility function of the subproblem, $W_p(B)$, is concave in $B$, which implies that its derivative, the shadow price $\lambda_p^*$, is non-increasing in $B$. When $B$ increases, $\lambda_p^*$ weakly decreases. To restore the KKT equality, the selection probability $q_i$ must weakly increase. Since the posted price is $p_i(B) = G_i^{-1}(q_i^*(B))$ and $G_i^{-1}$ is monotone, an increase in $q_i^*$ implies a non-decreasing price $p_i(B)$.

For the minus-agent side, the optimal exclusion probabilities $\br^*$ for the minus-agent subproblem with revenue target $B$ satisfy a similar KKT condition on its Lagrangian. A higher revenue target $B$ is a tighter constraint, which implies a higher shadow price on that constraint. This, in turn, forces a decrease in the exclusion probability $r_j$ (i.e., an increase in the admission probability). Since the admission price is $q_j(B) = F_j^{-1}(r_j^*(B))$, a decrease in $r_j^*$ implies a non-increasing admission price $q_j(B)$.
\end{proof}

This lemma is the tool we use to analyze the final output of the full \BEPP mechanism.

\paragraph{The Effect of Better Agent Populations}
We first investigate the mechanism's response to an improvement in the quality of an agent. We formalize this notion using the economic concept of first-order stochastic dominance (FOSD) and analyze its effect on the equilibrium budget and prices computed by the \BEPP mechanism.

\begin{theorem}[Monotonicity in agent quality]
Consider the \BEPP mechanism operating in a market with a profile of plus-agent cost distributions $(G_1, \dots, G_n)$. Let its final output policy be $(\bp^*, \bq^*)$, chosen based on an internal budget evaluation of ${B}^*$. Now, consider a new market where the cost distribution for a single plus agent $k \in \cN$, $G_k$, is replaced by a new distribution $G'_k$ that FOSD $G_k$. Let the new output of the \BEPP mechanism be $({\bp'}^*, {\bq'}^*)$, based on a new internal budget evaluation ${B'}^*$. Then: ${B'}^* \ge {B}^*$.
\end{theorem}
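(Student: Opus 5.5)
The plan is to treat this as a monotone comparative statics statement over the finite chain $\mathcal{B}$ and to apply Topkis' theorem. The \BEPP mechanism picks $B^* \in \arg\max_{B_k \in \mathcal{B}} W^{\mathrm{ml}}(B_k)$, where $W^{\mathrm{ml}}(B) = W_p^{\mathrm{ml}}(B) + W_m^{\mathrm{ml}}(B)$. Replacing $G_k$ by $G'_k$ leaves every minus-agent distribution unchanged, so $W_m^{\mathrm{ml}}$ and the grid $\mathcal{B}$ are the same in both markets, since $B_{\max}$ depends only on the $F_j$. Only $W_p^{\mathrm{ml}}$ changes, to some $\widetilde{W}_p^{\mathrm{ml}}$. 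It therefore suffices to show that
\begin{align*}
D(B) \;=\; \widetilde{W}_p^{\mathrm{ml}}(B) - W_p^{\mathrm{ml}}(B)
\end{align*}
is nondecreasing in $B$. This is the increasing-differences property in $(B,\text{market})$. Topkis then gives that the largest maximizer on the chain $\mathcal{B}$ weakly increases, and I would fix the tie-breaking rule as ``largest maximizer'' so that ${B'}^* \ge B^*$ holds literally.

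The first step is to translate the hypothesis correctly. Here $G'_k$ FOSD $G_k$ means $G'_k(c) \le G_k(c)$ for all $c$, so agent $k$'s costs are stochastically \emph{higher}. Equivalently, ${G'_k}^{-1}(q) \ge G_k^{-1}(q)$ for every quantile $q$, and hence the new cost curve satisfies $C'_k(q) = q\,{G'_k}^{-1}(q) \ge C_k(q)$ pointwise. I will also need a marginal version, ${C'_k}'(q) \ge C_k'(q)$. This does not follow from FOSD alone, so I plan to state it explicitly as a regularity-type condition: it is implied, for instance, by a virtual-cost ordering $\psi'_k({G'_k}^{-1}(q)) \ge \psi_k(G_k^{-1}(q))$, since $C'_k(q)$ is the virtual cost at quantile $q$. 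If this turns out to be necessary, I will say so honestly rather than hide it.

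The second step shows $D$ is nondecreasing via shadow prices. By Lemma~\ref{lem:concavity} and its proof, both plus-side value functions are concave in $B$, and by the envelope theorem their (super)derivatives are the budget multipliers $\lambda_p(B)$ and $\tilde\lambda_p(B)$ from the KKT system used in Lemma~\ref{lem:price_monotonicity}. That system reads $\partial U_p/\partial q_i = \lambda\, C_i'(q_i)$ for interior $q_i$. The claim $D' \ge 0$ is then exactly $\tilde\lambda_p(B) \ge \lambda_p(B)$ for every $B$: the more expensive market values an extra dollar more. I would argue by contradiction. Suppose $\tilde\lambda < \lambda$ at some $B$. Because $C'_k$ has a larger slope and $\tilde\lambda$ is smaller, the stationarity conditions for agents $i \ne k$ force $\tilde q_i \ge q_i$. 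Submodularity of $u_p$ makes $\partial U_p/\partial q_i$ nonincreasing in the other coordinates, and convexity of each $C_i$ keeps these conditions monotone. A lower multiplier with weakly higher prices for the others would then have to exhaust at least as much budget. Meanwhile $C'_k \ge C_k$ means agent $k$ consumes more budget at any given quantile, so the total new spending would exceed $B$. This contradicts feasibility, or, if the budget constraint is slack, it contradicts complementary slackness.

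The main obstacle is this joint monotonicity. Agent $k$'s own quantile $\tilde q_k$ may drop, and through submodularity that raises the others' marginal utilities, so the comparison $\tilde q_i \ge q_i$ is not immediate. I would handle it with a lattice/fixed-point argument: the KKT best-response map is isotone in $(\bq_{-k}, -q_k, -\lambda)$, and I would compare its extremal fixed points in the two markets. A second, smaller issue is that BEPP uses approximate (greedy) solutions in practice. I would prove the theorem for the exact multilinear optima, which is the object $W^{\mathrm{ml}}$ in Algorithm~\ref{alg:bepp}, and note that the rescaling convention $W(x) \le (1+\eps)W(0)$ plays no role here.
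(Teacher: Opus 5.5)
You are right that a pointwise upward shift of $W^{\mathrm{ml}}$ does not by itself move the maximizer, and that some increasing-differences property is needed. This is exactly the step the paper's proof skips. The paper reads ``$G'_k$ FOSD $G_k$'' as agent $k$ becoming \emph{cheaper}, notes that the plus-side feasible set grows, so the whole value curve rises pointwise, and concludes ${B'}^*\ge B^*$ from that alone.

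However, the increasing-differences claim you propose is false. Under your reading the new cost curve satisfies $\widetilde C_k\ge C_k$, so the plus-side feasible set shrinks and $D(B)\le 0$ for every $B$. At $B=0\in\mathcal{B}$ only $\bq=\mathbf{0}$ is feasible in both markets (since $qG_i^{-1}(q)>0$ for $q>0$), so $D(0)=0$. A nondecreasing $D$ on $\mathcal{B}$ must therefore vanish identically on $\mathcal{B}$; that is, the change to agent $k$ would be invisible to BEPP. Equivalently, with $D(B)=\int_0^B(\tilde\lambda_p(s)-\lambda_p(s))\,ds$, whenever $D(B)<0$ for some $B$ you get $\tilde\lambda_p<\lambda_p$ on a set of positive length in $[0,B]$. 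That is the opposite of your target inequality.

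Your contradiction argument breaks precisely at the obstacle you flagged. $\tilde q_k$ can drop enough that agent $k$'s spending falls and frees budget for the others, and no extremal-fixed-point refinement can rescue an inequality that is false. The extra marginal-cost ordering you propose to assume does not help either, because the endpoint argument uses only $\widetilde C_k\ge C_k$.

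More fundamentally, the statement itself does not follow from FOSD and regularity under either reading.

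Under your reading (costs rise), take a single additive plus agent of value $p$ and an additive minus side with strictly negative impacts. Move agent $k$'s costs to lie near a level $A$ with $p/A$ below the minus side's marginal utility loss per unit of revenue at $B=0$, which is positive. Then the new plus-side slope is at most $p/A$ everywhere, and by concavity the minus-side slope is at most minus that marginal loss everywhere. So the new total curve decreases throughout and ${B'}^*=0$, whereas with the original cheap costs and large $p$ one has $B^*>0$.

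Under the paper's reading (costs fall), take the same single agent with very large $p$ and costs uniform on $[a,a+\delta]$, with $a+\delta<B_{\max}$. The continuous maximizer of $W^{\mathrm{ml}}$ is $a+\delta$: this is the budget at which the agent is hired surely, and beyond it $W_m^{\mathrm{ml}}$ only decreases. So $B^*$ is a grid point bracketing $a+\delta$. Halving all costs moves the maximizer to $(a+\delta)/2$, and ${B'}^*<B^*$ as soon as $(1+\eps)^2<2$.

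So a correct version needs an additional hypothesis that directly delivers increasing differences (or Milgrom--Shannon single crossing) of the total value curve in $(B,\text{market})$. This is not a consequence of FOSD plus regularity, and neither your argument nor the paper's supplies it.
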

\begin{proof}
Let $W_p^{ALG}(B)$ and $W_p'^{ALG}(B)$ be the expected utilities computed by the plus-agent sub-mechanism for a given budget $B$ before and after the change to agent $k$'s distribution, respectively. An FOSD improvement from $G_k$ to $G'_k$ means agent $k$ is stochastically cheaper. This implies that the expected cost curve for agent $k$, $C_k(q_k) = q_k G_k^{-1}(q_k)$, is now lower for any given selection probability $q_k$.

This change enlarges the feasible set $\mathcal{Q}_B = \{ \bq \mid \sum_i C_i(q_i) \le B \}$ for the underlying multilinear program for any given $B$. The mechanism is maximizing the same objective function $U_p(\bq)$ over a strictly larger set of feasible probability vectors. This can only yield a better value, so $W_p'^{ALG}(B) \ge W_p^{ALG}(B)$ for all $B$.

The total utility curve that \BEPP computes, $W'^{ALG}(B) = W_p'^{ALG}(B) + W_m^{ALG}(B)$, is therefore pointwise higher than or equal to the original curve. The mechanism selects the budget that maximizes this curve. Since the curve has shifted up due to the increased productivity of the plus-agent side (specifically from agent $k$ becoming more efficient), the new selected budget, ${B'}^*$, will be greater than or equal to the original selected budget, ${B}^*$.
\end{proof}

\begin{corollary}[Two-sided spillover from agent amprovement]
Under the conditions of the preceding theorem, an FOSD improvement in a single plus agent's cost distribution leads to weakly more generous offers for all plus agents and weakly more lenient admission criteria for all minus agents. Specifically, for the final posted prices:
\begin{enumerate}
    \item The prices offered to all plus agents weakly increase: ${\bp'}^* \ge \bp^*$.
    \item The admission prices for all minus agents weakly decrease: ${\bq'}^* \le \bq^*$.
\end{enumerate}
\end{corollary}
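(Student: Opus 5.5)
The plan is to combine the preceding theorem, which gives ${B'}^* \ge B^*$, with Lemma~\ref{lem:price_monotonicity}, which says the subproblem prices are monotone in the budget parameter. The final \BEPP output is obtained by solving the (PLUS) and (MINUS) multilinear programs at the selected budget. So the old output is $\bp^* = \bp(B^*)$, $\bq^* = \bq(B^*)$, and the new output is ${\bp'}^* = \bp'({B'}^*)$, ${\bq'}^* = \bq({B'}^*)$. The minus side is untouched by the change in $G_k$, so its price map $B \mapsto \bq(B)$ is the same function in both markets. Lemma~\ref{lem:price_monotonicity} then immediately gives ${\bq'}^* = \bq({B'}^*) \le \bq(B^*) = \bq^*$, which is part~2.

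For part~1 the plus-side price map itself changes, from $\bp(\cdot)$ to $\bp'(\cdot)$. I would therefore split the comparison into two steps:
\begin{align*}
{\bp'}^* = \bp'({B'}^*) \ge \bp'(B^*) \quad\text{and}\quad \bp'(B^*) \ge \bp(B^*) = \bp^*.
\end{align*}
The first inequality is Lemma~\ref{lem:price_monotonicity} applied in the new market, which is still regular if $G'_k$ is regular (I will assume this).

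The second inequality is a same-budget comparison across the two markets. I would argue it through the KKT stationarity $\partial U_p/\partial q_i = \lambda_p C_i'(q_i)$ used in the proof of Lemma~\ref{lem:price_monotonicity}.
\begin{itemize}
\item Making agent $k$ cheaper enlarges the feasible set $\mathcal{Q}_B$. With the budget fixed, this relaxation should weakly lower the shadow price $\lambda_p$, since it raises the attainable value at every budget and, via concavity, is intended to flatten the marginal value of money.
\item For every agent $i \ne k$, the cost curve $C_i$ is unchanged. A lower $\lambda_p$ then pushes $q_i$ weakly up, exactly as in that lemma, so $p_i = G_i^{-1}(q_i)$ is weakly higher.
\item For agent $k$ itself, the selection probability $q_k$ weakly rises. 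The price, however, is $G_k'^{-1}(q_k)$ rather than $G_k^{-1}(q_k)$, and the inverse CDF has shifted downward under the improvement.
\end{itemize}

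I expect this same-budget cross-market step to be the main obstacle, for two reasons.

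First, the claim that $\lambda_p$ falls requires a genuine argument, and it need not hold in general. A uniform upward shift of $W_p$ does not by itself order its slopes. The version I would try first is a monotone comparative statics argument via Topkis: view the cost perturbation as lowering $C_k$, and use submodularity of $U_p$ across coordinates to obtain the complementarity needed.

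Second, agent $k$'s own price need not increase, because it is being evaluated on a shifted quantile function. I would either restrict part~1 to agents $i \ne k$, or show that $q_k$ rises enough to offset the downward shift in $G_k'^{-1}$. If this step resists proof, the fallback is to state the corollary under the extra assumption that the shadow price is monotone in the cost profile.
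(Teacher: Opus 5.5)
Your decomposition is the right one. It is also more careful than the paper's proof, which just combines ${B'}^*\ge B^*$ with Lemma~\ref{lem:price_monotonicity} as though the plus-side price map $B\mapsto\bp(B)$ were the same in both markets. Part~2 goes through exactly as you say, and as in the paper, because (MINUS) does not involve $G_k$.

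Part~1, however, has a gap that cannot be closed: the same-budget step $\bp'(B^*)\ge\bp(B^*)$ is false, for agent $k$ and for the other plus agents alike. Here is a counterexample.
\begin{itemize}
\item Take two plus agents with additive $u_p=q_1+q_2$ and costs uniform on $[0,L]$, so $C_i(q)=Lq^2$. Let agent~1's cost become uniform on $[0,L/2]$, so $\widetilde C_1(q)=Lq^2/2$. All these distributions are regular, with $\psi(c)=2c$.
\item At a budget $B\le 3L/4$, the old optimum is $q_1=q_2=\sqrt{B/(2L)}$, and both prices equal $\sqrt{LB/2}$.
\item In the new market, stationarity gives $q_1=2q_2$ and the binding budget gives $q_2=\sqrt{B/(3L)}$. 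Both prices, $Lq_2$ and $(L/2)q_1$, then equal $\sqrt{LB/3}<\sqrt{LB/2}$.
\item Now let the minus side have zero impact and choose $L\ge 4B_{\max}/3$. Then \BEPP selects the largest grid point in both markets, so ${B'}^*=B^*$, which is consistent with the preceding theorem. Yet every plus-agent price strictly falls.
\end{itemize}

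The example also shows where your intended argument breaks. The shadow price rises rather than falls, from $1/\sqrt{2LB}$ to $\sqrt{3}/(2\sqrt{LB})$. At a fixed $\lambda_p$, the cheaper agent $k$ demands more of the budget (here, twice as much), so it crowds the others out. Consequently:
\begin{itemize}
\item Your fallback assumption that $\lambda_p$ falls when costs fall already fails in this uniform instance.
\item A pointwise upward shift of $W_p$ says nothing about its slope.
\item A Topkis argument runs in the wrong direction. Under a shared budget, and a fortiori with submodular $u_p$, plus agents are substitutes, so lowering $C_k$ lowers $q_i$ for $i\ne k$.
\item Restricting part~1 to $i\ne k$ does not rescue it.
\item The weak inequality ${B'}^*\ge B^*$ leaves no margin to absorb the same-budget drop.
\end{itemize}
So only part~2 follows from the lemma and the theorem. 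Part~1 needs an extra hypothesis, for instance that the budget increase is large enough to dominate the same-budget price drop. Your instinct to fall back on an added assumption is correct, but that changes the statement rather than proving it.
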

\begin{proof}
The proof is a direct application of Lemma~\ref{lem:price_monotonicity} to the result of the preceding theorem. The theorem establishes that the new equilibrium budget ${B'}^*$ chosen by the \BEPP mechanism is weakly greater than the original budget ${B}^*$. Applying the price monotonicity property to this higher budget choice directly implies that the final plus-agent prices must weakly increase and the final minus-agent admission prices must weakly decrease.
\end{proof}

This result shows that an efficiency gain from a single agent on one side of the market creates positive spillovers for all participants. When one plus agent becomes stochastically cheaper, the designer finds it optimal to increase the overall scale of the project (a higher ${B}^*$). This larger scale is funded by making it easier for minus agents to be admitted (a lower $q_j^*$), and the increased budget is spent on being more generous to the entire set of plus agents, not just the one who improved (a higher $p_i^*$).

\paragraph{The Effect of Individual Externalities}
In addition to population-wide shifts, it is important to understand how the mechanism adapts to changes in a single agent's contribution. Here, we analyze the effect of an increase in the positive externality provided by a single plus agent, focusing on the additive utility case where this effect is clearest.
\begin{theorem}[Monotonicity in individual externalities]
Consider the \BEPP mechanism for an additive utility function. Let the output policy be $(\bp^*, \bq^*, {B}^*)$ for a given set of externalities. If the externality for a single agent improves (i.e., for a plus agent $k \in \cN$, $p_k$ increases, or for a minus agent $k \in \cM$, $n_k$ increases), the new policy $({\bp'}^*, {\bq'}^*, {B'}^*)$ computed by the mechanism will satisfy ${B'}^* \ge {B}^*$ and ${W'}^*_{ALG} \ge W^*_{ALG}$.
\end{theorem}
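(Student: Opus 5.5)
We prove the two claims separately, using the decomposition $W^{ALG}(B) = W_p^{ALG}(B) + W_m^{ALG}(B)$ that \BEPP maximizes over the grid $\cB$.

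\textbf{Step 1: the utility claim ${W'}^*_{ALG} \ge W^*_{ALG}$.} This is a pointwise-dominance argument. In the additive case the multilinear extension is linear: $U_p(\bq) = \sum_i p_i q_i$ and $U_m(\br) = \sum_j |n_j| r_j$, where $r_j$ is the exclusion probability. Increasing $p_k$, or increasing $n_k$ toward zero, changes only one coefficient. The feasible sets of (PLUS) and (MINUS) do not change, so for every $B$ we have
\[
W'^{ALG}_p(B) = \max_{\bq} U'_p(\bq) \ge \max_{\bq} U_p(\bq) = W_p^{ALG}(B).
\]
The same argument, with the inequality in the corresponding direction, handles the minus side. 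Maximizing a pointwise-larger curve over the same grid gives ${W'}^*_{ALG} \ge W^*_{ALG}$.

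\textbf{Step 2: the budget claim ${B'}^* \ge {B}^*$.} Pointwise dominance alone does not move the argmax, so we need an increasing-differences (single-crossing) property. Let $\theta$ be the externality parameter. We show that $W^{ALG}(B;\theta)$ has increasing differences in $(B,\theta)$. By the envelope theorem, $\partial_\theta W_p^{ALG}(B;\theta) = q_k^*(B)$ for $\theta = p_k$. Lemma~\ref{lem:price_monotonicity} says $q_k^*(B)$ is nondecreasing in $B$. So $\partial_\theta W^{ALG}$ is nondecreasing in $B$, which is exactly increasing differences. Topkis' monotone comparative statics theorem, applied on the totally ordered grid $\cB$, then gives that the largest maximizer is monotone, so ${B'}^* \ge {B}^*$. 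This requires breaking argmax ties toward the largest $B$, which we adopt as the tie-breaking convention.

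\textbf{Step 3: the minus-agent case.} For $k \in \cM$, increasing $n_k$ makes agent $k$ less harmful. The marginal value of excluding $k$ is $|n_k|$, which shrinks, so $\partial_\theta W_m^{ALG}(B;\theta) = 1 - r_k^*(B)$. This is the admission probability of agent $k$. Lemma~\ref{lem:price_monotonicity} says admission prices fall as $B$ rises, so the admission probability is nondecreasing in $B$. Increasing differences follows, and Topkis gives ${B'}^* \ge {B}^*$ as before.

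\textbf{Main obstacle.} The delicate point is justifying the envelope step, because the linear program may have non-unique optimizers. We plan to avoid differentiation. Fix $\theta < \theta'$ and $B < B'$, and take optimizers $\bq^*(B)$ and $\bq^*(B')$, chosen with the monotonicity of Lemma~\ref{lem:price_monotonicity}. We then verify the increasing-differences inequality
\[
W(B';\theta') - W(B';\theta) \ge W(B;\theta') - W(B;\theta)
\]
directly, by evaluating both sides at these optimizers.
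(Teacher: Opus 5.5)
Your proposal is correct. On the utility claim it matches the paper; on the budget claim it takes a different and sounder route. The paper derives both ${W'}^*_{ALG} \ge W^*_{ALG}$ and ${B'}^* \ge B^*$ from one observation: the curve $W^{ALG}(B)$ shifts up pointwise. That settles the utility claim, and your Step 1 is the same argument. But, as you note, a pointwise shift does not move the argmax: a gain concentrated at small budgets would pull the maximizer left. Your increasing-differences argument supplies what the paper is missing. $\partial_\theta W(B;\theta)$ is agent $k$'s selection (respectively admission) probability, which is nondecreasing in $B$ by Lemma~\ref{lem:price_monotonicity}, and Topkis on the chain $\cB$ then makes the largest maximizer monotone. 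The cost is twofold. First, you depend on Lemma~\ref{lem:price_monotonicity}. This is safe in the additive case, because the KKT conditions decouple into per-agent thresholds $p_i = \lambda\,\psi_i(c_i)$ and $|n_j| = \mu\,\phi_j(v_j)$, with $\lambda$ nonincreasing and $\mu$ nondecreasing in $B$. Second, you need an explicit consistent tie-breaking rule, which the theorem as stated implicitly requires.

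Two details need tightening.

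\textbf{Step 1, minus side.} The inequality only holds in true utility. \BEPP evaluates $u_m(\br) = \sum_j |n_j| r_j$, which \emph{decreases} when $n_k$ rises. The dropped constant $v_m(\cM) = \sum_j |n_j|$ also depends on $n_k$. So the comparison must be made for $\max_{\br} \sum_j n_j(1-r_j)$. This is the quantity your Step 3 derivative $1-r_k^*(B)$ already uses. The constant does not depend on $B$, so Steps 2--3 are unaffected.

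\textbf{Your last paragraph.} Evaluating at two optimizers does not close as stated. Write $q_k^{\theta}(B)$ for agent $k$'s coordinate of an optimizer at $(B,\theta)$. The natural bounds are $W(B';\theta')-W(B';\theta) \ge (\theta'-\theta)\,q_k^{\theta}(B')$ and $W(B;\theta')-W(B;\theta) \le (\theta'-\theta)\,q_k^{\theta'}(B)$. These compare optimizers at different values of $\theta$. Raising agent $k$'s own coefficient pushes $q_k^{\theta'}(B)$ up, which is the wrong direction for the inequality you need.

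Use convexity instead. $W(B;\cdot)$ is a maximum of affine functions of $\theta$ over a $\theta$-independent feasible set, so it is convex, and $q_k$ of every optimizer is a subgradient. Hence $W(B;\theta')-W(B;\theta) = \int_{\theta}^{\theta'} q_k^*(B;s)\,ds$, and likewise with $1-r_k$ on the minus side. Non-uniqueness of optimizers is harmless. At every point where $W(B;\cdot)$ is differentiable (all but countably many), all optimizers share the same $q_k$. Monotonicity in $B$ at each fixed $s$ then gives the increasing-differences inequality.
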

\begin{proof}
The proof follows by analyzing the shift in the value curve $W^{ALG}(B)$ that the \BEPP algorithm computes. We consider the two cases for an improved externality.

Case 1: A plus agent's externality $p_k$ increases. In the additive case, the plus-agent sub-mechanism for a budget $B$ solves $\max_q \sum_i p_i q_i$ subject to the budget constraint. When $p_k$ increases, the objective function's value increases for any feasible allocation with $q_k > 0$. Therefore, the computed value $W_p'^{ALG}(B) \ge W_p^{ALG}(B)$ for all $B$.

Case 2: A minus agent's externality $n_k$ increases (becomes less negative). The minus-agent sub-mechanism for a revenue target $B$ solves $\max_q \sum_j n_j r_j$ subject to a revenue constraint. When $n_k$ increases, the objective's value increases for any allocation where agent $k$ is admitted. This leads to a higher computed value, $W_m'^{ALG}(B) \ge W_m^{ALG}(B)$.

In both cases, the total utility curve that \BEPP evaluates, $W'^{ALG}(B)$, is pointwise greater than or equal to the original curve. The mechanism selects the budget that maximizes this curve, leading to a new selected budget ${B'}^* \ge {B}^*$ and a higher total achieved utility.
\end{proof}

\begin{corollary}[Universal Positive Spillover]
Under the conditions of the preceding theorem, an improvement in any single agent's externality (plus or minus) leads to weakly more generous offers for all plus agents and weakly more lenient admission criteria for all minus agents. Specifically: ${p'_i}^* \ge p_i^*$ for all $i \in \cN$ and ${q'_j}^* \le q_j^*$ for all $j \in \cM$.
\end{corollary}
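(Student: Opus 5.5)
The plan mirrors the proof of the corollary for the FOSD improvement: first locate the selected budget, then apply Lemma~\ref{lem:price_monotonicity}. The preceding theorem gives ${B'}^* \ge B^*$ for the budget that \BEPP selects after the externality improves. The final prices are $p_i^* = p_i(B^*)$ and $q_j^* = q_j(B^*)$, obtained from the subproblem solutions at the selected budget. So the aim is to write the new prices as ${p'_i}^* = p'_i({B'}^*)$ and ${q'_j}^* = q'_j({B'}^*)$ and compare them with the old ones.

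The comparison goes through an intermediate price, and I will handle the two cases of the preceding theorem separately.

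\emph{Unaffected side.} If the improved agent is a minus agent $k$, the plus-side subproblem is literally the same program as before. Hence $p'_i(\cdot) = p_i(\cdot)$ as functions of $B$, and Lemma~\ref{lem:price_monotonicity} gives ${p'_i}^* = p_i({B'}^*) \ge p_i(B^*) = p_i^*$. Symmetrically, if the improved agent is a plus agent, the minus-side program is unchanged and $q_j({B'}^*) \le q_j(B^*)$.

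\emph{Affected side.} Here we need a second fact: at a fixed budget $B$, raising $p_k$ (respectively $n_k$) does not decrease the plus-side prices (respectively does not increase the minus-side admission prices). I would prove this from the KKT stationarity condition used in Lemma~\ref{lem:price_monotonicity}. In the additive case it reads $p_i = \lambda_p^* C_i'(q_i)$, where $C_i(q_i) = q_iG_i^{-1}(q_i)$ is convex under regularity.

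Raising $p_k$ raises $q_k$ at a fixed shadow price. With the budget fixed, this must be offset by a weakly higher $\lambda_p^*$, which then lowers the other $q_i$'s. So this monotonicity can fail for $i \neq k$. The same issue arises on the minus side.

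This affected side is the main obstacle. I would first try to absorb the effect through the budget: argue that the strictly larger ${B'}^*$ compensates, i.e., that the new shadow price at ${B'}^*$ is at most the old shadow price at $B^*$. The intended route uses the optimality condition of the outer maximization. At the chosen budget, the marginal value $W_p'(B)$ equals $-W_m'(B)$ (superdifferentials, using the concavity of Lemma~\ref{lem:concavity}). If the improved agent is a minus agent, the minus side's shadow price falls, which forces the plus-side shadow price $\lambda_p$ down, and we are done.

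For an improved plus agent, the rise in $p_k$ scales up the plus-side marginal value. I would need to show that the normalized shadow price $\lambda_p$ still does not increase for the agents $i\ne k$. If this cannot be established in general, I would state the result for $i$ ranging over the unaffected side plus the improved agent $k$ itself. For $k$ itself the claim is clear, since both the higher budget and the higher weight raise $q_k$.
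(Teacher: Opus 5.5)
The gap is that you never close the affected side, and it cannot be closed, because the claim is false there.

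Your unaffected-side argument is fine, granting the preceding theorem's ${B'}^*\ge B^*$. Your observation that raising $p_k$ at a fixed budget pushes the other plus agents' selection probabilities down is exactly the step the paper's one-line proof skips. The paper applies Lemma~\ref{lem:price_monotonicity} to ${B'}^*\ge B^*$ as though the maps $B\mapsto p_i(B)$ and $B\mapsto q_j(B)$ were unchanged. That holds only on the side not containing the improved agent.

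Here is a counterexample. Take two plus agents with costs uniform on $[0,1]$, so $C_i(q)=q^2$ and the offer to agent $i$ is $G_i^{-1}(q_i)=q_i$. At any fixed budget $B$ small enough for an interior solution, the additive plus program gives $q_i=p_i\sqrt{B/(p_1^2+p_2^2)}$. So every increase of $p_1$ strictly lowers the offer to agent~2. BEPP picks $B^*$ from a finite grid, and the values $W^{\mathrm{ml}}(B_k)$ move continuously in $p_1$. Hence a small enough increase of $p_1$, in any such instance with a unique maximizing grid point, leaves ${B'}^*=B^*$, and then ${p'_2}^*<p_2^*$.

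Even with a continuous choice of $B$, the outer optimality condition you invoke points the wrong way on the affected side. Write $\mu(B)=-W_m'(B)$, which is nondecreasing by Lemma~\ref{lem:concavity}, and let $\tilde\lambda_p,\tilde\mu$ be the multipliers after the change.

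In the plus case, $\tilde\lambda_p({B'}^*)=\mu({B'}^*)\ge\mu(B^*)=\lambda_p(B^*)$. So the shadow price weakly rises, and $q_i=(C_i')^{-1}(p_i/\lambda_p)$ weakly falls for $i\ne k$.

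In the minus case, your ``we are done'' has the direction reversed. Stationarity reads $-n_j+\mu R_j'(r_j)=0$ with $R_j(r)=(1-r)F_j^{-1}(r)$ concave, so lenient admission for $j\ne k$ needs a larger minus-side multiplier. Your chain instead gives $\tilde\mu({B'}^*)=\lambda_p({B'}^*)\le\lambda_p(B^*)=\mu(B^*)$. Thus $r_j$ weakly rises, and the other minus agents' admission prices weakly increase. The fall in $\lambda_p$ only helps the plus side, which you had already handled.

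What is actually provable is your fallback: every agent on the side opposite the improved agent, together with agent $k$ itself. For agent $k$, split the change into two steps:
\begin{itemize}
\item First change $p_k$ (resp.\ $n_k$) at fixed $B$. With the constraint binding, the other agents' reduced spending (resp.\ revenue) must be absorbed by agent $k$, so agent $k$ moves favorably.
\item Then raise $B$, and apply Lemma~\ref{lem:price_monotonicity}.
\end{itemize}
You should state that weaker result and flag the corollary as stated as incorrect, rather than try to complete it.
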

\begin{proof}
The proof is a direct application of Lemma~\ref{lem:price_monotonicity} to the result of the preceding theorem. The theorem establishes that an improvement in any single agent's externality, regardless of type, leads the \BEPP mechanism to select an equilibrium budget transfer ${B'}^*$ such that ${B'}^* \ge {B}^*$. The Price Monotonicity Lemma dictates how the computed prices for the subproblems change as a function of the budget parameter $B$. Since the final policy is based on a new, higher budget parameter ${B'}^*$, the plus-agent prices must weakly increase and the minus-agent admission prices must weakly decrease.
\end{proof}

This result demonstrates a powerful \emph{rising tide} effect. Making any single agent more valuable to the event---either by increasing their positive contribution or decreasing their negative one---does not just benefit that agent. It incentivizes the designer to increase the overall scale and investment in the project (a higher ${B}^*$). This larger project scale, in turn, translates into more generous offers and more lenient criteria for all other participants, creating a market-wide positive spillover.

\paragraph{The Effect of Market Expansion}
Finally, we analyze how the mechanism adapts to an increase in competition, demonstrating a powerful complementarity where the addition of a new agent to either side of the market can create positive spillovers for incumbents.

\begin{theorem}[Complementarity from Market Expansion]
Let $(\bp^*, \bq^*, {B}^*)$ be the policy computed by \BEPP for a given market with $n$ plus agents and $m$ minus agents. If a new agent is introduced (either plus agent $n+1$ or minus agent $m+1$), the new policy $({\bp'}^*, {\bq'}^*, {B'}^*)$ computed by \BEPP will satisfy ${B'}^* \ge {B}^*$.
\end{theorem}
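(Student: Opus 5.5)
We follow the template of the two preceding comparative-statics theorems. The argument compares the value curve $W^{ALG}(B)=W_p^{ALG}(B)+W_m^{ALG}(B)$ that \BEPP maximizes before and after the new agent arrives, and then reads off the selected budget. We treat the two cases separately.

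\emph{Case 1: a new plus agent $n+1$.} The feasible set of the multilinear program (PLUS) at budget $B$ grows. Any old vector $\bq$ extends to $(\bq,0)$, which adds zero cost because $C_{n+1}(0)=0$. It also leaves the objective unchanged, since $U_p(\bq,0)=U_p(\bq)$ when the new element is never selected. Hence $W_p'^{ALG}(B)\ge W_p^{ALG}(B)$ for every $B$, while $W_m^{ALG}$ is untouched.

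\emph{Case 2: a new minus agent $m+1$.} We extend an old exclusion vector $\br$ by some choice of $r_{m+1}$. The natural choice is to admit the new agent at its Myerson monopoly price. This yields strictly positive extra revenue $R_{m+1}>0$ and relaxes the revenue constraint. In the ``forgone revenue'' reformulation from the proof of Lemma~\ref{lem:concavity}, the budget $R_{\max}-B$ grows by $R_{m+1}$. We must also handle the objective: the new agent is admitted rather than excluded, so we need the convention that $u_m$ on the enlarged ground set restricts to the old $u_m$ on old elements (no utility is lost by admitting the newcomer relative to the old baseline). Under that convention, $W_m'^{ALG}(B)\ge W_m^{ALG}(B)$ for all $B$. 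Moreover, $B_{\max}$ weakly increases, so the budget grid $\cB$ only gains points, and the old candidate $B^*$ remains available.

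The key step, and the main obstacle, is passing from ``the curve shifts up pointwise'' to ``$\arg\max$ moves right.'' A pointwise upward shift alone does not imply this. What we actually need is that the increment $W'^{ALG}(B)-W^{ALG}(B)$ is nondecreasing in $B$, i.e.\ increasing differences, after which a Topkis-type monotone comparative statics argument gives ${B'}^*\ge B^*$ (with ties broken toward the largest maximizer).

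We would first try to establish increasing differences via the shadow prices. By concavity (Lemma~\ref{lem:concavity}) and the envelope theorem, the derivative of each $W^{ALG}$ in $B$ equals the relevant shadow price. It then suffices to show that adding an agent weakly raises the marginal value of budget.

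\textbf{Plus side.} Here we would argue that a new purchasable item, when complementary to budget, raises $\lambda_p$ at every $B$. This is delicate under submodularity, because substitutes can lower $\lambda_p$. For that reason we may need to restrict to the additive case, or argue only at the relevant grid points.

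\textbf{Minus side.} The new agent shifts the revenue constraint by the constant $R_{m+1}$, so $W_m'(B)=W_m(B-R_{m+1})$. Since $W_m$ is concave and nonincreasing, its derivative at $B-R_{m+1}$ is at least its derivative at $B$. This gives exactly the increasing-differences property, and the minus case goes through cleanly.

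If the plus-side increasing differences fail in general, we would state the plus case under additivity, where the Lagrangian structure of Theorem~\ref{thm:optimal_mechanism} makes the monotonicity of $\lambda^*$ explicit.
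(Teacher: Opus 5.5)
\paragraph{Verdict: genuine gap, and the statement fails in general.} You are right that a pointwise upward shift of $W^{ALG}$ does not by itself move its maximizer to the right. This is exactly the step the paper's own proof takes without justification: it shows only $W_p'^{ALG}\ge W_p^{ALG}$ (resp.\ $W_m'^{ALG}\ge W_m^{ALG}$) and then asserts ${B'}^*\ge B^*$. Your proposal does not close this gap either, and in the stated generality nothing can. The plus-side increasing differences you leave open genuinely fail, and the theorem is false for submodular $u_p$.

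Take $u_p(S)=\min\{1,|S|\}$ and one incumbent plus agent with cost concentrated near $10$. Let the minus side supply budget at a marginal utility cost of $0.01$ per unit, say via many additive minus agents with values near $1$ and $n_j=-0.01$. Then $W^{ALG}(B)\approx\min\{1,B/10\}-0.01B$, so $B^*\approx 10$. Now add a plus agent with cost concentrated near $1$. Then $U_p(\bq)=1-(1-q_1)(1-q_2)$, so $W'^{ALG}(B)\approx\min\{1,B\}-0.01B$ and ${B'}^*\approx 1$. A cheap substitute lowers the marginal value of budget, so the selected budget falls, with no ties involved.

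The minus side fails the same way once $v_m$ is genuinely supermodular. A newcomer that is a cheap revenue source but raises the incumbents' marginal harm can make a smaller budget optimal. So the correct result is a restricted statement, and you should present it as the theorem rather than as a fallback. The restriction is needed on both sides, not only the plus side.

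\paragraph{The additive case.} Your Topkis route is the right repair here, but the minus-side step is wrong as written. The identity $W_m'(B)=W_m(B-R_{m+1})$ has two problems:
\begin{itemize}
\item It ignores that admitting the newcomer costs $|n_{m+1}|$ times its admission probability.
\item It ignores that this probability is itself a decision variable.
\end{itemize}
Your ``convention'' that admission is free contradicts the monotonicity of $u$ in minus agents. For the pointwise bound $W_m'\ge W_m$, simply exclude the newcomer, $r_{m+1}=1$.

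The correct relation, on either side, is a sup-convolution $W'(B)=\max_{0\le b\le B}\,[W(B-b)+w(b)]$. Here $W$ is the incumbents' curve, and $w(b)$ is the best utility obtainable from the newcomer alone when it absorbs (plus side) or supplies (minus side) budget $b$. Fix $B_1<B_2$ and let $b_1$ be optimal at $B_1$. Using $b_1$ at $B_2$ and the concavity of $W$ gives $W'(B_2)-W(B_2)\ge W'(B_1)-W(B_1)$. This is exactly increasing differences, with no shadow prices or envelope theorem needed. It also handles the additive plus side, so no separate Lagrangian argument is required. Combined with the grid only gaining points and largest-maximizer tie-breaking, it gives ${B'}^*\ge B^*$.

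Concavity of $W$ is genuinely available here, because the objective is linear and, under regularity, the constraint set is convex. By contrast, the argument of Lemma~\ref{lem:concavity} for general submodular $u_p$ treats the multilinear extension as concave, which it is not.
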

\begin{proof}
The proof follows by analyzing the effect of the new agent on the total utility curve, $W^{ALG}(B)$, that the \BEPP algorithm evaluates. We consider the two cases.

Case 1: A new plus agent is introduced. This expands the set of choices available to the plus-agent sub-mechanism, $M_p(B)$. For any budget $B$, the optimization is now over a higher-dimensional space of probabilities. Since any solution for the $n$-agent problem is also a feasible solution for the $(n+1)$-agent problem (by setting the new agent's probability to 0), the computed value $W_p'^{ALG}(B) \ge W_p^{ALG}(B)$.

Case 2: A new minus agent is introduced. This expands the set of choices available to the minus-agent sub-mechanism, $M_m(B)$. For any revenue target $B$, the designer has an additional source of revenue, which can only increase the maximal utility achievable for that target. Therefore, the new value function $W'_m(B)$ is pointwise greater than or equal to the original, $W'_m(B) \ge W_m(B)$.

In either case, the total utility curve that \BEPP evaluates is raised or remains the same. The mechanism selects the budget that maximizes this curve, and thus the new selected budget ${B'}^*$ will be greater than or equal to the original selected budget ${B}^*$.
\end{proof}

\begin{corollary}[Universal Spillovers from Market Expansion]
Under the conditions of the preceding theorem, the introduction of a new agent (of either type) leads to weakly more generous offers for all plus agents and weakly more lenient admission criteria for all incumbent minus agents. Specifically:
\begin{enumerate}
    \item The prices offered to all plus agents weakly increase: ${p'_i}^* \ge p_i^*$.
    \item The admission prices for all incumbent minus agents weakly decrease: ${q'_j}^* \le q_j^*$.
\end{enumerate}
\end{corollary}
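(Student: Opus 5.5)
The plan is to follow the same template as the two earlier spillover corollaries: combine the budget comparison ${B'}^* \ge {B}^*$ from the preceding theorem (Complementarity from Market Expansion) with Lemma~\ref{lem:price_monotonicity}. Write $p_i(B)$ and $q_j(B)$ for the posted prices that \BEPP computes from the subproblems (PLUS) and (MINUS) at budget parameter $B$. The final policy is $p_i^* = p_i({B}^*)$ and $q_j^* = q_j({B}^*)$ in the original market, and ${p'_i}^* = p'_i({B'}^*)$ and ${q'_j}^* = q'_j({B'}^*)$ in the expanded market. Here $p'_i$ and $q'_j$ are the price maps for the incumbents after the entrant is added; the entrant's own price is simply ignored.

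The argument has three steps.

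\textbf{Step 1.} Invoke the preceding theorem in each of its two cases, a new plus agent or a new minus agent, to obtain ${B'}^* \ge {B}^*$.

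\textbf{Step 2.} Apply Lemma~\ref{lem:price_monotonicity} to the expanded market. Its proof only uses the KKT stationarity conditions and the concavity of the subproblem value functions (Lemma~\ref{lem:concavity}), and both hold for any number of agents under regular distributions. Hence $p'_i(\cdot)$ is non-decreasing in $B$ and $q'_j(\cdot)$ is non-increasing in $B$. This gives $p'_i({B'}^*) \ge p'_i({B}^*)$ for every incumbent plus agent $i$, and $q'_j({B'}^*) \le q'_j({B}^*)$ for every incumbent minus agent $j$.

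\textbf{Step 3.} Identify the prices evaluated at a common budget parameter with the original ones, i.e. $p'_i({B}^*) = p_i({B}^*)$ and $q'_j({B}^*) = q_j({B}^*)$. Chaining Steps 2 and 3 then yields ${p'_i}^* \ge p_i^*$ and ${q'_j}^* \le q_j^*$ for all incumbents, which is the claim.

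I expect Step 3 to be the main obstacle. Adding an agent changes the multilinear program itself, not just the budget, so the incumbents' optimal marginals at a fixed $B$ need not be literally unchanged. To justify the identification, I would first argue that the corollary is stated for the \BEPP policy regarded as a function of the selected budget alone, as in the preceding theorem. On that reading, the shift from ${B}^*$ to ${B'}^*$ is the only channel through which the entrant affects incumbents' prices, and the chain above goes through directly. If a more careful argument is needed, I would compare the shadow prices in the stationarity condition $\partial U_p/\partial q_i = \lambda_p^* C_i'(q_i)$ across the two markets at the same $B$. The aim is to show that the entrant lowers $\lambda_p^*$ for a new minus agent, by relaxing the effective revenue requirement through the curve $W_m$. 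For a new plus agent, the aim is to show that the increase in ${B}^*$ dominates, so that the concavity argument of Lemma~\ref{lem:price_monotonicity} again forces the incumbents' $q_i$ up and their $r_j$ down.
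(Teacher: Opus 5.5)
Your three-step chain is the route the paper takes. Its one-line proof applies Lemma~\ref{lem:price_monotonicity} to ${B'}^*\ge B^*$, which tacitly assumes the incumbents' price maps are unchanged by entry, i.e.\ your Step~3. You are right that this is the crux, but the gap is genuine and cannot be closed from the cited results.

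Half of Step~3 is exact. (PLUS) at fixed $B$ involves only plus agents, and (MINUS) only minus agents. Hence $p'_i(B)=p_i(B)$ when a minus agent enters, and $q'_j(B)=q_j(B)$ when a plus agent enters. In these cross-side cases your chain is valid, given the lemma and the theorem.

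Step~3 fails on the entrant's own side:
\begin{itemize}
\item A new plus agent competes for the same budget in (PLUS). At fixed $B$ the incumbents' $q_i(B)$, hence $p_i(B)=G_i^{-1}(q_i(B))$, can strictly fall.
\item A new minus agent contributes $(1-r_{\mathrm{new}})F_{\mathrm{new}}^{-1}(r_{\mathrm{new}})$ toward the target in (MINUS). Incumbents can then be excluded more, so $q_j(B)=F_j^{-1}(r_j(B))$ can strictly rise.
\end{itemize}
Your chain would need the budget increase ${B'}^*-B^*$ to outweigh these effects. The theorem gives only a weak inequality, possibly with equality, and the lemma is purely qualitative, so neither supplies this.

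Your fallbacks do not fill the gap. Reading the corollary ``as a function of the selected budget alone'' assumes entry acts only through $B^*$, which is Step~3 itself. The shadow-price plan mislocates the problem for a new minus agent: $\lambda_p^*$ at fixed $B$ is unchanged, since (PLUS) does not see minus agents. What moves is the multiplier of (MINUS). For a new plus agent, ``the increase in $B^*$ dominates'' is the claim to be proved, not an argument.

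In fact the statement can fail. Take additive utility and a single plus agent with very large $p_1$. Then $W_p$ increases strictly up to $\bar B=c_{\max}$ and is flat beyond it. Take all minus harms $|n_j|>0$ small, and assume the incumbents can raise at least $\bar B$ in revenue. Then \BEPP{} selects the smallest grid point $\ge\bar B$ both before and after adding a low-harm minus agent, so ${B'}^*=B^*$, consistent with the theorem. At this common target the revenue constraint binds. Stationarity gives $|n_j|=\mu R_j'(1-r_j)$, with concave revenue curves $R_j(a)=aF_j^{-1}(1-a)$. The entrant's revenue lowers $\mu$ and strictly raises some incumbent's $r_j$, so ${q'_j}^*>q_j^*$, violating item~2.

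The symmetric construction violates item~1. Let the minus-side revenue capacity pin $B^*$ at the top grid point, then add a plus agent with $p_{\mathrm{new}}>0$; at that fixed budget the entrant absorbs part of it and incumbents' $q_i$ fall.

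So the corollary needs an extra hypothesis controlling this own-side substitution effect. As written, both your proposal and the paper's proof rest on the unjustified identification in Step~3.
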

\begin{proof}
The proof is a direct application of Lemma~\ref{lem:price_monotonicity} to the result of the preceding theorem. The theorem establishes that the introduction of any new agent, plus or minus, leads to a weakly higher equilibrium budget ${B'}^*$ being selected by the \BEPP mechanism. Applying the price monotonicity property to this new, higher budget choice directly implies that all plus-agent prices must weakly increase and all incumbent minus-agent admission prices must weakly decrease.
\end{proof}

This result formalizes a powerful and non-obvious insight about complementarity. In many settings, new entrants are viewed as substitutes, and their arrival would be expected to decrease the value or offers made to incumbents. Our result shows that in this curation context, the opposite is true. A new \emph{star} agent (either a high-value plus agent or an efficient revenue-generating minus agent) makes the entire project so much more productive that the designer finds it optimal to increase the total investment and scale (a higher $B^*$). This increase translates into more generous offers for all other participants, who benefit from the positive spillover. The agents act as complements in driving the value and scale of the entire curated event.

\subsection{Comparative Statics for the Optimal Mechanism}
While the preceding theorems were stated for the output of our polynomial-time \texttt{BEPP} mechanism, the underlying arguments about how the equilibrium policy shifts are fundamental properties of the optimal Bayesian mechanism itself. The optimal mechanism is not a simple posted-price scheme, and its payments are complex functions of all agent reports. However, we can analyze its behavior by studying its effective \emph{selection thresholds}.

The key to the comparative statics is a foundational monotonicity property of the optimal policy with respect to the scale of the project, which we now formalize.

\begin{lemma}[Monotonicity of Optimal Selection Thresholds] \label{lem:threshold_monotonicity}
Consider the optimal Bayesian mechanisms for the separated subproblems.
\begin{itemize}
    \item For the plus-agent problem $\max \Exu{}{u_p(S_p)}$ s.t. $\Exu{}{\text{cost}} \le B$, the effective cost-acceptance threshold implemented by the optimal mechanism is non-decreasing in the budget $B$.
    \item For the minus-agent problem $\max \Exu{}{u_m(E_m)}$ s.t. $\Exu{}{\text{revenue}} \ge B$, the effective valuation-acceptance threshold is non-increasing in the revenue target $B$.
\end{itemize}
\end{lemma}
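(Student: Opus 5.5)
The plan is to reduce each subproblem to its relaxed program in marginal probabilities and then run a shadow-price (Lagrangian) comparative-statics argument. This parallels the proof of Lemma~\ref{lem:price_monotonicity}, but is applied to the optimal mechanism rather than to \BEPP's multilinear program.

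\textbf{Step 1: reduce to marginals.} As in the proof of Lemma~\ref{lem:upper_bound}, any DSIC, IR mechanism for the plus subproblem induces marginals $q_i=\Ex{x_i}$, and its expected payment to agent $i$ is at least $C_i(q_i)=q_iG_i^{-1}(q_i)$. Under regularity this lower bound is met exactly by the threshold rule ``accept iff $c_i\le G_i^{-1}(q_i)$''. I therefore define the \emph{effective cost-acceptance threshold} of agent $i$ as $\tau_i(B)=G_i^{-1}(q_i^*(B))$, where $q_i^*(B)$ is the interim acceptance probability of the optimal mechanism. Monotonicity of $\tau_i$ in $B$ is then equivalent to monotonicity of $q_i^*(B)$, since $G_i^{-1}$ is increasing. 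On the minus side, I define $\sigma_j(B)=F_j^{-1}(r_j^*(B))$, where $r_j^*$ is the exclusion probability; it then suffices to show that $r_j^*(B)$ is non-increasing in $B$.

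\textbf{Step 2: Lagrangian monotone comparative statics.} For the plus side I write $W_p(B)=\max_{\bq}\{\Phi(\bq):\sum_i C_i(q_i)\le B\}$, where $\Phi$ is the objective on marginals (the concave closure $U_p^+$, which is concave). Each $C_i$ is convex and increasing under regularity. By Lemma~\ref{lem:concavity}, $W_p$ is concave, so its supergradient, the shadow price $\lambda_p(B)$, is non-increasing in $B$. For fixed $\lambda$, the optimizer maximizes $\Phi(\bq)-\lambda\sum_i C_i(q_i)$. Raising $B$ lowers $\lambda$, which reduces the marginal penalty on every coordinate. I would then argue, coordinate by coordinate, that the optimal $q_i$ weakly increases. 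The cleanest route is to note that the penalty $-\lambda\sum_iC_i(q_i)$ has increasing differences in $(q_i,-\lambda)$ because $C_i$ is increasing. The minus side follows from the same reformulation used in Lemma~\ref{lem:concavity}: writing the exclusion cost as forgone revenue with budget $R_{\max}-B$ turns it into a procurement problem. A larger $B$ means a smaller effective budget, so $r_j^*$ is non-increasing, which gives the stated direction for the valuation threshold.

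\textbf{Main obstacle.} The coordinatewise step is the hard part, because an increasing-differences argument needs $\Phi$ to be supermodular in $\bq$. Submodular utilities typically make the marginals \emph{substitutes}, so for fixed $\lambda$ the optimizer need not move monotonically in every coordinate. I would first try to handle this with the Topkis/Milgrom--Shannon framework on the lattice of $\bq$, using the separable-cost structure together with the KKT stationarity $\partial\Phi/\partial q_i=\lambda C_i'(q_i)$ on interior coordinates. If that fails, I would fall back on a weaker interpretation: take the effective threshold to be an aggregate quantity, for instance the common virtual-cost cutoff $1/\lambda_p(B)$ in the spirit of Algorithm~\ref{alg:optimal_additive}. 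Its monotonicity follows directly from the concavity of $W_p$ and $W_m$ in Lemma~\ref{lem:concavity}, with no supermodularity requirement.
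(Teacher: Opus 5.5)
\textbf{The gap is the step you flag as the main obstacle, and it is fatal rather than technical.} Per-agent monotonicity of $q_i^*(B)$ fails in general for submodular $u_p$ (and symmetrically for $u_m$), so neither Topkis nor KKT stationarity can deliver it. Here is a counterexample.

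Take two plus agents with $u_p(S)=\max_{i\in S}w_i$, $w_1=5$, $w_2=20$, and costs concentrated near $1$ and $10$ (e.g.\ uniform on tiny intervals, which is regular). Agent~1 buys about $5$ units of utility per unit of budget and agent~2 only about $2$. So at $B\approx 1$ the optimum hires agent~1 essentially always. Now set $B$ equal to the price of always hiring agent~2. The value $20$ is then the global maximum, it requires $q_2=1$, and that exhausts the budget, so $q_1^*=0$.

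Agent~1's effective threshold therefore collapses as $B$ grows, both under your definition and under the paper's sets $\mathcal{C}_1(B)$. Your relaxed program behaves the same way: $\max\{20q_2+5\min(q_1,1-q_2): q_1+10q_2\le B\}$ is solved uniquely by $(q_1,q_2)=(1,0)$ at $B=1$ and by $(0,1)$ at $B=10$.

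The paper's proof does not supply this step either. After supposing $\mathcal{C}_k(B')\subset\mathcal{C}_k(B)$, it argues only informally that compensating through other agents is ``inefficient'' and then asserts that all thresholds relax. In the example above, that substitution is exactly the optimal response to the larger budget. Your shadow-price argument is sound when $u_p$ is additive: the Lagrangian decouples, the optimal rule is $\psi_i(c_i)\le p_i/\lambda(B)$, and $\lambda(B)$ is non-increasing. That is the setting in which the per-agent claim actually holds.

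\textbf{The fallback changes the statement rather than proving it.} For submodular $u_p$, $1/\lambda_p(B)$ is a dual quantity, not a cutoff the optimal mechanism applies to any agent. Its monotonicity says nothing about individual agents, yet the subsequent theorem invokes this lemma precisely to conclude that thresholds become more lenient for every incumbent agent. If you keep the fallback, get concavity of the true $W_p$ by randomizing between the optimal mechanisms for two budgets. This preserves DSIC and IR and is linear in expected cost and utility. Lemma~\ref{lem:concavity} is a weak foundation, since its proof treats the multilinear extension as concave, which it is not in general.

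\textbf{Step~2 also analyzes the wrong object.} It optimizes $U_p^+$ subject to $\sum_i C_i(q_i)\le B$, which is the relaxation $W_p^+(B)$ of Lemma~\ref{lem:upper_bound}, not the optimal mechanism. For strictly increasing $\psi_i$, paying exactly $C_i(q_i)$ forces the step rule $\mathbb{I}[c_i\le G_i^{-1}(q_i)]$, which makes selections independent across agents. An optimal mechanism that correlates selections therefore pays more than $\sum_i C_i(q_i)$, and its marginals need not coincide with the relaxed optimizer. Even a valid comparative-statics result for the relaxation would still have to be transferred to the actual mechanism.
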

\begin{proof}
We provide the proof for the plus-agent subproblem. The proof for the minus-agent side is symmetric.

Let $W_p(B)$ be the optimal utility for the plus-agent problem with budget $B$, and let $x^*(B)$ be the optimal allocation rule that achieves it. An allocation rule $x$ can be characterized by the set of agent types it accepts. Let us denote the set of acceptable cost-types for agent $i$ as $\mathcal{C}_i(B) = \{c_i \mid x_i^*(B)(c_i, \cdot) > 0 \text{ for some reports from others}\}$. A more lenient policy corresponds to a larger set $\mathcal{C}_i(B)$.

Consider two budget levels $B' > B$. Since the feasible set of mechanisms for budget $B'$ contains all mechanisms that are feasible for budget $B$, the optimal value must be non-decreasing: $W_p(B') \ge W_p(B)$.

Assume for contradiction that the policy for budget $B'$ is strictly more restrictive for some agent $k$ than the policy for budget $B$, meaning $\mathcal{C}_k(B') \subset \mathcal{C}_k(B)$. This implies that to achieve the optimal welfare $W_p(B')$, the mechanism relies less on agent $k$. To achieve a welfare of at least $W_p(B)$, the mechanism must compensate for the reduced participation of agent $k$ by increasing the participation of other agents. This, however, would require a larger expenditure from the budget on those other agents. This creates a tension: a more restrictive threshold for one agent necessitates a more lenient (and costly) threshold for others just to achieve the same utility level, making it an inefficient way to spend an even larger budget $B'$. The optimal policy for the larger budget $B'$ will instead weakly relax the acceptance thresholds for all agents to best leverage the additional resources. Therefore, the effective cost-acceptance threshold must be non-decreasing in $B$.
\end{proof}

With this foundational lemma about the optimal policy's structure, we can formally state the comparative statics results. We present the market expansion theorem as a representative example.

\begin{theorem}[Complementarity from Market Expansion in the Optimal Mechanism]
Let $(B^*, W^*)$ be the equilibrium budget and welfare for the optimal Bayesian mechanism in a market with $n$ plus agents. If a new plus agent, $n+1$, is introduced, the new optimal policy will have an equilibrium budget ${B'}^* \ge B^*$ and its selection thresholds for all incumbent agents will be weakly more lenient.
\end{theorem}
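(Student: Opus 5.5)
The plan is to treat the optimal Bayesian mechanism through its decomposition into the two separated subproblems. Fix the budget level $B$ transferred from minus agents to plus agents. The optimal value is then
\begin{align*}
W^*=\max_{B\ge 0}\{W_p(B)+W_m(B)\},
\end{align*}
and the equilibrium budget $B^*$ is a maximizer. Two structural facts carry the argument. First, $W_m$ does not depend on the plus side at all, so adding plus agent $n+1$ only changes $W_p$ to some $W_p'$. Second, by Lemma~\ref{lem:concavity}, both $W_p$ and $W_p'$ are concave under regularity.

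\textbf{Step 1: pointwise improvement.} I would first show $W_p'(B)\ge W_p(B)$ for every $B$. Any DSIC, IR mechanism for the old market can be run in the new market with agent $n+1$ never selected and paid zero. This preserves feasibility and the expected cost, and by monotonicity of $u_p$ it preserves utility.

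\textbf{Step 2: $B'^*\ge B^*$.} Pointwise dominance alone does not move the argmax, so this is the step I expect to be the main obstacle. I would aim for increasing differences: $W_p'(B)-W_p(B)$ is non-decreasing in $B$. Given that, for any $B<B^*$,
\begin{align*}
W_p'(B)+W_m(B)&\le W_p(B)+W_m(B)+\bigl(W_p'(B^*)-W_p(B^*)\bigr)\\
&\le W_p'(B^*)+W_m(B^*),
\end{align*}
so some maximizer of the new curve lies at or above $B^*$. Choosing the largest maximizer resolves ties.

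To get increasing differences, I would first try the shadow-price characterization used in Lemma~\ref{lem:price_monotonicity}. The derivative $W_p'(B)$ is the shadow price $\lambda_p$ of the budget. With the extra agent, the same budget buys more utility, but the marginal value of an extra dollar is at least as large, because the new agent is an additional outlet for spending.

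This claim is delicate for submodular $u_p$, since substitutes can reduce marginal values. A fallback is to restrict to the relaxation programs (PLUS) and argue via the KKT conditions: $\partial U_p/\partial q_i=\lambda C_i'(q_i)$. The aim would be to show that the optimal $\lambda$ at each budget does not decrease when the new coordinate is added. If this fails in general, I would state the theorem under the condition that the increment is supermodular in $B$, which holds for additive $u_p$.

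\textbf{Step 3: thresholds.} With $B'^*\ge B^*$ in hand, I would apply Lemma~\ref{lem:threshold_monotonicity}. In the minus subproblem the revenue target rises, so valuation-acceptance thresholds weakly decrease, i.e. admission becomes more lenient. In the plus subproblem the budget rises, so cost-acceptance thresholds of incumbents weakly increase.

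There is a subtlety here: the plus problem itself changed by the new agent, so the lemma must be applied across markets, not just across budgets. I would handle this by comparing in two stages. First, go from the old market at $B^*$ to the new market at $B^*$, using an argument like Lemma~\ref{lem:threshold_monotonicity}'s contradiction reasoning. Then go from the new market at $B^*$ to the new market at $B'^*$ by the lemma directly. I acknowledge that the first stage may only hold in the weak sense given by that lemma's heuristic argument.
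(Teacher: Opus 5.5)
Your Step 3 cannot be completed for the incumbent plus agents, because that half of the statement is false, already for additive utility. Your Step 2 is sound in the additive case, but your own fallback condition is necessary: for general submodular utility the budget claim fails too.

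\textbf{Why Step 3 fails.} Stage one of your two-stage comparison goes the wrong way. At a fixed budget the entrant absorbs part of the money, so the incumbents' shadow price of a dollar weakly rises and their cost thresholds weakly tighten. Stage two cannot undo this. At an interior joint optimum the plus-side shadow price equals $-\frac{d}{dB}W_m(B)$, which is non-decreasing in $B$ because $W_m$ is concave. So ${B'}^*\ge B^*$ forces a weakly \emph{higher} shadow price at the new optimum, not a lower one.

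You can see this directly in Algorithm~\ref{alg:optimal_additive}. Adding a plus agent subtracts its non-negative expected virtual cost from the surplus $B(\lambda)$ at every $\lambda$, so $\lambda^*$ weakly increases. That makes the incumbents' test $p_i\ge\lambda^*\psi_i(c_i)$ stricter, while the minus test $n_j+\lambda^*\phi_j(v_j)\ge 0$ becomes looser.

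\textbf{A concrete instance.} Let each plus agent have $p_i=1$ and $c_i\sim U[0,1]$, so $\psi_i(c)=2c$. Let there be one minus agent with $n_j=-1$ and $v_j\sim U[0,1]$, so $\phi_j(v)=2v-1$. For $\lambda\ge 1$, each plus agent costs $1/(4\lambda^2)$ in expectation and the minus agent yields $(\lambda^2-1)/(4\lambda^2)$.
\begin{itemize}
\item With one plus agent, $\lambda^*=\sqrt{2}$ and $B^*=1/8$.
\item With two, $\lambda^*=\sqrt{3}$ and ${B'}^*=1/6$.
\item The minus admission threshold falls from $(1+1/\sqrt{2})/2$ to $(1+1/\sqrt{3})/2$, so admission loosens.
\item The incumbent's cost threshold falls from $1/(2\sqrt{2})$ to $1/(2\sqrt{3})$, so it becomes stricter.
\end{itemize}
The paper's proof reaches the opposite conclusion only by applying Lemma~\ref{lem:threshold_monotonicity}, which is a within-market statement, across markets. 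You correctly flagged that, but no argument in the style of that lemma can rescue stage one.

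\textbf{What survives.} In the additive case, your Step 2 holds, and it correctly repairs the paper's unjustified passage from pointwise dominance to a larger argmax. The minus-agent half of Step 3 also holds, since the minus subproblem is unchanged.

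Increasing differences does hold there. Write $\widetilde W_p$ for the new plus-side value function (your $W_p'$) and $h(b)=\max\{p_{n+1}q: C_{n+1}(q)\le b\}$. Then $\widetilde W_p(B)=\max_{0\le b\le B}\{W_p(B-b)+h(b)\}$. For $B_1<B_2$ with $b_1$ optimal at $B_1$, concavity of $W_p$ gives
\[
\widetilde W_p(B_2)-W_p(B_2)\ \ge\ W_p(B_2-b_1)-W_p(B_2)+h(b_1)\ \ge\ W_p(B_1-b_1)-W_p(B_1)+h(b_1)=\widetilde W_p(B_1)-W_p(B_1).
\]

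For general submodular $u_p$, the restriction you mention as a fallback is necessary rather than cosmetic. Suppose the entrant is a cheap substitute for a costly incumbent, e.g.\ $u_p(S)=V\cdot\mathbb{I}[S\neq\emptyset]$. Then the designer reaches the same plus-side value with far less money and raises less from the minus side, so ${B'}^*<B^*$.

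The provable statement is therefore: under additive plus-side utility, ${B'}^*\ge B^*$ and minus-side admission weakly loosens, while incumbent plus thresholds weakly tighten.
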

\begin{proof}
Let $W(B) = W_p(B) + W_m(B)$ be the true optimal utility function for the original market. The introduction of a new plus agent creates a new plus-agent value function, $W'_p(B)$, where $W'_p(B) \ge W_p(B)$ for all $B$, because the designer's optimization problem for any given budget is over a larger set of choices.

This implies that the new total utility function, $W'(B) = W'_p(B) + W_m(B)$, is pointwise greater than or equal to the original. The optimal Bayesian mechanism finds the maximum of this function. Since the function's value has increased due to the higher productivity of the plus-agent side, the new equilibrium will feature weakly higher welfare, ${W'}^* \ge W^*$, and will be achieved at a weakly higher budget transfer, ${B'}^* \ge B^*$.

The final claim follows from Lemma~\ref{lem:threshold_monotonicity}. Since the equilibrium budget transfer of the optimal policy has increased from $B^*$ to ${B'}^*$, the lemma dictates that the optimal mechanism's effective selection thresholds must become more lenient for all participants. This means higher cost-acceptance thresholds for plus agents and lower valuation-acceptance thresholds for minus agents.
\end{proof}

The other comparative statics results regarding improved agent populations (FOSD) and increased individual externalities follow the same formal structure. In each case, the change to the environment raises the total utility curve $W(B)$, which leads to a higher optimal budget choice $B^*$, which in turn implies more lenient selection thresholds via Lemma~\ref{lem:threshold_monotonicity}. This confirms that the \texttt{BEPP} mechanism's qualitative behavior correctly mirrors the fundamental properties of the true optimal policy.

\section{Prior-free Mechanisms}\label{apd:prior-free}
In this section, we consider the problem in a prior-free setting, where the distributions of agents' valuations and costs are not known to the designer. In this scenario, comparing performance against the second-best benchmark can be less useful, as the lack of information can severely limit the performance of any mechanism.
Instead, we aim to find a mechanism that is competitive with the offline optimal utility, which is stronger than the second-best benchmark.
To achieve such a guarantee, we adopt the standard large market assumption, which is commonly imposed in the literature~\cite{anari2014mechanism,balkanski2016bayesian}.
Applied to our setting, this assumption means that the number of agents, both on the minus and plus side, is large enough that the effect of any singular agent on the overall outcome is negligible.

We adopt the following large market assumption:
\begin{assumption}[Large Market Assumption]
    Let $B=\sum_{j}v_j$ be the maximum attainable budget.
    Then, for each plus agent $i$ and minus agent $j$, the following holds:
    \[
    c_i \leq \frac{M}{n}B, v_j \leq \frac{M}{m} B,
    \]
    for constant $M>0$, respectively.
\end{assumption}
Without an additional assumption, no mechanism can approximate the first-best outcome. For example, consider an instance with a single plus agent who has both high cost and high utility; incentive compatibility can prevent any mechanism from recruiting this agent.

We prove the following result under this assumption.
\begin{theorem} \label{thm:lma}
    There exists a prior-free mechanism that provides a constant approximation to the offline optimal under the large market assumption.
\end{theorem}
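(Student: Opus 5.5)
We would prove Theorem~\ref{thm:lma} with a random-sampling mechanism, in the spirit of the prior-free budget-feasible mechanisms of~\cite{anari2014mechanism} and of random-sampling auctions for digital goods. The mechanism has two stages. First, randomly partition the minus agents into two halves $\cM_1,\cM_2$ and the plus agents into $\cN_1,\cN_2$. Second, use the bids of $\cM_1,\cN_1$ only to compute posted prices, and apply those prices to $\cM_2,\cN_2$. On the minus side we post a uniform admission price $\tau$ to $\cM_2$. On the plus side we run a budget-feasible (large-market) posted-price procurement over $\cN_2$ with budget equal to the realized revenue from $\cM_2$. Agents in the sample halves are excluded (minus) or not recruited (plus), so each agent faces a price independent of its own bid. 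This gives DSIC and IR immediately. Ex-post BB holds by construction, since plus payments are capped by collected minus revenue.

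For the approximation, we compare against the offline optimum $S^*=(S_p^*,S_m^*)$ with $\sum_{i\in S_p^*}c_i\le \sum_{j\notin S_m^*}v_j$. We treat the additive case and take the separable submodular case as an extension. We split the utility as $u_p(S_p^*)+u_m(S_m^*)$, where $S_m^*$ now denotes the excluded minus agents. The analysis then has three steps.

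\emph{Minus side.} Excluding every minus agent in $\cM_1$ loses revenue. For utility, however, $u_m$ is monotone in the excluded set, so excluding more only helps. On the excluded side we therefore lose nothing, and we can even admit fewer agents. The only real constraint is raising enough revenue.

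\emph{Revenue.} We show that with constant probability, charging $\tau$ set from $\cM_1$ (for instance the revenue-maximizing uniform price on the sample, or a price admitting roughly as many agents as needed) yields revenue from $\cM_2$ at least a constant fraction of the revenue $R^*=\sum_{j\notin S_m^*}v_j$ used by the optimum. The argument is a Chernoff/Hoeffding bound, applied as in Lemma~\ref{lem:revenue_concentration}: since $v_j\le MB/m$, each agent contributes a small share, so the sample halves have comparable value profiles.

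\emph{Plus side.} Given budget $\Omega(R^*)$, we invoke a large-market budget-feasible mechanism. Its prices are again learned from $\cN_1$, and $c_i\le MB/n$ makes each cost negligible. By submodularity it achieves a constant fraction of the best plus set affordable with budget $R^*$, namely $u_p(S_p^*)$. Because $u_p$ is submodular, scaling the budget by a constant loses only a constant factor: splitting $S_p^*$ into $O(1)$ parts of small cost, one part retains a constant fraction of its value.

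The main obstacle is the revenue step. The offline optimum admits a specific set of minus agents, not a price threshold, so we need that the best uniform price captures a constant fraction of $\sum_{j\notin S_m^*}v_j$ without admitting agents that destroy $u_m$. The revenue side alone only yields a logarithmic loss, because $\max_k k\,v_{(k)}$ compares to $\sum_j v_j$ only up to a $\log m$ factor. We would therefore first try to exploit the large-market bound together with the constant $M$: since $v_j\le MB/m$ and $B=\sum_j v_j$, the price $\tau=B/(2m)$ already extracts at least $B/2$ minus at most a small loss. We would also pick the admitted set as the cheapest-impact agents above $\tau$. If the tension with $u_m$ persists, we fall back on the randomization trick of Appendix~\ref{apd:simple}: with probability $1/2$ we admit nobody, which secures $u_m(\cM)\ge u_m(S_m^*)$; with probability $1/2$ we admit everyone above $\tau$, which maximizes revenue for the plus side. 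This yields an overall constant $c/2$-type factor.
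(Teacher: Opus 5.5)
Your fallback argument is correct and is, in substance, the paper's proof. The paper likewise ignores $u_m$ in the revenue branch, using only $u_m\ge 0$. It mixes with the mechanism that selects nobody to secure $W_m(0)=u_m(\cM)\ge W_m(B^*)$. It compares the plus side with $W_p(B)$ for $B=\sum_j v_j\ge R^*$. So the tension you call the main obstacle never has to be resolved, and your primary route (matching $R^*$ while protecting $u_m$) can be dropped.

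The differences lie in the sub-steps.

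\textbf{Revenue.} The paper admits the top $k=\lfloor\alpha m\rfloor$ bids at the $(k+1)$-st bid, with $\alpha$ depending only on $M$. Since $k$ does not depend on bids, this is truthful with no sampling. The $(k+1)$-st bid is at least the average of the bottom $m-k$ bids, which is at least $(B-kMB/m)/(m-k)$. Hence it collects at least $(\frac{1}{4M}-\frac{M}{m})B$ on every profile. Your sample-based price also works, but only with high probability and after discarding half the minus agents. It also requires $\tau$ to be computed from $\cM_1$ (e.g.\ $\tau=\sum_{j\in\cM_1}v_j/m$); the literal $\tau=B/(2m)$ depends on each agent's own bid and is not truthful.

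\textbf{A quantitative slip.} The claim that price $B/(2m)$ extracts $B/2$ minus a small loss is wrong. Agents below $\tau$ carry at most $B/2$ in total, so those above carry at least $B/2$. Since each has value at most $MB/m$, there are at least $m/(2M)$ of them, giving revenue at least $B/(4M)$. This is essentially tight: if $m/M$ agents have value $MB/m$ and the rest $0$, the revenue is only $B/(2M)$. On your sampled half you get roughly $B/(8M)$, still a constant fraction, so the argument survives with a worse constant.

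\textbf{Budget scaling.} Partitioning $S_p^*$ into $O(1/\delta)$ pieces of cost at most $\delta B$ (possible since each $c_i\le MB/n$) and applying subadditivity is a valid substitute for the greedy-prefix argument of Lemma~\ref{lm:pim-fract}. The plus-side sampling is unnecessary if, as the paper does, you use a large-market budget-feasible mechanism as a black box with the realized minus revenue as budget. If you keep the sampling, you must add $\mathbb{E}[u_p(S_p^*\cap\cN_2)]\ge\frac12 u_p(S_p^*)$, which holds for monotone submodular $u_p$.
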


To prove this, we first show that it is possible to guarantee a constant fraction of the maximum budget obtained from the minus agents, while ignoring its effect on the utility.
Next, we use the large market assumption to show that a fraction of the optimal budget is enough to obtain a constant factor of the optimal utility on the plus agent side.
Finally, we randomize over two mechanisms with probabilities parameterized by the large market assumption constant, which gives us a desired result.

To prove this, we require two technical lemmas.
The first is to show that it is possible to guarantee a constant fraction of the maximum budget obtained from minus agents.
\begin{lemma}\label{lm:pim-budget}
    There exists a truthful mechanism that achieves a revenue of at least \[
    (\frac{1}{4M} - \frac{M}{m})B
    \]
    from the minus agents given the large market assumption with constant $M$ and maximum possible budget $B$.
\end{lemma}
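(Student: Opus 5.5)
We would prove Lemma~\ref{lm:pim-budget} with a random-sampling, prior-free pricing mechanism in the style of random sampling optimal price (RSOP) auctions for digital goods. Since $B=\sum_j v_j$ and every $v_j \le \frac{M}{m}B$, no single minus agent carries a noticeable share of the total value. A single posted price learned from one half of the market should therefore extract a constant fraction of $B$ from the other half. The mechanism is as follows. Partition $\cM$ uniformly at random into two halves $A$ and $A'$. On $A$, compute the revenue-maximizing single price
\[
\pi_A=\arg\max_{\pi}\ \pi\cdot|\{j\in A: v_j\ge \pi\}|,
\]
and offer $\pi_A$ as a take-it-or-leave-it price to each agent in $A'$. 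Agents in $A$ are rejected. Symmetrically, offer $\pi_{A'}$ to $A$, or simply use one side only. Truthfulness follows immediately: each agent's price is computed from reports of the other half, so it faces a fixed posted price, and IR holds as well.

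\textbf{Step 1: a good single price exists.} Sort the values $v_{(1)}\ge\cdots\ge v_{(m)}$. I would lower bound the optimal single-price revenue $R^*=\max_k k\,v_{(k)}$ by a constant fraction of $B$. The tool is the standard inequality $\sum_k v_{(k)} \le R^* (1+\ln m)$. This carries a log, though, so I would instead use the large market bound $v_j\le \frac{M}{m}B$ directly. Consider the price $\pi=\frac{B}{2m}$. Agents with $v_j<\pi$ contribute less than $B/2$ in total. Hence agents with $v_j\ge\pi$ contribute at least $B/2$, and since each contributes at most $\frac{M}{m}B$, there are at least $\frac{m}{2M}$ of them. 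The revenue at price $\pi$ is therefore at least $\frac{m}{2M}\cdot\frac{B}{2m}=\frac{B}{4M}$. So $R^*\ge \frac{B}{4M}$, which already matches the leading constant in the statement.

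\textbf{Step 2: transfer across the random split.} This is the main obstacle. Writing $n_\pi(S)=|\{j\in S:v_j\ge\pi\}|$, we need the revenue $\pi_A\, n_{\pi_A}(A')$ on the opposite half to be close to $\pi_A\, n_{\pi_A}(A)$. A cleaner alternative I would try first avoids learning the optimum: use $A$ only to estimate $B$ via $\hat B=2\sum_{j\in A}v_j$, then post the fixed price $\hat\pi=\hat B/(2m)$ on $A'$. This estimate is robust, because $\sum_{j\in A}v_j$ concentrates around $B/2$ with per-term bound $\frac{M}{m}B$. I would then redo Step~1 on $A'$ with the perturbed price and the perturbed half-sum. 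The additive losses come from three sources: the rounding in the count $\lceil\cdot\rceil$, the single-agent granularity of at most $\frac{M}{m}B$, and the fact that only about half the agents remain. These should total the subtractive term $\frac{M}{m}B$.

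\textbf{Step 3: assemble.} Combining Steps 1 and 2, in expectation over the partition, and symmetrizing by running the same procedure in both directions so that no half of the value is lost, gives revenue at least $(\frac{1}{4M}-\frac{M}{m})B$. The delicate part is bookkeeping the constants in Step~2 so that the halving costs nothing. Symmetrization is what recovers it: the two halves together see all agents, each facing a price computed from its complement. The deviation of the complement's estimate is charged to the single-agent term $\frac{M}{m}B$.
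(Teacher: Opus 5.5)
Your Step 1 is correct, and it carries most of the weight: at least $\frac{m}{2M}$ minus agents have $v_j\ge \frac{B}{2m}$. The gap is in Steps 2--3. You flag the transfer across the random split as the main obstacle but never carry it out, and the accounting you sketch would not close it.

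The sampling error $d=\sum_{j\in A}v_j-\frac{B}{2}$ can be of order $B\sqrt{M/m}$, far larger than $\frac{M}{m}B$, so it cannot be ``charged to the single-agent term.'' Any argument that bounds each half separately via concentration of $d$ loses a term linear in $|d|$, i.e.\ of order $B/\sqrt{Mm}$.

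Your estimate-then-post mechanism can be rescued, but only by treating both halves jointly. Write $S_A=\sum_{j\in A}v_j$ and redo Step 1 with price $S_A/m$ on $A'$ and $S_{A'}/m$ on $A$. The total revenue is then at least
\[
\frac{S_A\left(S_{A'}-\frac{S_A}{2}\right)+S_{A'}\left(S_A-\frac{S_{A'}}{2}\right)}{MB}=\frac{B}{4M}-\frac{3d^2}{MB},
\]
where the linear terms in $d$ cancel exactly. For a balanced split without replacement, $\sum_j v_j^2\le\frac{M}{m}B^2$ gives $\mathbb{E}[d^2]\le\frac{MB^2}{4(m-1)}$. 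The expected revenue is therefore at least $\frac{B}{4M}-\frac{3B}{4(m-1)}\ge(\frac{1}{4M}-\frac{M}{m})B$ for even $m\ge 4$.

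None of this is in the proposal. Even once completed, it is only an in-expectation guarantee for a randomized mechanism. Lemma~\ref{lm:pim-plus-bf}, by contrast, spends the collected revenue as a realized budget.

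The idea you are missing is that no learning is needed at all. Fix $k$ in advance from $m$ and $M$, and run the $(k+1)$-st price auction: accept the $k$ highest bids and charge each winner the $(k+1)$-st highest bid. This mechanism is deterministic and DSIC, and it effectively posts the price $v_{(k+1)}$ to every winner.

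Take $k=\lceil\frac{m}{2M}\rceil-1$. Your Step 1 gives $v_{(k+1)}\ge\frac{B}{2m}$, so the revenue on every profile is at least
\[
\left(\frac{m}{2M}-1\right)\frac{B}{2m}=\frac{B}{4M}-\frac{B}{2m}\ge\left(\frac{1}{4M}-\frac{M}{m}\right)B .
\]
The last inequality uses $M\ge 1$, which follows from $\max_j v_j\ge \frac{B}{m}$.

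This is essentially the paper's argument. The paper runs the same auction and lower-bounds $v_{(k+1)}$ by $\frac{B-kMB/m}{m-k}$, the average of the bottom $m-k$ values. It then optimizes $k=\lfloor\alpha m\rfloor$ at $\alpha=1-\sqrt{1-1/M}\approx\frac{1}{2M}$, obtaining the constant $2M-2\sqrt{M(M-1)}-1\ge\frac{1}{4M}$ before the rounding loss. The only advantage your sampling approach would have, if completed, is that it does not need to know $M$.
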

\begin{proof}
For any value of $1 \leq k \leq m-1$, we can consider a truthful auction that accepts the largest $k$ bids, receiving payments equal to the $k+1$st largest bid.
Now, we want to optimize $k$ to achieve the maximum revenue.
For any value of $k$, it follows from the large market assumption that the $k+1$-st largest bid is at least
\begin{align*}
    \frac{\sum_{j} v_j - k \frac{M}{m}B}{m-k}
    \geq \frac{B - k \frac{M}{m}B}{m-k}
    = B \frac{m-Mk}{m(m-k)}.
\end{align*}
Therefore, the total budget received is at least
\[
\frac{k(m-Mk)}{m(m-k)}B.
\]
To simplify the following calculations, we ignore the constant $B$ and focus on the value of the coefficient \(\frac{k(m-Mk)}{m(m-k)}\).
Next, assume that $k=\alpha m$ for some value $0<\alpha<1$. Then, the expression can be simplified to
\begin{align*}
    \alpha m\frac{m-\alpha m M}{m(m-\alpha m)}= \alpha \frac{1-\alpha M}{1-\alpha}.
\end{align*}
Next, we find the value of $\alpha$ maximizing this expression, and choose $k=\lfloor \alpha m\rfloor$.
Note that given the large market assumption, the budget attained by choosing the top $k$ bids is at most $\frac{M}{m}B$ less than the value attained by maximizing $\alpha$.
This term is maximized when when $\alpha=1-\frac{\sqrt{M^2-M}}{M}$, resulting in a value of
\begin{align*}
    (2M-2\sqrt{M(M-1)}-1).
\end{align*}
Therefore, choosing $k=\lfloor 1-\frac{\sqrt{M^2-M}}{M} \rfloor$
can achieve a budget of at least
\[
(2M-2\sqrt{M(M-1)}-1 - \frac{M}{m}) B \geq  (\frac{1}{4M} - \frac{M}{m}) B.
\]
\end{proof}

Next, we use the large market assumption to show that a fraction of the optimal budget is enough to obtain a constant factor of the optimal utility on the plus agent side.

\begin{lemma}\label{lm:pim-fract}
    For any budget $B'$, let $W_p(B')$ be the maximum utility that can be obtained on the plus agent side given budget $B'$.
    Then, for any constant $\delta > 0$,
    \[
    W_p(\delta B) \geq (\delta - \frac{M}{n}) W_p(B).
    \]
    \end{lemma}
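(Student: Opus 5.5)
The plan is to work in the offline (full-information) sense that the prior-free section uses. Here $W_p(B')$ is the maximum of $u_p(S)$ over sets $S \subseteq \cN$ with $\sum_{i\in S} c_i \le B'$. Fix an optimal set $S^\star$ for budget $B$, so $W_p(B) = u_p(S^\star)$. The idea is to split $S^\star$ into pieces of cost at most $\delta B$ and use submodularity to show one piece retains the claimed fraction of the utility. If $\delta \ge 1$ the claim is trivial by monotonicity of $W_p$, so assume $\delta < 1$.

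\textbf{Step 1: partition $S^\star$ greedily.} Order the elements of $S^\star$ arbitrarily and cut the sequence into consecutive blocks $T_1,\dots,T_\ell$. Each block is filled until adding the next element would push its cost above $\delta B$. By the large market assumption every $c_i \le \frac{M}{n}B$. Hence every block except possibly the last has cost strictly greater than $\delta B - \frac{M}{n}B$, and every block has cost at most $\delta B$. Since the total cost is at most $B$, the number of blocks satisfies $\ell - 1 < \frac{1}{\delta - M/n}$, assuming $\delta > M/n$; otherwise the right-hand side of the claim is nonpositive and there is nothing to prove. Each $T_k$ is feasible for budget $\delta B$.

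\textbf{Step 2: use subadditivity.} Monotone nonnegative submodular functions are subadditive, which gives $u_p(S^\star) \le \sum_k u_p(T_k) \le \ell \max_k u_p(T_k) \le \ell\, W_p(\delta B)$. The goal is to turn this into $W_p(\delta B) \ge (\delta - \frac{M}{n}) W_p(B)$.

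\textbf{Step 3: fix the count, which is the main obstacle.} The bound on $\ell$ has an additive $+1$, giving only $\ell \le \frac{1}{\delta - M/n} + 1$. This is worse than the claimed $\frac{1}{\delta - M/n}$. To remove it, I will first try a fractional averaging argument instead of taking a maximum.

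Arrange $S^\star$ on a circle of total length $C = \sum_{i \in S^\star} c_i \le B$. Consider all contiguous arcs of length $\delta B$; elements of $S^\star$ that lie entirely inside an arc form a feasible set for budget $\delta B$. A uniformly random starting point captures each element entirely with probability at least $(\delta B - c_i)/C \ge \delta - \frac{M}{n}$. Averaging over starting points gives a distribution over feasible sets $T$ in which each element of $S^\star$ appears with marginal probability at least $\delta - \frac{M}{n}$.

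Submodularity should then give $\mathbb{E}[u_p(T)] \ge (\delta - \frac{M}{n})\, u_p(S^\star)$. I plan to prove this by telescoping marginals along a fixed order, using that consecutive arcs preserve order structure; this is the standard argument that contiguous random windows are at least as good as independent sampling for submodular functions.

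If this averaging step resists a clean proof, the fallback has two parts. First, use the weaker bound from Step 2. Second, note that the block-partition argument already yields $\delta - \frac{M}{n}$ exactly when $1/(\delta - M/n)$ is handled by a slightly finer cut: use blocks of cost at most $\delta B$ with $\lfloor 1/\delta' \rfloor$ blocks, and absorb the remainder into the best block.
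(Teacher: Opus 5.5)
Your Step~3 route is correct and genuinely different from the paper's. Write $c(T)=\sum_{i\in T}c_i$.

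\textbf{The paper's route.} It orders $S^\star$ greedily by (marginal) utility per unit cost and keeps the longest prefix $T$ with $c(T)\le\delta B$. By the large-market bound, $c(T)\ge(\delta-\frac{M}{n})B$. Because the greedy marginal densities are non-increasing, this prefix retains at least its cost share $c(T)/c(S^\star)\ge\delta-\frac{M}{n}$ of $u_p(S^\star)$.

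\textbf{Your route.} You take an arbitrary order and a uniformly random arc of length $\delta B$ on a circle of circumference $c(S^\star)\le B$. Each element is then captured with probability $(\delta B-c_i)/c(S^\star)\ge\delta-\frac{M}{n}$, and you average.

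\textbf{Justifying the averaging step.} It needs neither the arc structure nor any comparison with independent sampling. That comparison is false in general: for a coverage function in which consecutive elements are redundant copies, windows do worse than independent sampling. What you need is the Feige--Mirrokni--Vondr\'ak lemma. Suppose each $a\in S^\star$ lies in the random set $T\subseteq S^\star$ with probability at least $p\le 1$. Write $S^\star=\{a_1,\dots,a_k\}$ and $A_\ell=\{a_1,\dots,a_\ell\}$. Submodularity gives $u_p(T)\ge u_p(\emptyset)+\sum_{\ell}\mathbb{I}[a_\ell\in T]\,\bigl(u_p(A_\ell)-u_p(A_{\ell-1})\bigr)$. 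Monotonicity and taking expectations then give $\mathbb{E}[u_p(T)]\ge p\,u_p(S^\star)$, for arbitrary correlation. This is exactly the telescoping you describe, so the step does not ``resist.'' Steps~1--2 and both fallbacks can be dropped. The second fallback is not sound as stated anyway, since merging the remainder into a block can break the $\delta B$ budget.

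\textbf{Comparison.} Your argument is a pure existence/averaging proof resting on a standard lemma, and it avoids the diminishing-density bookkeeping. The paper's argument is deterministic and actually constructs the set. Both pay the same boundary loss of one agent's cost, at most $\frac{M}{n}B$.

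\textbf{Correction on large $\delta$.} Your remark that $\delta\ge 1$ is trivial by monotonicity only works for $\delta\le 1+\frac{M}{n}$. Beyond that the statement is false. If $\sum_i c_i\le B$, then $W_p(\delta B)=W_p(B)$ for every $\delta\ge 1$, which is less than $(\delta-\frac{M}{n})W_p(B)$ once $\delta>1+\frac{M}{n}$ and $W_p(B)>0$. The lemma should be read for $\delta\le 1$. This is how the paper uses it ($\delta=\frac{1}{4M}-\frac{M}{m}$), and its own proof implicitly assumes it.

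\textbf{Remaining case.} Handle $c(S^\star)<\delta B$ separately; then $S^\star$ itself is affordable. Your arc-probability formula presumes $\delta B\le c(S^\star)$.
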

\begin{proof}
    Consider the selection of agents achieving the maximum utility given budget $B$. Then, greedily choose agents based on the ratio of their utility to their cost, stopping when adding an agent would require a budget greater than $\delta B$. By the large market assumption, each agent has cost at most $\frac{M}{n}B$, so at least $(\delta - \frac{M}{n})B$ of the budget will be used. Furthermore, since we have chosen the agents with the highest ratio of marginal utility to cost, the utility of the selected set will be at least a $(\delta - \frac{M}{n})$ fraction of the utility achievable given budget $B$.
\end{proof}
Finally, we can combine the budget obtained from the minus agents with a budget-feasible mechanism for the plus agents to obtain the following.
\begin{lemma}\label{lm:pim-plus-bf}
    There exists a truthful mechanism achieving a utility of at least
    \[
    (\frac{1}{4M} - \frac{M}{m} - \frac{M}{n}) (1-\frac{1}{e}) W_p(B),
    \]
    where the $W_p(B)$ denotes the maximum attainable utility on the plus agent side using total budget $B$.
\end{lemma}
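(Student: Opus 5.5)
We compose the two ingredients just established. The mechanism runs first on the minus side, then on the plus side, and we multiply the two guarantees.

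\emph{Step 1 (minus side).} Run the truthful top-$k$ uniform-price auction of Lemma~\ref{lm:pim-budget} on the minus agents. Its choice of $k=\lfloor \alpha m\rfloor$ depends only on $m$ and $M$, not on the bids, so it is DSIC. It collects revenue at least $B'\coloneqq(\frac{1}{4M}-\frac{M}{m})B$.

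\emph{Step 2 (plus side).} Spend the collected revenue $\hat B\ge B'$ on the plus agents through a truthful budget-feasible mechanism. For concreteness, take a prior-free large-market budget-feasible mechanism in the spirit of~\cite{anari2014mechanism}, which attains a $(1-1/e)$ fraction of the offline optimum $W_p(\hat B)$ under the large market assumption.

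Truthfulness of the composition needs care. The budget $\hat B$ depends on the minus agents' reports but not on the plus agents' reports. The minus agents' payments depend only on minus bids, and their utility does not depend on plus-side outcomes. Hence each side faces a DSIC mechanism with the other side's influence fixed. Ex-post budget balance holds because the plus side spends at most $\hat B$.

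\emph{Step 3 (combining the bounds).} Write $\delta=\frac{1}{4M}-\frac{M}{m}$, so that $\hat B\ge \delta B$.
\begin{itemize}
\item Monotonicity of $W_p$ gives $W_p(\hat B)\ge W_p(\delta B)$.
\item Lemma~\ref{lm:pim-fract} gives $W_p(\delta B)\ge(\delta-\frac{M}{n})W_p(B)$.
\end{itemize}
Multiplying by the $(1-1/e)$ factor from Step 2 yields
\[
\Big(\frac{1}{4M}-\frac{M}{m}-\frac{M}{n}\Big)\Big(1-\frac{1}{e}\Big)W_p(B),
\]
which is exactly the claimed bound.

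The main obstacle is Step 2: justifying a $(1-1/e)$-competitive prior-free budget-feasible mechanism with respect to $W_p(\hat B)$. The large market assumption is stated relative to $B$, not $\hat B$. The fix I would try first is to rescale: each plus cost is at most $\frac{M}{n}B\le\frac{M}{n\delta}\hat B$, so the market remains large relative to $\hat B$ with a constant-factor change. Any residual loss from this rescaling can be charged to the $\frac{M}{n}$ term, using the greedy-rounding argument of Lemma~\ref{lm:pim-fract}.

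If a $(1-1/e)$ guarantee is unavailable, the fallback is to cite any constant-factor budget-feasible mechanism. That would replace $(1-1/e)$ by its factor and still give a constant overall.
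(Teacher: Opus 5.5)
Your argument is the paper's proof: collect revenue at least $(\frac{1}{4M}-\frac{M}{m})B$ via Lemma~\ref{lm:pim-budget}, spend it with the large-market budget-feasible mechanism of~\cite{anari2014mechanism} for a $(1-1/e)$ factor, and finish with Lemma~\ref{lm:pim-fract}. You are more careful than the paper on truthful composition, on using the realized budget $\hat B$ via monotonicity of $W_p$, and on flagging the large-market rescaling (the paper simply asserts the $(1-1/e)$ factor), but the one line the paper includes that you omit is that the minus-side utility $u_m$ is nonnegative, which is what lets your plus-side bound serve as a bound on the total utility.
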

\begin{proof}
    By Lemma \ref{lm:pim-budget}, we can obtain a budget of $\frac{1}{4M}-\frac{M}{m}$ from the minus agents. Then, we can use budget-feasible mechanism introduced \cite{anari2014mechanism} for large markets to achieve a utility of at least
    \[
    (1-\frac{1}{e})W_p((\frac{1}{4M}-\frac{M}{m})B).
    \]
    Finally, by Lemma \ref{lm:pim-fract}, we have
    \[
    W_p((\frac{1}{4M}-\frac{M}{m})B) \geq
    (\frac{1}{4M}-\frac{M}{m}-\frac{M}{n})W_p(B).
    \]
    Noting that the utility from the minus agents is non-negative, this completes the proof.
\end{proof}

\begin{proof}[Proof of Theorem~\ref{thm:lma}]
    Finally, to complete our proof of Theorem \ref{thm:lma}, we note that the utility achieved by the offline optimal can be written as
    \[
    W_p(B^*) + W_m(B^*)
    \]
    for some budget $B^*$, where given the budget, $W_p$ and $W_m$ denote the maximum achievable utility on the plus and minus side, respectively.
    We randomize between the mechanism obtained in Lemma \ref{lm:pim-plus-bf} and a mechanism choosing no agents, choosing the first with probability \( 1/({1 + (1-1/e)(1/4M-M/n-M/m)})\). This achieves a utility of
    \[
    \frac{(1-1/e)(1/4M-M/n-M/m)}{1+ (1-1/e)(1/4M-M/n-M/m)} \bigl(W_p(B) + W_m(0)\bigr).
    \]
    Given \(W_p(B) \geq W_p(B^*)\) and \(W_m(0) \geq W_m(B^*)\), this implies this mechanism achieves a \[
    \frac{(1-1/e)(1/4M-M/n-M/m)}{1+ (1-1/e)(1/4M-M/n-M/m)}
    \]
    approximation of the offline optimal.
\end{proof}

\end{document}